\newtheorem{claim}{Claim}[section]
\newtheorem{theorem}[claim]{Theorem}
\newtheorem{lemma}[claim]{Lemma}
\newtheorem{remark}[claim]{Remark}
\newtheorem{definition}[claim]{Definition}
\newtheorem{corollary}[claim]{Corollary}
\newtheorem{thmx}{Theorem}
\newcommand{\soutg}{\bgroup\markoverwith{\textcolor{green}{\rule[.5ex]{2pt}{1pt}}}\ULon}
\newcommand{\soutb}{\bgroup\markoverwith{\textcolor{blue}{\rule[.5ex]{2pt}{1pt}}}\ULon}
\newcommand{\soutr}{\bgroup\markoverwith{\textcolor{red}{\rule[.5ex]{2pt}{1pt}}}\ULon}
\newcommand{\bo}{{\rm O}}
\newcommand{\ds}{\displaystyle}
\newcommand{\dsum}{\ds\sum}
\newcommand{\dint}{\ds\int}
\DeclareMathOperator*{\detz}{det}
\newcommand{\eqskip}{ \vspace*{2mm}\\ }
\newcommand{\fr}[2]{\frac{\ds #1}{\ds #2}}
\newcommand{\dprod}{\ds\prod}
\DeclareMathOperator*{\re}{Re}
\newcommand{\tto}{ \mathcal{T} }
\newcommand{\ph}{P_{n}}
\title[The determinant of polyharmonic operators]{The determinant of one-dimensional polyharmonic operators of arbitrary order}
\author{Pedro Freitas} 
\author{Ji\v{r}\'{\i} Lipovsk\'{y}}
\address{Departamento de Matem\'atica, Instituto Superior T\'ecnico, Universidade de Lisboa, Av. Rovisco Pais 1,
P-1049-001 Lisboa, Portugal {\rm and}
Grupo de F\'isica Matem\'{a}tica, Faculdade de Ci\^encias, Universidade de Lisboa,
Campo Grande, Edif\'icio C6, P-1749-016 Lisboa, Portugal}
\email{psfreitas@fc.ul.pt}
\address{Department of Physics, Faculty of Science, University of Hradec Kr\'alov\'e, Rokitansk\'eho 62,
500\,03 Hradec Kr\'alov\'e, Czechia}
\email{jiri.lipovsky@uhk.cz}
\begin{document}

\begin{abstract} We obtain an explicit expression for the regularised spectral determinant of the polyharmonic operator
$\ph=(-1)^{n} (\partial_x)^{2n}$ on $(0,T)$ with Dirichlet boundary conditions and $n$ a positive integer, and
show that it satisfies the asymptotics $\log{(\det \ph)} = -n^2 \log{n} + \left[\fr{7\zeta(3)}{2\pi^2}+
\fr{3}{2}+\log\left(\fr{T}{4}\right)\right] n^2 + \bo(n)$ for large $n$. This is a
consequence of sharp upper and lower bounds for $\log{(\det \ph)}$ valid for all $n$ and which coincide in the terms up to order $n$.
These results form the basis to analyse more general operators with nonconstant coefficients and show that the corresponding determinants
have a similar asymptotic behaviour.
\end{abstract}

\maketitle

\section{Introduction}

Let $\tto$ be an elliptic differential operator of order $m$ with discrete spectrum denoted by
$
 \lambda_{1}\leq \lambda_{2}\leq\cdots.
$
The regularised spectral determinant of $\tto$ has its roots in the 1971 work of Ray and Singer and is one possible way of
making sense of the infinite product of the numbers $\lambda_{k}$, $k=1,2,\cdots$~\cite{rasi} -- see also~\cite{geya,MP}. In
this approach, one begins by defining the spectral zeta function associated with the sequence of eigenvalues of the operator
$\tto$ by
\[
 \zeta_{\tto}(s) = \dsum_{k=1}^{\infty} \fr{1}{\lambda_{k}^{s}}
\]
on some half-plane $\re(s)>s_{0}$. If $\zeta_{\tto}$ has a meromorphic extension to the whole of the complex plane which
is analytic at zero, it is then possible, by analogy with an identity that holds true in case of a finite number of eigenvalues,
to define the determinant of the operator $\tto$ by
\[
\detz(\tto) := \exp\left(-\zeta'(0)\right).
\]
Applying the above definition to the operator $-u_{D}''$ on the interval $[0,\pi]$ with Dirichlet boundary conditions, for instance,
produces $\detz(-u_{D}'')=2\pi$. However, and while allowing us to make sense of the determinant in an infinite-dimensional
setting, the above expression will not be simple to evaluate explicitly in general since,
if nothing else, eigenvalues of operators are not known explicitly. Examples where explicit expressions have been found
for the determinant are the Dirichlet Laplacian on balls~\cite{bgke}, triangles and some other polygons~\cite{aursal},
and Sturm-Liouville operators~\cite{Levit-Smilansky,Gesztesy-Kirsten}.

Some of the above results are surprising in the sense that, as is the case for triangles, there is an explicit formula for
the determinant although this is not the case for individual eigenvalues in general. On the other hand, even when they are
known explicitly the complexity of the expression for the determinant may become quite unmanageable, as may be seen from the
expressions given up to dimension six in~\cite{bgke}, or those computed in~\cite{frei} for the isotropic harmonic oscillator.
In this last example, and although a recurrence formula was given which permitted the calculation of the determinant for any
dimension, starting from the (simple) expressions for dimensions one and two, but the formulas obtained soon become quite
complicated. Because of this growing complexity, the approach used in~\cite{frei} was to study the asymptotic behaviour of
the determinant as the dimension became large. As shown it that paper, the first two terms in this asymptotic expansion
already provide quite an accurate approximation even for low dimension values, yielding an alternative way to tackle this
type of problem.

The purpose of this paper is to illustrate how a similar approach may be applied to a case where now, instead of the dimension,
we vary the order of the operator. More precisely, we first consider the polyharmonic operator $\ph := (-1)^{n} (\partial_x)^{2n}$ on a
bounded interval $(0,T)$, for some positive number $T$, together with Dirichlet boundary conditions, that is, $u$ and its
derivatives up to order $(n-1)$ vanish at both endpoints of the interval. In this instance, eigenvalues are roots of transcendental
equations which get more complex as $n$ increases, and which cannot be determined explicitly.
In spite of this, it is still possible to write down an explicit expression for the corresponding determinant, 
using the general result for determinants of matrix operators studied in~\cite{BFK}. This is, however, still quite
complicated -- see Theorem~\ref{thm:h}. Our approach is to use this as a starting point, and first determine a
much simplified closed form for the determinant as a function of $n$. This type of problem is now closer
to that of evaluating the determinants of families of matrices with a special form, such as (truncated) Toeplitz matrices,
and study the behaviour of the corresponding determinants as the dimension of the matrix grows to infinity. In fact, one of the steps
in obtaining a general form for the determinant of the operator $\ph$ is to show that it may be written as the product of two
determinants, one of which may be reduced to the calculation of the determinant of a Vandermonde matrix.
We then obtain upper and lower bounds for this resulting expression, which are precise enough to allow us to derive its asymptotic
behaviour as the order becomes large. Our first main result is then the following

\begin{thmx}\label{thm:h2}
The determinant of the polyharmonic operator $\ph=(-1)^{n} (\partial_x)^{2n}$ defined on the interval $(0,T)$ with Dirichlet
boundary conditions is given by
\[
  \det \ph = \left(\fr{T}{2}\right)^{n^2} \fr{(4n)^n}{\prod_{k=1}^{n-1}\left[\sin^{2(n-k)}{\left(\frac{k\pi}{2n}\right)}\right]}
  \prod_{k=0}^{n-1} \fr{k!}{(n+k)!}\,.
\]

Furthermore, for large values of $n$ it satisfies
\begin{equation}\label{logdet}
  \log{(\det \ph)} = -n^2 \log{n} + \left[\frac{7\zeta(3)}{2\pi^2}+\frac{3}{2}+\log{\frac{T}{4}}\right] n^2 + \bo(n)\,.
\end{equation}
\end{thmx}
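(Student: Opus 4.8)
The plan is to prove the two assertions in turn. For the closed-form product, the point of departure is Theorem~\ref{thm:h}, which, via the Burghelea--Friedlander--Kappeler formula~\cite{BFK}, presents $\det\ph$ as an explicit constant depending only on $n$ and $T$ times the determinant of a structured $2n\times2n$ matrix $\mathcal M_n$ built from the $2n$ characteristic roots $\omega_j$ (the solutions of $\omega^{2n}=(-1)^n$) together with a polynomial/Taylor part coming from the two endpoints. I would simplify $\det\mathcal M_n$ by factoring it --- as hinted in the introduction --- into a product of two determinants: one a Vandermonde-type determinant in $n$ of the roots $\omega_j$ (equally spaced on the unit circle with angular step $\pi/n$) which accounts, via $|\omega_j-\omega_i|=2\bigl|\sin\tfrac{\theta_j-\theta_i}{2}\bigr|$ and the count that $n-m$ pairs realise each gap of ``size $m$'', for the factor $\prod_{k=1}^{n-1}\sin^{2(n-k)}\!\left(\frac{k\pi}{2n}\right)$ in the denominator; and the other, which together with the BFK constant supplies the prefactor $\left(\frac{T}{2}\right)^{n^2}(4n)^n$ and the factorials $\prod_{k=0}^{n-1}\frac{k!}{(n+k)!}$ --- the latter tracing back to the derivative orders $0,\dots,n-1$ in the Dirichlet conditions and to the Taylor orders $n,\dots,2n-1$ in the expansion of $\mathrm{e}^{\omega_j z T}$ about $z=0$. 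Assembling the pieces yields the stated formula; getting right the exact powers of $2$, $n$ and $T$ in the prefactor is the fiddly part, and can be pinned down by matching $\det(-u_D'')=2T$ at $n=1$ and a direct computation at $n=2$.

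For the asymptotics, take logarithms of the closed form,
\[
\log(\det\ph)=n^2\log\frac{T}{2}+n\log(4n)-2\sum_{k=1}^{n-1}(n-k)\log\sin\frac{k\pi}{2n}+\sum_{k=0}^{n-1}\bigl(\log k!-\log(n+k)!\bigr),
\]
and estimate the two sums. The factorial sum equals $2\log G(n+1)-\log G(2n+1)$ with $G$ the Barnes $G$-function, and the standard expansion $\log G(z+1)=\tfrac{z^2}{2}\log z-\tfrac{3}{4} z^2+\tfrac{z}{2}\log(2\pi)-\tfrac{1}{12}\log z+\zeta'(-1)+\bo(1/z)$ gives $-n^2\log n+\tfrac{3}{2} n^2-n^2\log4+\bo(\log n)$ for it, the $n\log(2\pi)$ terms cancelling. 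For the sine sum, split it as $n\sum_{k=1}^{n-1}\log\sin\frac{k\pi}{2n}-\sum_{k=1}^{n-1}k\log\sin\frac{k\pi}{2n}$: the first sum equals $n\log\frac{\sqrt n}{2^{n-1}}=\tfrac{n}{2}\log n-n^2\log2+n\log2$ exactly, using $\prod_{k=1}^{m-1}\sin\frac{k\pi}{m}=\frac{m}{2^{m-1}}$ and the symmetry $k\leftrightarrow2n-k$; the second is a first-moment Riemann sum, for which Euler--Maclaurin yields $\sum_{k=1}^{n-1}k\log\sin\frac{k\pi}{2n}=n^2\!\int_0^1 t\log\sin\frac{\pi t}{2}\,dt+\bo(\log n)$, with $\int_0^1 t\log\sin\frac{\pi t}{2}\,dt=\frac{4}{\pi^2}\int_0^{\pi/2}u\log\sin u\,du=\frac{7\zeta(3)}{4\pi^2}-\frac{\log2}{2}$ by the classical value $\int_0^{\pi/2}u\log\sin u\,du=\frac{7}{16}\zeta(3)-\frac{\pi^2}{8}\log2$. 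Adding the four contributions, the $n\log n$ from $n\log(4n)$ cancels the $-n\log n$ produced by the exact sine product, the several $\log2$-contributions combine into $-n^2\log4$, and the surviving $n^2$-coefficient is $\log T-\log4+\tfrac{3}{2}+\tfrac{7\zeta(3)}{2\pi^2}=\log\tfrac{T}{4}+\tfrac{3}{2}+\tfrac{7\zeta(3)}{2\pi^2}$; all remaining terms are $\bo(\log n)$, hence $\bo(n)$, which is~\eqref{logdet}.

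The step I expect to be the main obstacle is the error control in the sine sum, and with it the exact cancellation at order $n\log n$. The summand $x\mapsto x\log\sin\frac{\pi x}{2n}$ is smooth on $[1,n-1]$, but the comparison integrand behaves like $t\log t$ as $t\to0^{+}$, so one must check carefully that the difference between $\sum_{k=1}^{n-1}k\log\sin\frac{k\pi}{2n}$ and $n^2\int_0^1 t\log\sin\frac{\pi t}{2}\,dt$ is only $\bo(\log n)$; in particular the Euler--Maclaurin boundary corrections near $k=1$, which are individually of size $\log n$, must be shown not to leak into the $n^2$- or $n\log n$-orders. This is also precisely where the \emph{exact} value $\frac{\sqrt n}{2^{n-1}}$ of the sine product is needed: it is this constant that supplies the $-n\log n$ annihilating the $n\log n$ carried by the $(4n)^n$ factor, so that a merely leading-order evaluation of the product would leave a spurious $n\log n$ term and ruin the claimed $\bo(n)$ remainder.
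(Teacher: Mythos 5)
In the closed--form half you have correctly isolated the three structural ingredients coming from Theorem~\ref{thm:h}: the BFK constant $(2n)^n|h_\alpha|^{-2}$, the Vandermonde evaluation of $|h_\alpha|$ (your pair count --- $n-m$ pairs of roots realising the gap $m$, each contributing $2\sin\frac{m\pi}{2n}$ --- reproduces the paper's Lemma~\ref{lem:halpha} exactly), and a remaining determinant carrying $T^{n^2}$ and the factorials. The gap is that this last determinant, $\det\bigl(T^{n+j-i}/(n+j-i)!\bigr)_{i,j=1}^n=T^{n^2}\prod_{k=0}^{n-1}k!/(n+k)!$, is the one genuinely nontrivial computation in this half, and you never prove it: you only say the factorials ``trace back to'' the derivative and Taylor orders and propose to pin down the prefactor by checking $n=1$ and $n=2$. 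Two data points cannot determine exponents that are themselves functions of $n$, so this is a consistency check, not a proof. The paper does it in Lemma~\ref{lem:dett} by scaling the rows by $1/(T^{n-i+1}(i-1)!)$ and the columns by $(n+j-1)!/T^{j-1}$, reducing to the binomial determinant $\det\binom{n+j-1}{i-1}$, which is shown to equal $1$ by repeated column subtraction via Pascal's rule. (A minor conceptual slip as well: the matrix $BY(T)-C$ is built from the polynomial solutions $y_\ell(x)=x^\ell/\ell!$ of $y^{(2n)}=0$, not from expansions of $\mathrm{e}^{\omega_j zT}$; the roots $\omega_j$ enter only through $h_\alpha$ in the normalising constant $K_\theta$.)

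Your asymptotic argument, by contrast, is correct and takes a genuinely different route from the paper's. The paper derives explicit two--sided bounds on $-\log F(n)$ valid for every $n$ (convexity of the integrand for the lower bound, Euler--Maclaurin with the Bernoulli polynomial $P_1$ for the upper), evaluates the resulting integrals term by term via the product expansion of the sine and Euler's series for $\zeta(3)$, and sandwiches the Barnes-$G$ ratio with Nemes' error bounds; the payoff is sharp inequalities agreeing through order $n$, which is what the paper advertises. You instead split the weighted sine sum as $nS_1-S_2$, evaluate $S_1$ \emph{exactly} from $\prod_{k=1}^{n-1}\sin\frac{k\pi}{2n}=\sqrt{n}/2^{n-1}$ (correct, and this is indeed precisely what cancels the $n\log n$ carried by $(4n)^n$, as you observe), and reduce $S_2$ to the classical integral $\int_0^{\pi/2}u\log\sin u\,\mathrm{d}u=\frac{7}{16}\zeta(3)-\frac{\pi^2}{8}\log 2$; the Barnes-$G$ step is essentially the paper's Lemma~\ref{lem:asymfac}. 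All your constants check out and recombine to $\frac{7\zeta(3)}{2\pi^2}+\frac32+\log\frac{T}{4}$. On the obstacle you flag: you do not need the $\bo(\log n)$ error you claim for $S_2$. Writing $h(x)=x\log\sin\frac{\pi x}{2n}$, the Euler--Maclaurin remainder is bounded by $\frac12\int_0^n|h'|$, and since $\int_0^n\bigl|\log\sin\frac{\pi x}{2n}\bigr|\,\mathrm{d}x=n\log 2$ while $x\frac{\pi}{2n}\cot\frac{\pi x}{2n}\le 1$, this is $\bo(n)$ --- already enough for~\eqref{logdet}. Your route is shorter and needs only one classical integral in place of Euler's $\zeta(3)$ series, at the price of not yielding the paper's explicit all-$n$ bounds.
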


As stated above, the asymptotic behaviour desribed in the theorem is a consequence of sharp bounds which are given
in Section~\ref{sec:ulbounds}. We remark that this is quite different from what happens with the determinant of the same polyharmonic
operator when considered with Navier boundary conditions, that is, when the function and its derivatives of even order up to
$2n-2$ vanish at both endpoints of the interval. The resulting operator, say $N_n$, is now the $n^{\rm th}$ power of the
Dirichlet Laplacian, and so its eigenvalues may be computed explicitly and are given by
\[
 \gamma_{k} = \left( \fr{k \pi}{T}\right)^{2n}.
\]
The associated zeta function becomes
\[
 \zeta_{N_n}(s) = \dsum_{k=1}^{\infty} \left( \fr{T}{k\pi}\right)^{2ns},
\]
yielding $\zeta_{N_n}'(0) = -n \log(2T)$ and
\[
 \det(N_{n}) = 2^n T^n.
\]
We thus see that in this case the behaviour of the determinant with the order of the operator actually depends in a critical
way on the length of the interval, with the separation in behaviour taking place at length $T=1/2$. Although this is not comparable to the
case of Dirichlet boundary conditions, due to the dominant term there being of the form $\mathrm{e}^{-n^2\log(n)}$, we point out that the
coefficient of $n^{2}$ in~\eqref{logdet} becomes negative for $T<4\,\mathrm{e}^{-(7\zeta(3)/(2\pi^2)+3/2)}\approx 0.582758$.

It is not unrealistic to expect that changing the operator $P_{n}$ by adding lower order terms should not change the asymptotic behaviour of
the determinant as its order goes to infinity. Indeed, provided that the perturbing terms are of order $m$ with $m=m(n) \leq n$, it is
possible to show that the difference between the original determinant and that of this more general operator is of lower order in $n$. More
precisely, our second main result is the following:

\begin{thmx}\label{thm:perturb}
 Let
 \[H_{n} = (-1)^{n} (\partial_x)^{2n}+\sum_{j=0}^m q_j(x)(\partial_x)^{j}\]
 be the polyharmonic operator defined on the interval $(0,T)$ together with Dirichlet boundary conditions,
and where $q_{j}\in C^\infty([0,T])$ are complex functions defined on the interval $[0,T]$, and $m$ is either fixed or dependent on $n$ such that $m(n)\leq n$.
Then, the determinant of the operator $H_{n}$ satisfies
\begin{equation*}
  \log{(\det H_n)} = \log{(\det \ph)}+ {\rm O}\left(1/n\right)\,.
\end{equation*}
as $n$ goes to $\infty$. In particular, it has the same asymptotic behaviour as that of $\log(\det P_{n})$ given in~\eqref{logdet}.
\end{thmx}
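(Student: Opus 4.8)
The plan is to express both $\det H_n$ and $\det P_n$ through the same Burghelea–Friedlander–Kappeler (BFK) type formula that underlies Theorem~\ref{thm:h}, and then estimate the ratio. Concretely, the regularised determinant of a differential operator on $(0,T)$ with Dirichlet data can be written, up to an explicit normalisation depending only on the leading symbol, as (a constant times) the determinant of the $2n\times 2n$ matrix $M_n(\lambda)$ built from a fundamental system of solutions of $H_n u = \lambda u$ evaluated at the two endpoints together with their derivatives up to order $n-1$; evaluating at $\lambda=0$ gives $\det H_n$. The leading symbol of $H_n$ coincides with that of $P_n$, so the normalising prefactor is \emph{identical} for the two operators, and the whole problem reduces to comparing the two boundary matrices $M_n^{H}$ and $M_n^{P}$, or rather their determinants. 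The first step, therefore, is to set up this representation carefully and reduce the theorem to the statement $\log\bigl|\det M_n^{H}\bigr| - \log\bigl|\det M_n^{P}\bigr| = \bo(1/n)$.

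The second step is to control the fundamental solutions of $H_n v = 0$ as a perturbation of those of $P_n u=0$, which are simply the powers $1,x,\dots,x^{2n-1}$ (equivalently the exponentials $e^{\omega_j x}$ with $\omega_j$ the $2n$-th roots of a suitable scaling). Writing $H_n = P_n + R$ with $R=\sum_{j\le m} q_j \partial_x^j$ of order $m\le n$, one treats $R$ by a Duhamel/variation-of-parameters argument: the solution operator of $P_n v = f$ with zero Cauchy data at $x=0$ has a kernel that is a polynomial of degree $2n-1$ with coefficients of size $\bo(T^{2n-1}/(2n-1)!)$, and each application of $R$ costs $2n-m\ge n$ derivatives on a polynomial of bounded degree plus smooth coefficients $q_j$. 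The key quantitative point is that the Neumann series for the perturbed fundamental system converges with the $k$-th term of size $\bo\bigl((CT/n)^{k}\bigr)$ or better — the gain being exactly the $n$ lost derivatives — so that each entry of $M_n^{H}$ differs from the corresponding entry of $M_n^{P}$ by a relative error $\bo(1/n)$, uniformly in the $2n^2$ relevant entries. (Here I allow $m=m(n)$ growing with $n$, as the hypothesis permits, as long as $m\le n$; the margin $2n-m\ge n$ is what keeps the estimates uniform.)

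The third step converts the entrywise estimate into an estimate on the determinant. One writes $M_n^{H} = M_n^{P}\bigl(I + E_n\bigr)$ where $E_n = (M_n^{P})^{-1}(M_n^{H}-M_n^{P})$, and must show $\log\det(I+E_n) = \operatorname{tr}\log(I+E_n) = \bo(1/n)$. For this it is not enough that each entry of $M_n^{H}-M_n^{P}$ be small relative to the \emph{average} entry of $M_n^{P}$; one needs smallness relative to the structure of $(M_n^{P})^{-1}$. The cleanest route is to rescale rows and columns of $M_n^{P}$ (multiplying row/column indexed by a $j$-th derivative at an endpoint by $T^{j}$, say, and factoring powers of $2n$) so that in the rescaled variables $M_n^{P}$ becomes a fixed Vandermonde-type matrix whose inverse has entries of size $\bo(1)$ in $n$ after the standard Cauchy/Vandermonde inverse formula — this is precisely the Vandermonde reduction already invoked in the proof of Theorem~\ref{thm:h2}. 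In these same rescaled variables the Duhamel bound from step two shows that the perturbation has $\ell^\infty$-norm $\bo(1/n)$, hence $\|E_n\|=\bo(1/n)$ in any fixed matrix norm (the dimension $2n$ is absorbed, since the Vandermonde inverse already encodes it), and therefore $\log\det(I+E_n)=\bo(1/n)$, completing the proof.

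The main obstacle is exactly this third step: getting an \emph{operator-norm} (not just entrywise) bound $\|(M_n^{P})^{-1}(M_n^{H}-M_n^{P})\| = \bo(1/n)$. A naïve bound would lose a factor that is polynomial, possibly even exponential, in $n$ coming from the size of $2n\times 2n$ determinants and from the condition number of the Vandermonde-type matrix $M_n^{P}$, which is known to be badly behaved. The resolution is to carry out the row/column rescaling \emph{before} estimating, so that the two competing large factors — the $1/(2n-1)!$-type smallness of the Duhamel kernel and the growth of the Vandermonde inverse — are matched term by term and cancel, leaving the genuine $\bo(1/n)$ gain coming from the $n$ surplus derivatives in $2n-m$. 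This cancellation is the technical heart of the argument and mirrors the bookkeeping already developed for the unperturbed determinant in Section~\ref{sec:ulbounds}.
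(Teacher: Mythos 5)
Your first two steps coincide with the paper's: the BFK normalisation depends only on the leading symbol, is identical for $H_n$ and $P_n$, and cancels, so everything reduces to comparing $\det(BY(T)-C)$ for the two operators; and the fundamental system of $H_ny=0$ is controlled by a Picard/Neumann iteration for the Volterra integral equation, the gain coming from the $2n-m\geq n$ surplus orders of smoothing (this is exactly Lemmas~\ref{lem:ylm}--\ref{lem:pot:deriv}). The gap is in your third step, and it is twofold. First, the matrix whose inverse you propose to use is not the Vandermonde matrix: the Vandermonde structure in the paper lives in the prefactor $h_\alpha$, built from roots of unity, which is the same for $H_n$ and $P_n$ and drops out of the comparison. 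The matrix that actually changes is the Wronskian block with entries $T^{n+j-i}/(n+j-i)!$, which rescales to the Pascal-type matrix $Q_{ij}=1/(n+j-i)!$, equivalently the binomial matrix of Lemma~\ref{lem:dett}. That matrix has determinant comparable to $\prod_{j=0}^{n-1}j!/(n+j)!$ while its diagonal entries are all $1/n!$, and its triangular factors (and hence its inverse) have entries of size $\binom{n}{k}$ and $\binom{n+k-1}{k}$; so its condition number grows exponentially in $n$, and the claimed bound $\|(M_n^P)^{-1}(M_n^H-M_n^P)\|={\rm O}(1/n)$ does not follow from an entrywise estimate. Your proposed ``resolution'' of the main obstacle therefore rests on a false premise.

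Second, a \emph{uniform} entrywise relative error ${\rm O}(1/n)$ is in any case too weak to conclude, because the determinant of $Q$ involves massive cancellation (it is smaller than the product of diagonal entries by the superexponential factor $\prod_{j}\binom{n+j}{j}^{-1}$), so generic relative perturbations of size $1/n$ spread over $n^2$ entries can shift $\log\det$ by far more than ${\rm O}(1/n)$. What the paper actually exploits is the entry-dependent bound from Lemma~\ref{lem:pot:deriv}: the relative error in the $(i,j)$ entry is ${\rm O}\bigl((n+j-i)!/n^{n+j-i}\bigr)$, which decays rapidly in $n+j-i$; since $r\mapsto r!/n^{r}$ is decreasing, the worst error in row $i$ is the one at $j=1$, and collecting these row by row into a diagonal factor gives a correction $\prod_{j=1}^{n}\bigl(1+{\rm O}(j!/n^{j})\bigr)=\exp\bigl({\rm O}(1/n)\bigr)$, whence $\det M=T^{n^2}\det Q\,(1+{\rm O}(1/n))$ as in Theorem~\ref{thm:pot:main}. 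To make your argument work you would need to reproduce this finer bookkeeping (or some substitute for it); the $\ell^\infty$ bound plus the conditioning claim for the rescaled matrix will not deliver ${\rm O}(1/n)$.
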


The conditions in the above result may be relaxed somewhat. For instance, the coefficient functions $q_{j}$ may be allowed to depend on $n$, provided
they remain uniformly bounded.
Although our methods do not allow us to prove Theorem~\ref{thm:perturb} for $m(n)>n$, we believe that a similar result will hold also for higher values of $m$,
provided that $q_{2n}(x) = q_{2n-1}(x) = 0$. While the restriction in the $q_{2n}$ term should be expected, the reason for the restriction
on $q_{2n-1}$ is more subtle and seems to be mainly due to a question of the phase of the determinant. Concerning this, it is interesting
to note that having complex coefficients for at least the terms up to order $n$, and possibly a set of corresponding non-real eigenvalues, does
not affect the first terms in the asymptotics.

\section{A first expression for the determinant}
For the reader's convenience we recall a theorem from \cite{BFK} which gives the determinant for a more general type of operator. 

\begin{definition}\label{def1}
Let us assume an operator acting as $\tto = \dsum_{k=0}^{2n} a_k(x)(-i)^k \frac{\mathrm{d}^k}{\mathrm{d}x^k}$, where $a_k(x)$ are
complex-valued $r\times r$ matrices, which depend smoothly on $x\in[0,T]$. Let us assume that $a_{2n}(x)$ is not singular and
that there exists a principle angle $\theta$ such that $\mathrm{spec}(a_{2n}(x))\cup \{\rho \mathrm{e}^{i\theta}\in\mathbb{C}:0
\leq \rho <\infty\} = \emptyset$ ($\mathrm{spec}$ denotes the spectrum of the matrix). Let us assume boundary conditions at $x=0$ and $x=T$
$$
  \dsum_{k=0}^{\alpha_j} b_{jk} u^{(k)}(T) = 0\,,\quad \dsum_{k=0}^{\beta_j} c_{jk} u^{(k)}(0) = 0\,,\quad \mathrm{for\,}1\leq j\leq n\,.
$$
Here $b_{jk}$ and $c_{jk}$ are $r\times r$ matrices with $b_{j,\alpha_j} = c_{j,\beta_j} = I$ ($I$ denotes the identity matrix) and
$\alpha_j$ and $\beta_j$ satisfy
\begin{eqnarray*}
  0\leq \alpha_1 <\alpha_2 <\dots < \alpha_n\leq 2n-1\,,\\
  0\leq \beta_1 <\beta_2 <\dots < \beta_n\leq 2n-1\,.
\end{eqnarray*}
For the Dirichlet boundary conditions we have
$\alpha_{\mathrm{D}j} := \beta_{\mathrm{D}j} = j-1$ and 
$$
  b_{\mathrm{D},jk} = c_{\mathrm{D},jk} := \left\{\begin{matrix}I & \mathrm{for\ }1\leq j \leq n\,,\ k=j-1\,, \\ 0 & \mathrm{otherwise}\end{matrix}\right.\,.
$$

Furthermore, we define $|\alpha| = \dsum_{j=1}^n \alpha_j$ and $|\beta| = \dsum_{j=1}^n \beta_j$ and the $2n\times 2n$ matrices $B$ and $C$, whose entries with indices $1\leq j\leq 2n$, $0\leq k \leq 2n-1$ 
\begin{eqnarray*}
   B_{jk}&:=& \left\{\begin{matrix}b_{jk} & \mathrm{for\ }1\leq j \leq n \mathrm{\ and \ }0\leq k \leq \alpha_j \\ 0 & \mathrm{otherwise}\end{matrix}\right.\,,\\
   C_{jk}&:=& \left\{\begin{matrix}c_{j-n,k} & \mathrm{for\ }n+1\leq j \leq 2n \mathrm{\ and \ }0\leq k \leq \beta_{j-n} \\ 0 & \mathrm{otherwise}\end{matrix}\right.\,.
\end{eqnarray*}
are $r\times r$ matrices. We define
$$
  g_\alpha := \frac{1}{2}\left(\frac{|\alpha|}{n}-n+\frac{1}{2}\right)\,,\quad h_\alpha = \mathrm{det\,}\begin{pmatrix}w_1^{\alpha_1}& \dots & w_n^{\alpha_1}\\ \vdots & \ddots & \vdots \\ w_1^{\alpha_n}& \dots & w_n^{\alpha_n}\end{pmatrix}
$$
with $w_k = \mathrm{exp}(\frac{2k-n-1}{2n}\pi i)$. Similarly, one defines $g_\beta$ and $h_\beta$. If $\lambda_j$, $1\leq j\leq r$ are eigenvalues of a matrix $a$, we denote by 
$$
  (\mathrm{det\,}a)_\theta^{g_\alpha} := \prod_{j=1}^r |\lambda_j|^{g_\alpha} \mathrm{e}^{i g_\alpha \mathrm{arg\,}\lambda_j}\,,
$$
where $\theta-2\pi < \mathrm{arg\,}\lambda_j < \theta$ and $\theta$ is the principal angle of the matrix $a$. 

Finally, we define a $2n\times 2n$ matrix $Y(x) = (y_{k\ell}(x))$, with its entries being $r\times r$ matrices $y_{k\ell}(x):=y_\ell^{(k)}(x)$,
$0\leq k,\ell\leq 2n-1$. Here $y_\ell(x)$ are solutions to the Cauchy problem $\tto y_\ell(x) = 0$ with $y_\ell^{(k)}(0) = \delta_{k\ell}I$. 
\end{definition}

\begin{theorem}\label{thm:bfk}\emph{(Burghelea, Friedlander, Kappeler \cite{BFK})} 
The determinant for the operator $\tto$ is equal to
$$
  \det_\theta \tto = K_\theta \mathrm{exp}\left(\frac{i}{2}\int_0^T \mathrm{Tr\,}(a_{2n}^{-1}(x)a_{2n-1}(x))\,\mathrm{d}x\right)\,\mathrm{det\,}(BY(T)-C)\,,
$$
where
$$
  K_\theta = [(-1)^{|\beta|}(2n)^n h_\alpha^{-1} h_\beta^{-1}]^r (\mathrm{det\,}a_{2n}(0))_\theta^{g_\beta} (\mathrm{det\,}a_{2n}(T))_\theta^{g_\alpha}\,.
$$
\end{theorem}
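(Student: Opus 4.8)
Although this statement is quoted from \cite{BFK}, the natural route to it — and the way I would reconstruct a proof — is the standard contour-integral analysis of the zeta-regularised determinant of a one-dimensional boundary value problem. Write $Y_\lambda(x)$ for the $2nr\times 2nr$ fundamental solution matrix of $(\tto-\lambda)y=0$ with $Y_\lambda(0)=I$, and set $F(\lambda):=\mathrm{det}\,(BY_\lambda(T)-C)$. Then $F$ is entire and its zeros, with multiplicities, are exactly the eigenvalues of $\tto$ subject to the boundary conditions of Definition~\ref{def1} (for $\tto=\ph$ this is the secular equation whose roots are the $\lambda_k$). The first step is to express
\[
 \zeta_\tto(s)=\frac{1}{2\pi i}\int_\Gamma \lambda^{-s}\,\frac{d}{d\lambda}\log F(\lambda)\,\mathrm{d}\lambda ,
\]
with $\Gamma$ a contour separating the ray $\{\rho\mathrm{e}^{i\theta}\}$ and the origin from $\mathrm{spec}(\tto)$, and then to collapse $\Gamma$ onto that ray. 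This requires a resolvent estimate guaranteeing that $F$ has no zeros in a sector around $\arg\lambda=\theta$ and grows there at a controlled rate; ellipticity together with the principal-angle hypothesis on $a_{2n}$ supplies exactly this.

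The second, and main, step is the WKB expansion of $F(\lambda)$ as $|\lambda|\to\infty$ along $\arg\lambda=\theta$. I would first perform a Liouville-type conjugation $y\mapsto\Phi(x)y$ normalising $a_{2n}(x)$ to a constant; this is precisely the manoeuvre that produces the scalar prefactor $\exp(\tfrac{i}{2}\int_0^T\mathrm{Tr}(a_{2n}^{-1}a_{2n-1})\,\mathrm{d}x)$, since removing the sub-principal term rescales every solution. After that a basis of approximate solutions is $y\sim\mathrm{e}^{\mu_j(\lambda)x}v_j$, where $\mu_j(\lambda)=\lambda^{1/2n}w_j$ runs over the $2n$ roots of the unit equation $a_{2n}\mu^{2n}=(-1)^n\lambda$ and the $w_j$ are, up to ordering, exactly the roots $\exp(\tfrac{2k-n-1}{2n}\pi i)$ of Definition~\ref{def1}. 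Substituting these into $BY_\lambda(T)-C$ and retaining in each column only the dominant exponential, the matrix becomes — up to exponentially small errors and a suitable column ordering — the product of (i) a diagonal matrix of exponentials $\mathrm{e}^{\mu_j T}$, which carries the ``polynomial in $\lambda^{1/2n}$'' part of $\log F$; (ii) a Vandermonde-type block with rows indexed by the boundary orders $\alpha_j$ at $T$ and $\beta_j$ at $0$, whose determinant is $h_\alpha h_\beta$ times powers of $\lambda^{1/2n}$ and of $\det a_{2n}$ coming from the relations $u^{(\alpha_j)}\sim\mu^{\alpha_j}u$; and (iii) a residual combinatorial factor accounting for the sign $(-1)^{|\beta|}$ and the power $(2n)^n$. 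Collecting terms gives $\log F(\lambda)=Q(\lambda^{1/2n})+\log K_\theta^{-1}+\so(1)$ with $Q$ a polynomial having no constant term.

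The third step feeds this back into the contour integral. The part $Q(\lambda^{1/2n})$ governs the poles of $\zeta_\tto$ and the value $\zeta_\tto(0)$, and its contribution to $\zeta_\tto'(0)$ is an elementary term-by-term computation: each monomial $c_m\lambda^{m/2n}$ with $m\ge 1$ contributes a known amount, and the exponents $g_\alpha,g_\beta$ of the statement emerge here as exactly the relevant heat-type coefficients, which is why $g_\alpha=\tfrac12(\tfrac{|\alpha|}{n}-n+\tfrac12)$ depends only on $|\alpha|$. The constant term $\log K_\theta^{-1}$ becomes $-\log K_\theta$ in $-\zeta_\tto'(0)=\log\det_\theta\tto$, while the $\so(1)$ remainder — which is $\log\det(BY_\lambda(T)-C)$ minus its large-$\lambda$ asymptotics — contributes $\log\det(BY(T)-C)$ evaluated at $\lambda=0$. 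Exponentiating reproduces the stated formula.

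I expect essentially all of the difficulty to sit in the second step: pinning down the \emph{exact} constant term of $\log F$ from the WKB data, uniformly in $\theta$, for general $r\times r$ coefficient matrices and all admissible $\alpha_j,\beta_j$. This demands a careful choice of approximate exponential basis so that $BY_\lambda(T)-C$ is block-triangular up to exponentially small corrections; consistent tracking of the branch of $\lambda^{1/2n}$ and of the ranges $\theta-2\pi<\arg\lambda_j<\theta$, so that $(\det a_{2n}(0))_\theta^{g_\beta}$ and $(\det a_{2n}(T))_\theta^{g_\alpha}$ come out with the correct phase; and verifying that the Vandermonde determinants one obtains are literally $h_\alpha,h_\beta$ rather than a permuted or rescaled variant, which is where the ordering convention for the $w_k$ and the normalisations $b_{j,\alpha_j}=c_{j,\beta_j}=I$ are used. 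By comparison, the contour deformation of the first step and the zeta computation of the third are routine.
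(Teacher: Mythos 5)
You should first note that the paper does not prove this statement at all: Theorem~\ref{thm:bfk} is imported verbatim from \cite{BFK} and used as a black box, so there is no internal proof to compare your attempt against. Judged on its own terms, your outline does follow the strategy of \cite{BFK}: represent $\zeta_\tto(s)$ by a contour integral of $\lambda^{-s}\frac{d}{d\lambda}\log F(\lambda)$ with $F(\lambda)=\mathrm{det}\,(BY_\lambda(T)-C)$, deduce that $\det_\theta(\tto-\lambda)$ and $F(\lambda)$ differ by a $\lambda$-independent factor, and identify that factor from the large-$|\lambda|$ asymptotics of $F$ along the ray $\arg\lambda=\theta$. So the roadmap is the right one.

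As a proof, however, the proposal has a genuine gap, and you name it yourself: the entire content of the theorem is the explicit value of $K_\theta$ --- the factor $(2n)^n$, the sign $(-1)^{|\beta|}$, the Vandermonde determinants $h_\alpha,h_\beta$ built from the $n$ roots $w_k$, and above all the exponents $g_\alpha=\frac{1}{2}\left(\frac{|\alpha|}{n}-n+\frac{1}{2}\right)$ attached to $\det a_{2n}$ at the two endpoints --- and none of this is actually derived. Your second step describes what the WKB computation \emph{should} produce without carrying it out; in particular you never explain the splitting of the $2n$ characteristic roots into the $n$ that decay away from $x=0$ and the $n$ that decay away from $x=T$, which is what makes $BY_\lambda(T)-C$ asymptotically block-triangular and is the reason only $n$ of the $w_k$ enter $h_\alpha$. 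Likewise the finite-part computation in your third step, which is supposed to convert the polynomial $Q(\lambda^{1/2n})$ into the specific exponents $g_\alpha,g_\beta$, is only gestured at, and your claim that the $\so(1)$ remainder ``contributes $\log\det(BY(T)-C)$ evaluated at $\lambda=0$'' conflates the matching of asymptotics at $\lambda\to\infty$ with the final evaluation at $\lambda=0$. Until those computations are done, the formula for $K_\theta$ is asserted rather than proved; for the purposes of this paper the honest course is the one the authors take, namely to cite \cite{BFK} for the statement rather than reprove it.
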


\begin{theorem}\label{thm:h}
Let us consider the operator acting as
$\ph u = (-1)^n (\partial_x)^{2n} u$ on $(0,T)$ with the boundary conditions $u^{(s)}(0)=u^{(s)}(T) = 0$ for $s = 0, 1,\dots, n-1$.

Then the determinant for this operator is 
$$
  \det \ph = (2n)^n |h_\alpha|^{-2}\mathrm{det\,}
\begin{pmatrix}
\frac{1}{n!}T^n & \frac{1}{(n+1)!}T^{n+1} & \cdots & \frac{1}{(2n-1)!}T^{2n-1}\\
\frac{1}{(n-1)!}T^{n-1} & \frac{1}{n!}T^{n} & \cdots & \frac{1}{(2n-2)!}T^{2n-2}\\
\vdots & \vdots & \ddots & \vdots\\
\frac{1}{2!}T^2 & \frac{1}{3!}T^3 & \cdots & \frac{1}{(n+1)!}T^{n+1} \\
T & \frac{1}{2!}T^2 & \cdots & \frac{1}{n!}T^n 
\end{pmatrix}\,,
$$
where 
$$
  h_\alpha = \mathrm{det\,}
\begin{pmatrix}
  1 & 1 & \cdots & 1\\
  \mathrm{e}^{\frac{1-n}{2n}\pi i} & \mathrm{e}^{\frac{3-n}{2n}\pi i} & \cdots & \mathrm{e}^{\frac{2n-1-n}{2n}\pi i}\\
  \mathrm{e}^{\frac{1-n}{2n}2\pi i} & \mathrm{e}^{\frac{3-n}{2n}2\pi i} & \cdots & \mathrm{e}^{\frac{2n-1-n}{2n}2\pi i}\\
  \mathrm{e}^{\frac{1-n}{2n}3\pi i} & \mathrm{e}^{\frac{3-n}{2n}3\pi i} & \cdots & \mathrm{e}^{\frac{2n-1-n}{2n}3\pi i}\\
  \vdots & \vdots & \ddots & \vdots\\
  \mathrm{e}^{\frac{1-n}{2n}(n-1)\pi i} & \mathrm{e}^{\frac{3-n}{2n}(n-1)\pi i} & \cdots & \mathrm{e}^{\frac{2n-1-n}{2n}(n-1)\pi i}
\end{pmatrix}\,.
$$
\end{theorem}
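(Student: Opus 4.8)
The plan is to specialise the Burghelea--Friedlander--Kappeler formula of Theorem~\ref{thm:bfk} to $\ph$, for which every ingredient becomes elementary. First I would write $\ph$ in the form of Definition~\ref{def1}: since $(-i)^{2n}=(-1)^{n}$, the operator $\ph$ corresponds to $r=1$, $a_{2n}\equiv 1$ and $a_{k}\equiv 0$ for $k<2n$, and any principal angle $\theta\neq 0$ is admissible because $\mathrm{spec}(a_{2n})=\{1\}$; as the eigenvalues of $\ph$ are positive reals, $\det_{\theta}\ph$ does not depend on this choice and coincides with the usual regularised determinant. The exponential factor in Theorem~\ref{thm:bfk} is then $\exp(0)=1$ because $a_{2n-1}\equiv 0$; moreover $(\det a_{2n}(0))_{\theta}^{g_{\beta}}=(\det a_{2n}(T))_{\theta}^{g_{\alpha}}=1$, and for Dirichlet conditions $\alpha=\beta$ as sequences, so $h_{\alpha}=h_{\beta}$ and $|\alpha|=|\beta|=\sum_{j=1}^{n}(j-1)=n(n-1)/2$. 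Hence the constant collapses to $K_{\theta}=(-1)^{n(n-1)/2}(2n)^{n}h_{\alpha}^{-2}$, and the displayed formula for $h_{\alpha}$ in Theorem~\ref{thm:h} is just the transcription of the Vandermonde-type determinant $\det(w_{k}^{\,j-1})$ from Definition~\ref{def1}.

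Next I would compute $Y(T)$ and $BY(T)-C$. Since $\ph y=0$ means $y^{(2n)}=0$, the fundamental system normalised by $y_{\ell}^{(k)}(0)=\delta_{k\ell}$ is $y_{\ell}(x)=x^{\ell}/\ell!$, so $Y(x)_{k\ell}=x^{\ell-k}/(\ell-k)!$, understood to vanish when $\ell<k$. Reading the Dirichlet matrices off Definition~\ref{def1} gives $B_{jk}=\delta_{k,j-1}$ for $1\le j\le n$ and $B_{jk}=0$ otherwise, and $C_{jk}=\delta_{k,j-n-1}$ for $n+1\le j\le 2n$ and $C_{jk}=0$ otherwise. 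Therefore $BY(T)-C$ is a $2n\times 2n$ matrix (with columns indexed $0,\dots,2n-1$) whose top $n$ rows consist of a block $A$ on the columns $0,\dots,n-1$ followed by a block $B'$ on the columns $n,\dots,2n-1$, while its bottom $n$ rows consist of $-I_{n}$ on the columns $0,\dots,n-1$ followed by the zero block; here $A$ is upper triangular with unit diagonal, and $B'$ is precisely the $n\times n$ matrix displayed in Theorem~\ref{thm:h}. Swapping the two blocks of $n$ columns (sign $(-1)^{n^{2}}$) and using block triangularity together with $\det(-I_{n})=(-1)^{n}$ yields $\det(BY(T)-C)=(-1)^{n^{2}+n}\det B'=\det B'$, since $n^{2}+n$ is even.

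It then remains to reconcile the factor $(-1)^{n(n-1)/2}h_{\alpha}^{-2}$ with the $|h_{\alpha}|^{-2}$ appearing in the statement. For this I would use that $h_{\alpha}=\prod_{1\le p<q\le n}(w_{q}-w_{p})$ is the Vandermonde determinant in the nodes $w_{k}=\exp\bigl(\tfrac{2k-n-1}{2n}\pi i\bigr)$, which lie on the unit circle and satisfy $\overline{w_{k}}=w_{n+1-k}$. Conjugating the product and relabelling $p\mapsto n+1-q$, $q\mapsto n+1-p$ reverses each of the $\binom{n}{2}$ factors, so $\overline{h_{\alpha}}=(-1)^{n(n-1)/2}h_{\alpha}$ and hence $|h_{\alpha}|^{2}=h_{\alpha}\overline{h_{\alpha}}=(-1)^{n(n-1)/2}h_{\alpha}^{2}$, which is exactly the identity $(-1)^{n(n-1)/2}h_{\alpha}^{-2}=|h_{\alpha}|^{-2}$. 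Multiplying $K_{\theta}$, the exponential factor $1$, and $\det(BY(T)-C)=\det B'$ then gives the claimed expression.

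I expect the main difficulty to be bookkeeping rather than anything conceptual: keeping the index conventions of Definition~\ref{def1} straight --- which index labels rows and which labels columns in $B$, $C$ and $Y$, and the ranges $0\le k\le 2n-1$ against $1\le j\le 2n$ --- and tracking the several competing signs, namely the $(-1)^{|\beta|}$ in $K_{\theta}$, the block-swap sign $(-1)^{n^{2}}$, the factor $\det(-I_{n})=(-1)^{n}$ and the phase of $h_{\alpha}$, so that they collapse as stated.
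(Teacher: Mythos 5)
Your proposal is correct and follows essentially the same route as the paper: a direct specialisation of Theorem~\ref{thm:bfk} with $r=1$, $a_{2n}=1$, the triangular fundamental matrix $Y(x)$ built from $y_\ell(x)=x^\ell/\ell!$, reduction of $\det(BY(T)-C)$ to the upper-right $n\times n$ block, and the identity $\overline{h_\alpha}=(-1)^{n(n-1)/2}h_\alpha$ to turn $(-1)^{n(n-1)/2}h_\alpha^{-2}$ into $|h_\alpha|^{-2}$. You merely spell out the block-swap sign and the conjugation argument in more detail than the paper does.
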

\begin{proof}
The theorem is a direct consequence of Theorem~\ref{thm:bfk}. In our case, we have $r = 1$, $a_{2n} = 1$, $a_{k} = 0$ for $0\leq k \leq 2n-1$, $\alpha_j = \beta_j = j-1$, $|\alpha| = |\beta| = \frac{n(n-1)}{2}$. $B$ and $C$ are $2n\times 2n$ matrices.
$$
  B = \begin{pmatrix}I_n & 0\\ 0 & 0 \end{pmatrix}\,,\quad C = \begin{pmatrix} 0 & 0\\ I_n & 0 \end{pmatrix}\,,
$$ 
where $I_n$ is $n\times n$ identity matrix and $0$ is $n\times n$ matrix with all entries equal to zero. Furthermore, $g_\alpha = g_\beta = -\frac{n}{4}$, $h_\alpha = h_\beta$ is given by the expression in the statement of the theorem, $\lambda_1 = 1$, $(\mathrm{det}\,a_{2n}(x))_\theta^{g_\alpha}=(\mathrm{det}\,a_{2n}(x))_\theta^{g_\beta}=\mathrm{e}^{-\frac{in}{4}0} = 1$, $K_\theta = (-1)^{\frac{n(n-1)}{2}}(2n)^n h_\alpha^{-2}$, $\mathrm{Tr\,}(a_{2n}^{-1}(x) a_{2n-1}(x)) = 0$, $y_\ell(x) = \frac{1}{\ell!}x^{\ell}$, $y_{k\ell} = \frac{1}{(\ell-k)!}x^{(\ell-k)}$. 
$$
  Y(x)=\begin{pmatrix}1 & x & \frac{x^2}{2!} & \dots & \frac{x^{2n-1}}{(2n-1)!}\\0 & 1 & x &  \dots & \frac{x^{2n-2}}{(2n-2)!}\\ \vdots & \vdots & \vdots & \ddots & \vdots\\ 0 & 0 & 0 & \dots & 1\end{pmatrix}
$$
\begin{equation}
\begin{array}{lll}
  \mathrm{det\,}(BY(T)-C) & = & \mathrm{det\,}\begin{pmatrix}1 & T & \dots & \frac{T^{n-1}}{(n-1)!} & \frac{T^{n}}{n!} & \dots &\frac{T^{2n-1}}{(2n-1)!}\\
                                          \vdots & \vdots & \ddots & \vdots & \vdots & \ddots & \vdots\\
										  0 & 0 & \dots & 1& T & \dots & \frac{T^n}{n!}\\
										  \multicolumn{4}{c}{-I_n} & \multicolumn{3}{c}{0}\end{pmatrix}\eqskip
  & = & \mathrm{det\,}\begin{pmatrix} \frac{T^{n}}{n!} & \dots &\frac{T^{2n-1}}{(2n-1)!}\\ \vdots & \ddots & \vdots \\ T & \dots & \frac{T^n}{n!}\end{pmatrix}\,.\label{eq:byc}
  \end{array}
\end{equation}
The result follows from Theorem~\ref{thm:bfk} with the use of $\bar{h}_\alpha = (-1)^{\frac{n(n-1)}{2}}h_\alpha$.
\end{proof}

Note that the determinant of the operator $\ph$ does not depend on the principal angle $\theta$ as long as $\theta \in (0,2\pi)$, therefore,
in this paper we omit the index $\theta$ from the expression for the determinant.

It can also be found multiplying the rows of the matrix under the determinant by certain complex numbers of modulus one that 
\begin{equation}
  |h_\alpha| = \left|\mathrm{det\,}
\begin{pmatrix}
  1 & 1 &1 &  \cdots & 1\\
  1 & \mathrm{e}^{\frac{\pi i}{n}} & \mathrm{e}^{\frac{2\pi i}{n}} &  \cdots & \mathrm{e}^{\frac{(n-1)\pi i}{n}}\\
  1 & \mathrm{e}^{\frac{2\pi i}{n}} & \mathrm{e}^{\frac{4\pi i}{n}} &  \cdots & \mathrm{e}^{\frac{(n-1)2\pi i}{n}}\\
  1 & \mathrm{e}^{\frac{3\pi i}{n}} & \mathrm{e}^{\frac{6\pi i}{n}} &  \cdots & \mathrm{e}^{\frac{(n-1)3\pi i}{n}}\\
  \vdots & \vdots & \ddots & \vdots\\
  1 & \mathrm{e}^{\frac{(n-1)\pi i}{n}} & \mathrm{e}^{\frac{(n-1)2\pi i}{n}}  &\cdots & \mathrm{e}^{\frac{(n-1)^2 \pi i}{n}}
\end{pmatrix}\right|\,. \label{eq:vandermonde}
\end{equation}
The matrix under the determinant is now a Vandermonde matrix, for which there exists a simple closed form for the determinant.

\begin{lemma}\label{lem:vandermonde}
Let $V$ be a Vandermonde matrix of the form
\[
 V=
 \begin{pmatrix}1 & \omega_1 & \omega_1^2 & \cdots & \omega_1^{n-1}\eqskip
 1 & \omega_2 & \omega_2^2 & \cdots & \omega_2^{n-1} \eqskip
 1 & \omega_3 & \omega_3^2 & \cdots & \omega_3^{n-1} \eqskip
 \vdots & \vdots & \vdots & \ddots & \vdots\\1 & \omega_n & \omega_n^2 & \cdots & \omega_n^{n-1}
 \end{pmatrix}.
\]
Then its determinant is given by $\mathrm{det\,}V = \dprod_{1\leq k < j \leq n}(\omega_j -\omega_k)$.
\end{lemma}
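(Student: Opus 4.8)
The plan is to prove this classical identity by the standard polynomial divisibility argument, treating $\det V$ as a polynomial in the entries $\omega_1,\dots,\omega_n$. First I would expand $\det V$ by the Leibniz formula: every term equals $\pm\,\omega_1^{\sigma(1)-1}\omega_2^{\sigma(2)-1}\cdots\omega_n^{\sigma(n)-1}$ for a permutation $\sigma$, so $P(\omega_1,\dots,\omega_n):=\det V$ is a polynomial with integer coefficients, homogeneous of degree $0+1+\cdots+(n-1)=\binom{n}{2}$.

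The key step is then the observation that if $\omega_i=\omega_j$ for some $i\neq j$, two rows of $V$ coincide and $\det V=0$. Viewing $P$ in the polynomial ring $\mathbb{Z}[\omega_1,\dots,\omega_n]$, this says $P$ vanishes on the hyperplane $\omega_j=\omega_i$, hence the irreducible linear form $\omega_j-\omega_i$ divides $P$. Since the $\binom{n}{2}$ forms $\{\omega_j-\omega_i:1\le i<j\le n\}$ are pairwise non-associate irreducibles in this unique factorisation domain, their product divides $P$, so
\[
  P = c(n)\cdot\dprod_{1\le i<j\le n}(\omega_j-\omega_i)
\]
for some $c(n)\in\mathbb{Z}[\omega_1,\dots,\omega_n]$. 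Because the product on the right is already homogeneous of degree $\binom{n}{2}$, comparing degrees forces $c(n)$ to be a constant. To pin it down I would read off the coefficient of the monomial $\omega_1^{0}\omega_2^{1}\cdots\omega_n^{n-1}$: in the Leibniz expansion of $\det V$ it comes only from the identity permutation, with coefficient $+1$, while in $\prod_{i<j}(\omega_j-\omega_i)$ it is produced only by selecting the term $\omega_j$ from every factor, again with coefficient $+1$ (any factor contributing $-\omega_i$ lowers an exponent that can never be recovered). Hence $c(n)=1$ and the formula follows.

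There is no real obstacle here, since this is a textbook identity; the only points that require a little care are the divisibility step — one must invoke unique factorisation and the pairwise coprimality of the linear factors to pass from ``each $\omega_j-\omega_i$ divides $P$'' to ``their product divides $P$'' — and the bookkeeping of the leading monomial. If one prefers an entirely self-contained argument, the same result follows by an immediate induction on $n$: the base case $n=1$ is trivial, and for the inductive step one performs the column operations $C_k\mapsto C_k-\omega_1 C_{k-1}$ for $k=n,n-1,\dots,2$, which clears the first row to $(1,0,\dots,0)$ and extracts a factor $(\omega_j-\omega_1)$ from each remaining row $j\ge 2$, reducing $\det V$ to $\prod_{j=2}^{n}(\omega_j-\omega_1)$ times an $(n-1)\times(n-1)$ Vandermonde determinant in $\omega_2,\dots,\omega_n$.
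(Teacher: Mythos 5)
Your proof is correct. Note, however, that the paper does not actually prove this lemma at all: it simply cites standard references (Aldrovandi, eq.\ (14.22); Knuth, Exercise 37), since this is the classical Vandermonde determinant formula. Your divisibility argument is sound in all its steps --- the Leibniz expansion gives homogeneity of degree $\binom{n}{2}$, each linear form $\omega_j-\omega_i$ divides $\det V$ by the factor theorem applied in one variable over the ring generated by the others, the pairwise non-associate irreducible factors in the UFD $\mathbb{Z}[\omega_1,\dots,\omega_n]$ force the full product to divide, and the degree count plus the comparison of the coefficient of $\omega_2\omega_3^2\cdots\omega_n^{n-1}$ (which indeed arises uniquely on both sides, with coefficient $+1$) pins down the constant. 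The inductive column-operation argument you sketch as an alternative is equally valid and is perhaps the more common textbook route; either would serve as a self-contained replacement for the citation. The only point worth being slightly more explicit about, if you wrote this up in full, is the uniqueness of the monomial selection in $\prod_{i<j}(\omega_j-\omega_i)$: the cleanest justification is that $\omega_n$ occurs in exactly $n-1$ factors, so achieving exponent $n-1$ forces the choice of $\omega_n$ in all of them, after which one inducts on the remaining variables.
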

\begin{proof}
See e.g. \cite[eq. (14.22)]{Ald} or \cite[Exercise 37]{Knu}.
\end{proof}

\begin{lemma}\label{lem:halpha}
$$
  |h_\alpha| = 2^{\frac{n(n-1)}{2}}  \prod_{j=1}^{n-1}\sin^{n-j}{\left(\frac{j\pi}{2n}\right)}\,.
$$
\end{lemma}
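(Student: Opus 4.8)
The plan is to start from the Vandermonde presentation of $|h_\alpha|$ in~\eqref{eq:vandermonde} and apply Lemma~\ref{lem:vandermonde}. Setting $\omega_k = \mathrm{e}^{(k-1)\pi i/n}$ for $1\leq k\leq n$ — these are precisely the $2n$-th roots of unity lying in the closed upper half-plane — the matrix appearing in~\eqref{eq:vandermonde} is (the transpose of) the Vandermonde matrix with nodes $\omega_1,\dots,\omega_n$, so Lemma~\ref{lem:vandermonde} gives
\[
  |h_\alpha| = \Bigl|\prod_{1\leq k<j\leq n}(\omega_j-\omega_k)\Bigr| = \prod_{1\leq k<j\leq n}|\omega_j-\omega_k|\,.
\]

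Next I would evaluate each factor with the elementary identity $|\mathrm{e}^{i\varphi}-\mathrm{e}^{i\psi}| = 2\bigl|\sin\tfrac{\varphi-\psi}{2}\bigr|$. Since $\omega_j-\omega_k = \mathrm{e}^{(j-1)\pi i/n}-\mathrm{e}^{(k-1)\pi i/n}$, this yields $|\omega_j-\omega_k| = 2\sin\!\left(\frac{(j-k)\pi}{2n}\right)$, where the modulus around the sine disappears because for $1\leq k<j\leq n$ the argument $\frac{(j-k)\pi}{2n}$ lies in $(0,\pi/2]$, on which the sine is positive. Hence
\[
  |h_\alpha| = 2^{\binom{n}{2}}\prod_{1\leq k<j\leq n}\sin\!\left(\frac{(j-k)\pi}{2n}\right)\,,
\]
and the exponent $\binom{n}{2}$ equals $\frac{n(n-1)}{2}$, matching the claim.

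Finally I would reorganise the product according to the value of the difference $d=j-k$: for each $d$ with $1\leq d\leq n-1$ there are exactly $n-d$ pairs $(k,j)$ with $1\leq k<j\leq n$ and $j-k=d$, so that
\[
  \prod_{1\leq k<j\leq n}\sin\!\left(\frac{(j-k)\pi}{2n}\right) = \prod_{d=1}^{n-1}\sin^{\,n-d}\!\left(\frac{d\pi}{2n}\right)\,,
\]
which is the asserted formula after renaming $d$ as $j$. There is no genuine obstacle in this argument; the only points that need a modicum of care are the combinatorial bookkeeping in this last step (counting the pairs with a prescribed difference) and the remark that all sine factors are automatically positive, so the modulus signs drop out cleanly.
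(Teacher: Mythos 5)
Your proposal is correct and follows essentially the same route as the paper: apply the Vandermonde formula to the nodes $\omega_k=\mathrm{e}^{(k-1)\pi i/n}$, use $|\mathrm{e}^{i\varphi}-\mathrm{e}^{i\psi}|=2|\sin\frac{\varphi-\psi}{2}|$, and count how often each sine factor occurs. The only cosmetic difference is that you organise the product by the difference $d=j-k$ while the paper fixes one index and then collects equal factors, which amounts to the same bookkeeping.
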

\begin{proof}
In our case, $\omega_j = \mathrm{e}^{\frac{i\pi (j-1)}{n}}$. Hence, applying Lemma~\ref{lem:vandermonde} we find
\[
\begin{array}{lll}
  |h_\alpha| & = & \left|\dprod_{1\leq k < j \leq n}\left(\mathrm{e}^{\frac{i\pi (j-1)}{n}}
  -\mathrm{e}^{\frac{i\pi (k-1)}{n}}\right)\right| \eqskip
  & = & \left|2^{n-1}\sin{\left(\frac{(n-1)\pi}{2n}\right)}\sin{\left(\frac{(n-2)\pi}{2n}\right)}\dots
  \sin{\left(\frac{\pi}{2n}\right)}\right. \eqskip
  & & \hspace*{1cm}\left.\times 2^{n-2}\sin{\left(\frac{(n-2)\pi}{2n}\right)}\sin{\left(\frac{(n-3)\pi}{2n}\right)}\dots
 \sin{\left(\frac{\pi}{2n}\right)}\right.\eqskip
 & & \hspace*{15mm} \dots\eqskip
 & & \hspace*{20mm} \left. \times 2^1 \sin{\left(\frac{\pi}{2n}\right)}\right|
  \eqskip
  & = & 2^{\frac{n(n-1)}{2}}\dprod_{k=1}^{n-1}\dprod_{j=1}^k\sin{\left(\frac{j\pi}{2n}\right)} \eqskip
  & = & 2^{\frac{n(n-1)}{2}}  \dprod_{j=1}^{n-1}\sin^{n-j}{\left(\frac{j\pi}{2n}\right)} \,.
\end{array}
\]
\end{proof}

\begin{remark}
The matrix under the determinant of \eqref{eq:vandermonde} resembles the discrete Fourier transform matrix (DFT-matrix).
The difference is in the exponent of exponentials, in our case the entry in the second row and the second column is
$\mathrm{e}^\frac{i\pi}{n}$, the DFT matrix has $\mathrm{e}^{-\frac{2i\pi}{n}}$. One would obtain the DFT matrix in
this step (up to normalization and complex conjugation) if one prescribed Navier boundary conditions instead. 
\end{remark}

\section{Determinant of $BY(T)-C$}
\begin{lemma}\label{lem:dett}
$$
\mathrm{det\,}
\begin{pmatrix}
\frac{1}{n!}T^n & \frac{1}{(n+1)!}T^{n+1} & \cdots & \frac{1}{(2n-1)!}T^{2n-1}\\
\frac{1}{(n-1)!}T^{n-1} & \frac{1}{n!}T^{n} & \cdots & \frac{1}{(2n-2)!}T^{2n-2}\\
\vdots & \vdots & \ddots & \vdots\\
\frac{1}{2!}T^2 & \frac{1}{3!}T^3 & \cdots & \frac{1}{(n+1)!}T^{n+1} \\
T & \frac{1}{2!}T^2 & \cdots & \frac{1}{n!}T^n 
\end{pmatrix}= T^{n^2} \prod_{j=0}^{n-1} \frac{j!}{(n+j)!}
$$
\end{lemma}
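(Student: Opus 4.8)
The plan is to reduce this determinant to a Vandermonde determinant and then quote Lemma~\ref{lem:vandermonde}. Denote by $M=(M_{ij})_{1\le i,j\le n}$ the matrix in the statement, so that $M_{ij}=T^{n+j-i}/(n+j-i)!$; here $n+j-i$ runs over $\{1,\dots,2n-1\}$, so every entry is a genuine term (no negative factorials). \emph{First} I would strip off the powers of $T$: writing $M_{ij}=T^{n}\cdot T^{-i}\cdot T^{j}\cdot\frac1{(n+j-i)!}$ exhibits $M$ as $T^{n}\,D^{-1}ND$ with $D=\mathrm{diag}(T,T^{2},\dots,T^{n})$ and $N_{ij}=1/(n+j-i)!$, so that $\det M=T^{n^{2}}\det N$, the conjugation by $D$ contributing nothing.

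\emph{Second} I would recognise $N$ as (a rescaling of) a matrix of binomial coefficients. Multiplying row $i$ by $(2n-i)!$ and column $j$ by $1/(n-j)!$, and using $(2n-i)-(n+j-i)=n-j$, turns the entry into $\binom{2n-i}{n-j}$; hence
$$\det N=\frac{\prod_{r=0}^{n-1}r!}{\prod_{m=n}^{2n-1}m!}\,\det A=\Bigl(\prod_{j=0}^{n-1}\frac{j!}{(n+j)!}\Bigr)\det A,\qquad A_{ij}:=\binom{2n-i}{n-j},$$
so the entire lemma reduces to proving $\det A=1$.

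\emph{Third} I would pass from $A$ to a Vandermonde matrix. Reversing the column order (a permutation of sign $(-1)^{n(n-1)/2}$) replaces the $j$-th column by $\binom{2n-i}{j-1}$, which for each fixed $i$ is a polynomial in $x_i:=2n-i$ of degree $j-1$ with leading coefficient $1/(j-1)!$. Subtracting suitable combinations of the earlier columns --- elementary operations that leave the determinant unchanged --- reduces the $j$-th column to $x_i^{\,j-1}/(j-1)!$; factoring out the scalars $1/(j-1)!$ and applying Lemma~\ref{lem:vandermonde} with nodes $x_1,\dots,x_n$ yields $\det A=\pm\bigl(\prod_{k=1}^{n-1}k!\bigr)\big/\bigl(\prod_{k=0}^{n-1}k!\bigr)=\pm1$. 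Since $\prod_{k=0}^{n-1}k!=\prod_{k=1}^{n-1}k!$ and the two signs cancel (see below), this gives $\det A=1$ and hence $\det M=T^{n^{2}}\prod_{j=0}^{n-1}\frac{j!}{(n+j)!}$.

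The two diagonal rescalings in the first two steps are routine; the step that requires genuine care is the third, namely justifying the column reduction of the binomial matrix and tracking the signs. Concretely, the column reversal contributes a factor $(-1)^{n(n-1)/2}$, and with $x_i=2n-i$ one has $x_{i'}-x_i=i-i'<0$ for $i<i'$, so the Vandermonde product $\prod_{1\le i<i'\le n}(x_{i'}-x_i)$ contributes a further $(-1)^{n(n-1)/2}$; these cancel, and the remaining product equals $\prod_{k=1}^{n-1}k!$, which matches the $\prod_{k=0}^{n-1}k!$ pulled out of the columns. A variant that suppresses one of these sign computations is to reverse the \emph{rows} of $N$ at the start, turning it into the Hankel matrix $\bigl(1/(i+j-1)!\bigr)_{1\le i,j\le n}$, and to run the same rescaling-and-Vandermonde argument on that matrix instead; the total work is about the same.
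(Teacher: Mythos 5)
Your proof is correct, and the first half coincides with the paper's: the authors likewise rescale rows and columns to pull out $T^{n^2}\prod_{j=0}^{n-1} j!/(n+j)!$ and reduce the lemma to showing that a matrix of binomial coefficients (in their normalisation $\binom{n+j-1}{i-1}$, which is your $\binom{2n-i}{n-j}$ up to transposition and reversal of rows and columns) has determinant $1$. Where you genuinely diverge is in evaluating that binomial determinant. The paper repeatedly applies Pascal's rule $\binom{m+1}{k}=\binom{m}{k}+\binom{m}{k-1}$ together with column subtractions to triangularise the matrix, ending with a lower triangular matrix with $\binom{n}{0}=1$ on the diagonal; no sign bookkeeping is needed and no auxiliary lemma is invoked. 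You instead observe that $\binom{x}{j-1}$ is a polynomial of degree $j-1$ in $x$ with leading coefficient $1/(j-1)!$, reduce by column operations to a Vandermonde matrix in the nodes $x_i=2n-i$, and reuse Lemma~\ref{lem:vandermonde} (which the paper introduces only for the computation of $|h_\alpha|$). Your route has the appeal of recycling the Vandermonde lemma and of generalising immediately to any matrix whose $j$-th column is a polynomial of exact degree $j-1$ evaluated at $n$ nodes, at the cost of the two cancelling sign factors $(-1)^{n(n-1)/2}$ that you correctly track; the paper's route is more elementary and sign-free but specific to Pascal's recurrence. Both are complete and yield $\det A=1$, hence the stated formula.
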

\begin{proof}
We multiply the $i$-th row of the matrix by $\frac{1}{T^{n-i+1}(i-1)!}$ and the $j$-th column by $\frac{(n+j-1)!}{T^{j-1}}$
to obtain
\begin{multline*}
  \mathrm{det\,}
\begin{pmatrix}
\frac{1}{n!}T^n & \frac{1}{(n+1)!}T^{n+1} & \cdots & \frac{1}{(2n-1)!}T^{2n-1}\\
\frac{1}{(n-1)!}T^{n-1} & \frac{1}{n!}T^{n} & \cdots & \frac{1}{(2n-2)!}T^{2n-2}\\
\vdots & \vdots & \ddots & \vdots\\
\frac{1}{2!}T^2 & \frac{1}{3!}T^3 & \cdots & \frac{1}{(n+1)!}T^{n+1} \\
T & \frac{1}{2!}T^2 & \cdots & \frac{1}{n!}T^n 
\end{pmatrix}= T^{n^2} \prod_{j=1}^{n} \fr{(j-1)!}{(n+j-1)!}
\\
\hspace*{2mm}\times
\mathrm{det\,}
\begin{pmatrix}
{n\choose 0} & {n+1 \choose 0} & \cdots &{2n-1\choose 0}\\
{n\choose 1} & {n+1 \choose 1} & \cdots &{2n-1\choose 1}\\
\vdots & \vdots & \ddots &\vdots\\
{n\choose n-2} & {n+1 \choose n-2} & \cdots &{2n-1\choose n-2}\\
{n\choose n-1} & {n+1 \choose n-1} & \cdots &{2n-1\choose n-1}
\end{pmatrix}
\end{multline*}
We will prove that the last determinant is equal to 1. Using the relation ${m+1\choose j} = {m\choose j}+{m\choose j-1}$
and the fact that the determinant does not change when a column is replaced by the diference between itself and another column,
we have
\begin{gather*}
\begin{array}{lll}
\det\begin{pmatrix}
{n\choose 0}  & \cdots &{2n-3\choose 0}&{2n-2\choose 0}&{2n-1\choose 0}\eqskip
{n\choose 1}  & \cdots &{2n-3\choose 1}&{2n-2\choose 1}&{2n-1\choose 1}\eqskip
{n\choose 2}  & \cdots &{2n-3\choose 2}&{2n-2\choose 2}&{2n-1\choose 2}\eqskip
\vdots & \ddots &\vdots &\vdots &\vdots\eqskip
{n\choose n-2} & \cdots &{2n-3\choose n-2}&{2n-2\choose n-2}&{2n-1\choose n-2}\eqskip
{n\choose n-1} & \cdots &{2n-3\choose n-1}&{2n-2\choose n-1}&{2n-1\choose n-1}
\end{pmatrix} & = & \det\begin{pmatrix}
{n\choose 0}  & \cdots &{2n-3\choose 0}&{2n-2\choose 0}&{2n-2\choose 0}\eqskip
{n\choose 1}  & \cdots &{2n-3\choose 1}&{2n-2\choose 1}&{2n-2\choose 1}+{2n-2\choose 0}\eqskip
{n\choose 2}  & \cdots &{2n-3\choose 2}&{2n-2\choose 2}&{2n-2\choose 2}+{2n-2\choose 1}\eqskip
\vdots & \ddots &\vdots &\vdots &\vdots\eqskip
{n\choose n-2} & \cdots &{2n-3\choose n-2}&{2n-2\choose n-2}&{2n-2\choose n-2}+{2n-2\choose n-3}\eqskip
{n\choose n-1} & \cdots &{2n-3\choose n-1}&{2n-2\choose n-1}&{2n-2\choose n-1}+{2n-2\choose n-2}
\end{pmatrix} \eqskip
& = & \det\begin{pmatrix}
{n\choose 0}  & \cdots &{2n-3\choose 0}&{2n-3\choose 0}&0\eqskip
{n\choose 1}  & \cdots &{2n-3\choose 1}&{2n-3\choose 1}+{2n-3\choose 0}&{2n-2\choose 0}\eqskip
{n\choose 2}  & \cdots &{2n-3\choose 2}&{2n-3\choose 2}+{2n-3\choose 1}&{2n-2\choose 1}\eqskip
\vdots & \ddots &\vdots &\vdots &\vdots\eqskip
{n\choose n-2} & \cdots &{2n-3\choose n-2}&{2n-3\choose n-2}+{2n-3\choose n-3}&{2n-2\choose n-3}\eqskip
{n\choose n-1} & \cdots &{2n-3\choose n-1}&{2n-3\choose n-1}+{2n-3\choose n-2}&{2n-2\choose n-2}
\end{pmatrix}\eqskip
& = & \det\begin{pmatrix}
{n\choose 0}  & \cdots &{2n-3\choose 0}&0&0\eqskip
{n\choose 1}  & \cdots &{2n-3\choose 1}&{2n-3\choose 0}&{2n-3\choose 0}\eqskip
{n\choose 2}  & \cdots &{2n-3\choose 2}&{2n-3\choose 1}&{2n-3\choose 1}+{2n-3\choose 0}\eqskip
\vdots & \ddots &\vdots &\vdots &\vdots\eqskip
{n\choose n-2} & \cdots &{2n-3\choose n-2}&{2n-3\choose n-3}&{2n-3\choose n-3}+{2n-3\choose n-4}\eqskip
{n\choose n-1} & \cdots &{2n-3\choose n-1}&{2n-3\choose n-2}&{2n-3\choose n-2}+{2n-3\choose n-3}
\end{pmatrix} \eqskip
& = & \mathrm{det\,}\begin{pmatrix}
{n\choose 0}  & \cdots &{2n-3\choose 0}&0&0\eqskip
{n\choose 1}  & \cdots &{2n-3\choose 1}&{2n-3\choose 0}&0\eqskip
{n\choose 2}  & \cdots &{2n-3\choose 2}&{2n-3\choose 1}&{2n-3\choose 0}\eqskip
\vdots & \ddots &\vdots &\vdots &\vdots\eqskip
{n\choose n-2} & \cdots &{2n-3\choose n-2}&{2n-3\choose n-3}&{2n-3\choose n-4}\eqskip
{n\choose n-1} & \cdots &{2n-3\choose n-1}&{2n-3\choose n-2}&{2n-3\choose n-3}
\end{pmatrix}\,.
\end{array}
\end{gather*}
We continue until we obtain the lower triangular matrix
$$
\begin{pmatrix}
{n\choose 0}  & 0 & 0 &  \cdots & 0\eqskip
{n\choose 1}  & {n\choose 0}  & 0 & \cdots & 0\eqskip
{n\choose 2}  & {n\choose 1}  & {n\choose 0} &  \cdots & 0\eqskip
\vdots & \vdots & \vdots & \ddots &\vdots\eqskip
{n\choose n-2} & {n\choose n-3} & {n\choose n-4} & \cdots & 0 \eqskip
{n\choose n-1} & {n\choose n-2} & {n\choose n-3} & \cdots & {n\choose 0} 
\end{pmatrix},
$$
whose determinant is equal to one.
\end{proof}

\section{Lower bound on $-\log{F(n)}$}
Our aim now is to compute the asymptotics of $|h_\alpha|$ for large $n$ which we will do by obtaining sufficiently sharp lower and upper bounds.
Let us denote
$$
  F(n) := 2^{\frac{n(n-1)}{2}}\prod_{j = 1}^{n-1}\sin^{n-j}{\left(\frac{j\pi}{2n}\right)}\,.
$$
Then we have the following lower bound.

\begin{theorem}\label{thm:lb}
The function $F$ satisfies
$$
  -\log{F(n)} \geq \frac{7\zeta (3)}{4\pi^2}n^2-\frac{1}{2}n\log{n}+n\left(-1+\frac{1}{2}\log{\pi}\right)+ \fr{1}{4}-\fr{\pi^2}{72n}-
  \fr{\pi^2}{144n^2}-\fr{\pi^4}{1080n^3}+\fr{\pi^4}{2160 n^4}\,.
$$
\end{theorem}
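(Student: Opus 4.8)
The plan is to split $-\log F(n)$ into an ``elementary'' piece, controlled by sharp one-sided estimates for the Gamma function and the hyperfactorial, and a ``smooth'' piece to which Euler--Maclaurin applies cleanly. Starting from
\[
  -\log F(n) \;=\; -\frac{n(n-1)}{2}\log 2 \;-\;\sum_{j=1}^{n-1}(n-j)\log\sin\!\Big(\frac{j\pi}{2n}\Big),
\]
introduce $\phi(x):=\log\dfrac{\sin x}{x}$, which extends to a real-analytic function on $[0,\pi/2]$ with $\phi(x)=-\sum_{m\ge1}\dfrac{\zeta(2m)}{m\pi^{2m}}\,x^{2m}$ (all Taylor coefficients negative, so $\phi\le 0$). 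Writing $\log\sin x=\log x+\phi(x)$ gives $-\log F(n)=E(n)-R(n)$ with $E(n)=-\tfrac{n(n-1)}{2}\log 2-\sum_{j=1}^{n-1}(n-j)\log\tfrac{j\pi}{2n}$ and $R(n)=\sum_{j=1}^{n-1}(n-j)\,\phi\!\big(\tfrac{j\pi}{2n}\big)$, and then
\[
  \sum_{j=1}^{n-1}(n-j)\log\frac{j\pi}{2n}=\frac{n(n-1)}{2}\log\frac{\pi}{2n}+n\log\Gamma(n)-\log H(n-1),
\]
where $H(m)=\prod_{k=1}^{m}k^{k}$ is the hyperfactorial, so $E(n)$ is an explicit combination of $\log\Gamma(n)$ and $\log H(n-1)$.

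For $E(n)$ I would insert the one-sided form of Stirling's series for $\log\Gamma(n)$ (partial sums of $\sum_{k}\tfrac{B_{2k}}{2k(2k-1)n^{2k-1}}$ bound $\log\Gamma$ above or below, with remainder of the sign of the first omitted term), and the analogous sharp estimate for $\log H(n-1)$ obtained by applying Euler--Maclaurin to $\sum_{k\le m}k\log k$ and truncating at an even order, where again the remainder has a definite sign. This delivers the $-\tfrac12 n\log n$ contribution together with the $\log 2$ and $\log\pi$ terms of orders $n^2$ and $n$, up to a genuine inequality; the Stirling constant $\tfrac12\log(2\pi)$ and the Glaisher--Kinkelin constant $\log A$ enter here and will need to be matched by the estimate for $R(n)$.

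For $R(n)$, since $\psi(t):=(n-t)\phi\!\big(\tfrac{t\pi}{2n}\big)$ is smooth on $[0,n-1]$ with $\psi(0)=\psi'(0)=0$, I would apply Euler--Maclaurin to $\sum_{j=0}^{n-1}\psi(j)$. After $x=\tfrac{t\pi}{2n}$ the leading integral is $\tfrac{2n^2}{\pi}\int_{0}^{\pi/2-\pi/(2n)}\big(1-\tfrac{2x}{\pi}\big)\phi(x)\,dx$; evaluating it over $[0,\pi/2]$ uses only elementary antiderivatives together with the classical value $\int_0^{\pi/2}x\log\sin x\,dx=\tfrac{7}{16}\zeta(3)-\tfrac{\pi^2}{8}\log 2$, which is the sole source of $\zeta(3)$ and gives the term $\tfrac{7\zeta(3)}{4\pi^2}n^2$ plus rational and $\log\pi$ contributions of orders $n^2$ and $n$. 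The finite-interval correction near $x=\pi/2$ and the Euler--Maclaurin boundary corrections at $t=n-1$ come from the Taylor coefficients of $\phi$ at $\pi/2$, while the correction at $t=0$ — equivalently, replacing $\int_1^{n-1}$ by $\int_0^{n-1}$, where $\psi(t)=-\tfrac{\zeta(2)t^2}{4n}+\tfrac{\zeta(2)t^3}{4n^2}-\tfrac{\zeta(4)t^4}{32n^3}+\cdots$ — produces the displayed $\mathrm{O}(1/n^{k})$ terms, whose $\zeta(2),\zeta(4)$ and the fractions $\tfrac1{12},\tfrac1{24}$ reflect exactly these coefficients and the first Bernoulli numbers; truncating at an even order and using that the relevant derivative of $\phi$ (equivalently of $\log\tfrac{\sin x}{x}$) has constant sign on $(0,\pi/2)$ makes the remainder one-sided. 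Assembling $E(n)-R(n)$, the constants $\tfrac12\log(2\pi)$ and $\log A$ cancel and the claimed lower bound follows.

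I expect the main obstacle to be turning the asymptotic expansions into an inequality valid for \emph{every} $n$: each remainder — in Stirling's series, in the hyperfactorial estimate, and in the Euler--Maclaurin expansion of $R(n)$ — must be truncated where its sign is known, the signs must be checked to be compatible so that the total error is one-sided and of the stated size (including the somewhat irregular signs of the $1/n^{3},1/n^{4}$ terms), and the smallest values of $n$ may have to be verified directly. A secondary difficulty is the bookkeeping confirming that the transcendental constants introduced by the elementary part are matched exactly by the smooth part, an error in which would be invisible at leading order.
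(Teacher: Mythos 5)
Your route is genuinely different from the paper's. The paper never splits off $\log(\sin x/x)$ at the level of the sum and never invokes Euler--Maclaurin for the lower bound: Lemma~\ref{lem:lb1} replaces the whole sum $-\sum_{j}(n-j)\log\sin\left(\frac{j\pi}{2n}\right)$ by the integral $-\int_1^n(n-x)\log\sin\left(\frac{\pi x}{2n}\right)\mathrm{d}x$ plus a single boundary term, the one-sidedness coming from one convexity observation (the convex integrand lies below the piecewise-linear interpolant through the summands). The integral is then evaluated by inserting the product expansion~\eqref{eq:sin}; $\zeta(3)$ enters through Euler's series for $\zeta(3)$ in terms of $\sum_s \zeta(2s)/(2^{2s}(2s+1)(2s+2))$ (Lemma~\ref{lem:lb2}) rather than through $\int_0^{\pi/2}x\log\sin x\,\mathrm{d}x$, and the $1/n$-tail is controlled by the elementary one-sided cubic/quartic bounds on $\log(1+y)$ of Lemma~\ref{lem:boundlog}. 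What the paper's mechanism buys is that every error committed has a known sign by construction, and no Stirling, hyperfactorial or Glaisher constants ever appear.

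Two concrete points in your plan need repair before it is a proof. First, the one-sidedness of the Euler--Maclaurin remainder for $R(n)$ is not automatic: the relevant function is $\psi(t)=(n-t)\phi(ct)$ with $c=\pi/(2n)$, and $\psi^{(k)}(t)=(n-t)c^{k}\phi^{(k)}(ct)-k\,c^{k-1}\phi^{(k-1)}(ct)$ mixes two terms of opposite sign (all derivatives of $\phi$ of order at least one are negative on $(0,\pi/2)$), so constancy of sign of the derivatives of $\phi$ does not by itself give a signed remainder for $\psi$. Second, your method, carried out exactly, will not land on the stated right-hand side: the sum genuinely carries a positive correction of order $\log n$ (visible in the companion upper bound, Theorem~\ref{thm:ub}) together with the constants $\frac{1}{2}\log(2\pi)$ and $\log A$ that your decomposition introduces, whereas the stated bound --- no $\log n$ term, constant exactly $\frac{1}{4}$, and those particular $\pi^2/n$ and $\pi^4/n^3$ coefficients --- is an artifact of the paper's specific truncations. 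So after your computation you would still need a separate verification that your differently shaped lower bound dominates the stated one for \emph{every} $n\geq 1$, not just asymptotically; the assertion that ``the constants cancel and the claimed lower bound follows'' is exactly the step that is currently unsubstantiated.
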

The proof of this result will be based on the following estimate.
\begin{lemma}\label{lem:lb1}
The function $F$ satisfies
$$
 -\log{F(n)} \geq  -\int_1^n (n-x)\log{\left(\sin{\left(\frac{\pi x}{2n}\right)}\right)}\,\mathrm{d}x -\frac{1}{2}(n-1)\log{\left(\sin\frac{\pi}{2n}\right)}
 - \frac{n(n-1)}{2}\log{2}\,.
$$
\end{lemma}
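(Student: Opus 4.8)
The plan is to turn the product defining $F(n)$ into a sum, to compare that sum with the integral on the right-hand side, and then to exploit concavity of the integrand to get the inequality in the right direction. Taking logarithms in the definition of $F$ gives
$$-\log F(n) = -\frac{n(n-1)}{2}\log 2 - \sum_{j=1}^{n-1}(n-j)\log\sin\left(\frac{j\pi}{2n}\right),$$
so, after cancelling the common term $-\tfrac{n(n-1)}{2}\log 2$, the asserted inequality is equivalent to
$$\sum_{j=1}^{n-1} g(j) \leq \int_1^n g(x)\,\mathrm{d}x + \tfrac12\,g(1), \qquad g(x) := (n-x)\log\sin\left(\frac{\pi x}{2n}\right).$$
Here $g$ is $C^\infty$ on $[1,n]$, since $\sin(\pi x/(2n))\in(0,1]$ there; in particular the integral is a proper one, the only singularity of $g$ sitting at $x=0$, outside the range of integration. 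Note also that $g(n)=0$ and $g(1)=(n-1)\log\sin(\pi/(2n))$, which matches the boundary term in the statement.

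The heart of the matter is the \emph{concavity} of $g$ on $[1,n]$. Writing $c=\pi/(2n)$, a direct differentiation gives $g''(x) = -2c\cot(cx) - c^2(n-x)\csc^2(cx)$; on $[1,n]$ one has $cx\in(0,\pi/2]$, hence $\cot(cx)\geq 0$, $\csc^2(cx)>0$ and $n-x\geq 0$, so $g''\leq 0$. Concavity yields the trapezoidal inequality $\int_j^{j+1} g(x)\,\mathrm{d}x \geq \tfrac12\bigl(g(j)+g(j+1)\bigr)$ on each unit subinterval $[j,j+1]$ with $1\leq j\leq n-1$; summing these and using $g(n)=0$ gives
$$\int_1^n g(x)\,\mathrm{d}x \;\geq\; \tfrac12\,g(1) + \sum_{j=2}^{n-1} g(j) \;=\; \sum_{j=1}^{n-1} g(j) - \tfrac12\,g(1),$$
which is exactly the displayed inequality. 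Substituting back the values of $g$, multiplying through by $-1$, and reinstating the $\log 2$ term yields the statement of the lemma.

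There is no genuine obstacle here; the one point that must be got right is the sign of $g''$. It is precisely the concavity (rather than convexity) of $g$ that fixes the direction of the trapezoidal estimate and thereby turns the sum-to-integral comparison into a \emph{lower} bound for $-\log F(n)$ rather than an upper one. Everything else — the chord inequality for concave functions, the telescoping of the endpoint weights in the sum $\sum_j\tfrac12(g(j)+g(j+1))$, and the elementary differentiation of $\log\sin$ — is routine. The remaining work, namely the asymptotic evaluation of $\int_1^n (n-x)\log\sin(\pi x/(2n))\,\mathrm{d}x$ and of $\log\sin(\pi/(2n))$ needed to pass from this lemma to Theorem~\ref{thm:lb}, lies beyond the present statement.
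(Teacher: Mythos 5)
Your proof is correct and is essentially the paper's own argument: both reduce the claim to the trapezoidal inequality $\int_j^{j+1} g \geq \tfrac12\bigl(g(j)+g(j+1)\bigr)$ for the concave function $g(x)=(n-x)\log\sin\bigl(\tfrac{\pi x}{2n}\bigr)$ (the paper phrases this as convexity of $-g$ shifted by $\tfrac12$, comparing the area under a broken line through the points $(j-\tfrac12,a_j)$ with the area under the curve), followed by telescoping and $g(n)=0$. Your version is, if anything, slightly tidier in that you verify the sign of $g''$ explicitly, which the paper leaves as ``one can easily see.''
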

\begin{proof}
\begin{figure}
\centering
  \includegraphics[width=13cm]{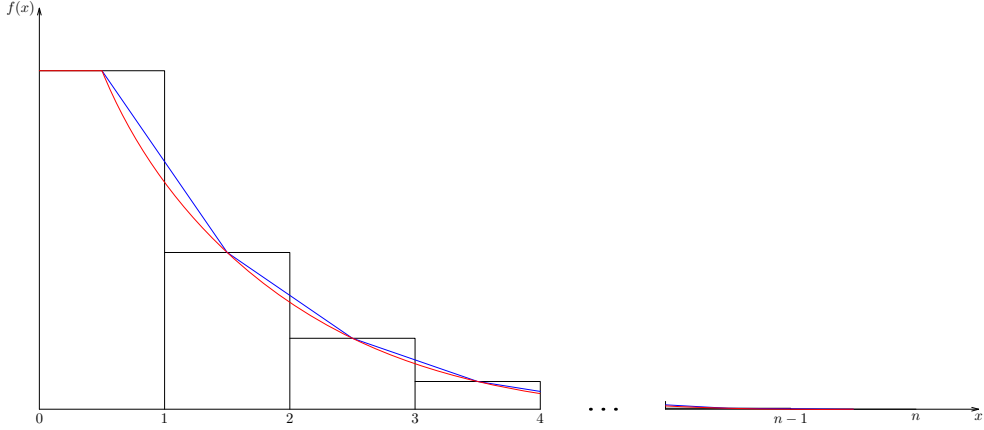}
  \caption{Illustration to the proof of Lemma~\ref{lem:lb1}. The blue line connects the point $(0,a_1)$ and the points $(j-\frac{1}{2},a_j)$, the
  red line is the function $-(n-x-\frac{1}{2})\log{\left(\sin{\left(\frac{(x+\frac{1}{2})\pi}{2n}\right)}\right)}$.}
  \label{fig1}
\end{figure}%

Applying negative logarithm to $F(n)$ we get the sum
\[
\dsum_{j=1}^n a_j = -\dsum_{j=1}^{n}(n-j)\log{\left(\sin{\left(\frac{j\pi}{2n}\right)}\right)}
\]
minus the term $\frac{n(n-1)}{2}\log{2}$. Our aim is to bound the mentioned series from below. Let us define a broken line (in blue in Figure~\ref{fig1})
connecting the point $(0,a_1)$ and the points $(j-\frac{1}{2},a_j)$. One can easily see that the area below this broken line is equal to the considered sum.
Since the function $-(n-x-\frac{1}{2})\log{\left(\sin{\left(\frac{(x+\frac{1}{2})\pi}{2n}\right)}\right)}$ (in red in Figure~\ref{fig1}) is convex, it lies
below the blue line and the corresponding area between this line and the horizontal axis is smaller than the value of the sum. Clearly, $a_n=0$. Hence the
area below the red line is given by
\begin{multline*}
  -\int_{\frac{1}{2}}^{n-\frac{1}{2}} \left(n-x-\frac{1}{2}\right)\log{\left(\sin{\left(\frac{(x+\frac{1}{2})\pi}{2n}\right)}\right)}\,\mathrm{d}x
  -\frac{1}{2}(n-1)\log{\left(\sin\frac{\pi}{2n}\right)} =
\\
 =  -\int_1^n (n-x)\log{\left(\sin{\left(\frac{\pi x}{2n}\right)}\right)}\,\mathrm{d}x -\frac{1}{2}(n-1)\log{\left(\sin\frac{\pi}{2n}\right)}\,.
\end{multline*}
\end{proof}

In order to prove Theorem~\ref{thm:lb}, we will now need to evaluate several numerical series, which we do in the following lemmata.

\begin{lemma}\label{lem:lb2}
We have 
$$
  \dsum_{k=1}^\infty \left[-\frac{1}{2}-2k^2\log{\left(1-\frac{1}{4k^2}\right)}\right] = \frac{7\zeta(3)}{4\pi^2} + \frac{1}{4}-\frac{1}{2}\log{2}\,.
$$
\end{lemma}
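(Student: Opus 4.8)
The plan is to convert the series into a power series in the even zeta values, evaluate the resulting generating function by an integration that absorbs the $1/(j+1)$ weight, and finish with a classical log-sine integral.

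First I would expand $\log(1-y)=-\sum_{m\ge1}y^m/m$ with $y=1/(4k^2)\in(0,1/4]$; multiplying by $-2k^2$, the $m=1$ term is exactly $1/2$, so that
\[
 -\frac12-2k^2\log\!\left(1-\frac1{4k^2}\right)=\sum_{m=2}^\infty\frac{2}{m\,4^m}\,\frac1{k^{2(m-1)}}\ge0 ,
\]
the non-negativity being a restatement of $\log(1-y)\le-y$. Since all terms are non-negative, Tonelli's theorem lets me interchange the two summations, and with $j=m-1$,
\[
 \sum_{k=1}^\infty\left[-\frac12-2k^2\log\!\left(1-\frac1{4k^2}\right)\right]=\sum_{m=2}^\infty\frac{2\,\zeta(2m-2)}{m\,4^m}=\frac12\sum_{j=1}^\infty\frac{\zeta(2j)}{(j+1)\,4^j}.
\]

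Next I would feed in the classical generating function $\sum_{j\ge1}\zeta(2j)x^{2j}=\tfrac12\bigl(1-\pi x\cot(\pi x)\bigr)$, valid for $0<x<1$. To account for the factor $1/(j+1)$ I would integrate: since $x^{2j}/(j+1)=x^{-2}\int_0^x 2t\,t^{2j}\,\mathrm dt$, term-by-term integration (legitimate by uniform convergence on $[0,\tfrac12]$) gives
\[
 \sum_{j=1}^\infty\frac{\zeta(2j)}{j+1}\,x^{2j}=\frac1{x^2}\int_0^x t\bigl(1-\pi t\cot(\pi t)\bigr)\,\mathrm dt ,
\]
so that, evaluating at $x=\tfrac12$ and multiplying by $\tfrac12$, the target sum equals $\tfrac14-2\pi\int_0^{1/2}t^2\cot(\pi t)\,\mathrm dt$. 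Then I would integrate by parts with the antiderivative $\tfrac1\pi\log\sin(\pi t)$ of $\cot(\pi t)$: the boundary term vanishes at $t=\tfrac12$ (where $\log\sin(\pi t)=0$) and at $t=0$ (where $t^2\log\sin(\pi t)\to0$), leaving $\int_0^{1/2}t^2\cot(\pi t)\,\mathrm dt=-\tfrac2\pi\int_0^{1/2}t\log\sin(\pi t)\,\mathrm dt$. Finally I would evaluate $\int_0^{1/2}t\log\sin(\pi t)\,\mathrm dt$ via the substitution $x=\pi t$ and the classical value $\int_0^{\pi/2}x\log\sin x\,\mathrm dx=\tfrac7{16}\zeta(3)-\tfrac{\pi^2}8\log2$ (itself a consequence of the Fourier expansion $\log\sin x=-\log2-\sum_{k\ge1}k^{-1}\cos(2kx)$ together with $\sum_{k\ \mathrm{odd}}k^{-3}=\tfrac78\zeta(3)$), which gives $\int_0^{1/2}t\log\sin(\pi t)\,\mathrm dt=\tfrac{7\zeta(3)}{16\pi^2}-\tfrac{\log2}8$. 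Substituting back yields $\tfrac14+\tfrac{7\zeta(3)}{4\pi^2}-\tfrac12\log2$, as claimed.

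The computations are routine once the structure is in place; the only delicate points are the two interchanges of sum and integral (both handled by monotone, respectively uniform, convergence, using that the $m\ge2$ tail of the logarithm is $O(k^{-2})$ uniformly in $k$) and the vanishing of the lower boundary term in the integration by parts. The actual content lies in the evaluation of the log-sine integral, which is where $\zeta(3)$ enters, and this is the step I expect to be the main obstacle.
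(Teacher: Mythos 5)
Your proposal is correct. It agrees with the paper's proof up to the midpoint: both expand the logarithm, observe that the $m=1$ term cancels the $-\frac{1}{2}$, and interchange the (nonnegative) double sum to arrive at $\frac{1}{2}\sum_{j\ge 1}\zeta(2j)/\bigl((j+1)4^j\bigr)$. From there the routes genuinely diverge. The paper handles the weight $1/(j+1)$ by the partial-fraction split $\frac{1}{2(j+1)}=\frac{1}{2j+1}-\frac{1}{(2j+1)(2j+2)}$ and then quotes two closed-form zeta series from the literature: Euler's formula $\zeta(3)=\frac{\pi^2}{7}\bigl(1-4\sum_s \zeta(2s)/(2^{2s}(2s+1)(2s+2))\bigr)$ and the identity $\sum_s \zeta(2s)/(2^{2s}(2s+1))=\frac{1}{2}-\frac{1}{2}\log 2$. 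You instead absorb the $1/(j+1)$ by integrating the cotangent generating function $\sum_j\zeta(2j)x^{2j}=\frac{1}{2}(1-\pi x\cot(\pi x))$ and reduce everything, via an integration by parts whose boundary terms you correctly check, to the classical log-sine integral $\int_0^{\pi/2}x\log\sin x\,\mathrm{d}x=\frac{7}{16}\zeta(3)-\frac{\pi^2}{8}\log 2$. The two approaches are essentially equivalent in depth --- Euler's formula cited in the paper is itself classically obtained from this very log-sine integral --- but yours is self-contained modulo the Fourier expansion of $\log\sin x$, whereas the paper's is shorter at the cost of invoking two external identities. Your justifications of the interchanges (Tonelli for the double sum, uniform convergence on $[0,\frac{1}{2}]$ for the term-by-term integration) are adequate, and the final arithmetic checks out.
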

\begin{proof}
Using the Taylor expansion of the logarithm we have
\[
 \begin{array}{lll}  
  \dsum_{k=1}^\infty \left[-\frac{1}{2}-2k^2\log{\left(1-\frac{1}{4k^2}\right)}\right] & = & \dsum_{k=1}^\infty \left(-\frac{1}{2}+\dsum_{s=1}^\infty \frac{2k^2}{(2k)^{2s}s}\right)
  \eqskip
  & = & \fr{1}{2}\dsum_{k=1}^\infty \dsum_{s=2}^\infty \frac{1}{(2k)^{2s-2}s}\eqskip
  & = & \fr{1}{2}\dsum_{k=1}^\infty \dsum_{s=1}^\infty \frac{1}{(2k)^{2s}(s+1)}\,.
 \end{array}
\]
The term under the sum is positive, hence if the double sum converges, it converges absolutely and we can interchange the two sums.
\begin{multline}
\begin{array}{lll}
  \fr{1}{2}\dsum_{k=1}^\infty \dsum_{s=1}^\infty \frac{1}{(2k)^{2s}(s+1)} & = & \fr{1}{2}\dsum_{s=1}^\infty \dsum_{k=1}^\infty \frac{1}{(2k)^{2s}(s+1)} \eqskip
  & = & \dsum_{s=1}^\infty  \dsum_{k=1}^\infty \frac{2s+2-1}{(2k)^{2s}(2s+1)(2s+2)} \eqskip
  & = & \dsum_{s=1}^\infty \dsum_{k=1}^\infty \frac{1}{(2k)^{2s}(2s+1)} -\dsum_{s=1}^\infty \dsum_{k=1}^\infty  \frac{1}{(2k)^{2s}(2s+1)(2s+2)}\eqskip
  & = & \dsum_{s=1}^\infty \frac{\zeta(2s)}{2^{2s}(2s+1)} -  \dsum_{s=1}^\infty \frac{\zeta(2s)}{2^{2s}(2s+1)(2s+2)}\,.\label{eq:zeta3-1}
\end{array}
\end{multline}
The following relation was originally found by Euler \cite{Eul} (see also \cite[eq. (1.8)]{AS}).
$$
  \zeta(3) = \frac{\pi^2}{7}\left(1-4\dsum_{s=1}^\infty \frac{\zeta(2s)}{2^{2s}(2s+1)(2s+2)}\right)\,.
$$
From this it follows that
$$
  -\dsum_{s=1}^\infty \frac{\zeta(2s)}{2^{2s}(2s+1)(2s+2)} = \frac{7\zeta(3)}{4\pi^2}-\frac{1}{4}\,.
$$
Substituting this into \eqref{eq:zeta3-1} and using
\begin{equation}
  \dsum_{s=1}^\infty \frac{\zeta(2s)}{2^{2s}(2s+1)} = \frac{1}{2}-\frac{1}{2}\log{2}\label{eq:2s+1}
\end{equation}
(see e.g. \cite[eq. (5.4)]{Sri}) we obtain
$$
  \dsum_{k=1}^\infty \left(-\frac{1}{2}-2k^2\log{\left(1-\frac{1}{4k^2}\right)}\right) = \frac{7\zeta(3)}{4\pi^2}-\frac{1}{4}+\frac{1}{2}-\frac{1}{2}\log{2} = \frac{7\zeta(3)}{4\pi^2}+\frac{1}{4}-\frac{1}{2}\log{2}\,.
$$
The sum converges, hence interchanging of the sums is justified.
\end{proof}

\begin{lemma}\label{lem:lb3}
We have
$$
  2\dsum_{k=1}^\infty \left(1+k\log{\left(1-\frac{1}{2k}\right)}-k\log{\left(1+\frac{1}{2k}\right)}\right) = -1+\log{2}\,.
$$
\end{lemma}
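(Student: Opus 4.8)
The plan is to reduce this to the series already evaluated in~\eqref{eq:2s+1} by expanding the summand into a power series in $1/(2k)$ and observing that the constant term cancels. First I would combine the two logarithms, writing each summand as
\[
 1+k\log\left(1-\frac{1}{2k}\right)-k\log\left(1+\frac{1}{2k}\right)
 = 1+k\log\left(\frac{2k-1}{2k+1}\right).
\]
Using the expansion $\log\frac{1-x}{1+x}=-2\sum_{s=0}^\infty \frac{x^{2s+1}}{2s+1}$, valid for $|x|<1$, with $x=\frac{1}{2k}$, gives
\[
 k\log\left(\frac{2k-1}{2k+1}\right) = -2k\sum_{s=0}^\infty \frac{1}{(2s+1)(2k)^{2s+1}}
 = -\sum_{s=0}^\infty \frac{1}{(2s+1)(2k)^{2s}}.
\]
The term with $s=0$ equals $-1$, so it cancels the leading $1$, and the summand reduces to $-\sum_{s=1}^\infty \frac{1}{(2s+1)(2k)^{2s}}$, which is negative; in particular each summand is $\bo(k^{-2})$, so the outer series converges.

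Next I would substitute this back in, obtaining
\[
 2\sum_{k=1}^\infty \left(1+k\log\left(1-\frac{1}{2k}\right)-k\log\left(1+\frac{1}{2k}\right)\right)
 = -2\sum_{k=1}^\infty \sum_{s=1}^\infty \frac{1}{(2s+1)(2k)^{2s}}.
\]
All terms of the double sum have the same sign, so (as in the proof of Lemma~\ref{lem:lb2}) Tonelli's theorem justifies interchanging the order of summation, giving
\[
 -2\sum_{s=1}^\infty \frac{1}{2s+1}\sum_{k=1}^\infty \frac{1}{(2k)^{2s}}
 = -2\sum_{s=1}^\infty \frac{\zeta(2s)}{2^{2s}(2s+1)}.
\]
Finally, applying~\eqref{eq:2s+1}, which states $\sum_{s=1}^\infty \frac{\zeta(2s)}{2^{2s}(2s+1)} = \frac{1}{2}-\frac{1}{2}\log 2$, yields the value $-2\left(\frac12-\frac12\log 2\right) = -1+\log 2$, as claimed.

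There is no real obstacle here: the only point requiring a little care is the cancellation of the constant term, which is what makes the rearrangement converge, together with the routine justification of the interchange of sums by positivity of the terms. Everything else is a direct appeal to~\eqref{eq:2s+1}.
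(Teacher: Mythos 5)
Your proof is correct and follows essentially the same route as the paper: expand the logarithms in powers of $1/(2k)$, observe that the constant term cancels the leading $1$, interchange the resulting double sum by positivity, and conclude via~\eqref{eq:2s+1}. The only cosmetic difference is that you use the combined expansion of $\log\frac{1-x}{1+x}$ at the outset, whereas the paper expands the two logarithms separately and lets the even-order terms cancel.
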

\begin{proof}
Again using the Taylor expansion of the logarithm we obtain
\begin{gather*}
\begin{array}{lll}
  2\dsum_{k=1}^\infty \left[1+k\log{\left(1-\frac{1}{2k}\right)}-k\log{\left(1+\frac{1}{2k}\right)}\right] & = & 
  2\dsum_{k=1}^\infty \left(1+k\dsum_{s=1}^\infty \left(-\frac{1}{(2k)^s s}+\frac{(-1)^s}{(2k)^s s}\right)\right) \eqskip
  & = & 2 \dsum_{k=1}^\infty \left(1-k\dsum_{l =1}^\infty \frac{2}{(2k)^{2l-1}(2l-1)}\right)\eqskip
  & = & -2 \dsum_{k = 1}^\infty \dsum_{l=2}^\infty \frac{1}{(2k)^{2l-2}(2l-1)}\eqskip
  & = & -2 \dsum_{k=1}^\infty \dsum_{m=1}^\infty \frac{1}{(2k)^{2m}(2m+1)}\eqskip
  & = & -2 \dsum_{m=1}^\infty \dsum_{k=1}^\infty \frac{1}{(2k)^{2m}(2m+1)} \eqskip
  & = & -2 \dsum_{m=1}^\infty \frac{\zeta (2m)}{2^{2m}(2m+1)} \eqskip
  & = & -1+\log{2}\,.
\end{array}
\end{gather*}
We have changed the summation index to $l$ ($s=2l-1$) and then to $m = l-1$. Interchanging of the sums is justified by the fact that the term under the sums
is positive and hence since the sum converges, it converges absolutely. The last identity follows from~\eqref{eq:2s+1}.
\end{proof}

\begin{lemma}\label{lem:lb4}
We have
$$
  \dsum_{k=1}^\infty \log{\left(1-\frac{1}{4 k^2}\right)} = \log{2}-\log{\pi}\,.
$$
\end{lemma}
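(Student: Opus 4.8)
The plan is to recognise the left-hand side as the logarithm of a Wallis-type infinite product and to evaluate that product via Euler's product formula for the sine,
\[
 \fr{\sin(\pi z)}{\pi z} = \dprod_{k=1}^\infty\left(1-\fr{z^2}{k^2}\right),
\]
which holds for all $z\in\mathbb{C}$. First I would specialise this identity to $z=\tfrac12$, which gives at once
\[
 \dprod_{k=1}^\infty\left(1-\fr{1}{4k^2}\right)=\fr{\sin(\pi/2)}{\pi/2}=\fr{2}{\pi}.
\]
Taking logarithms of both sides then yields $\sum_{k=1}^\infty\log\!\left(1-\tfrac{1}{4k^2}\right)=\log(2/\pi)=\log 2-\log\pi$, which is the assertion.

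Before invoking the logarithm one should note that the series on the left converges: since $\log\!\left(1-\tfrac{1}{4k^2}\right)=-\tfrac{1}{4k^2}+\bo(k^{-4})$, it is dominated by $\sum k^{-2}$. Moreover every factor $1-\tfrac{1}{4k^2}$ lies in $(0,1)$, the partial products are positive and decreasing, and they converge to the strictly positive number $2/\pi$; hence passing the limit through $\log$ is legitimate. This is the only point requiring any care, and it is routine.

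If one prefers a self-contained argument that does not quote the sine product, an alternative is to evaluate the partial products directly: writing $1-\tfrac{1}{4k^2}=\tfrac{(2k-1)(2k+1)}{(2k)^2}$ and telescoping the numerator, the $N$-th partial product equals
\[
 \dprod_{k=1}^N\left(1-\fr{1}{4k^2}\right)=\fr{(2N)!\,(2N+1)!}{2^{4N}(N!)^4},
\]
and Stirling's formula shows this tends to $2/\pi$ as $N\to\infty$. Either route gives the claimed value, so I expect no genuine obstacle in this lemma; it is a short computation that is needed only as an ingredient in the asymptotic analysis of $F(n)$.
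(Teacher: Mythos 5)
Your proof is correct and follows essentially the same route as the paper: both identify the series as the logarithm of the infinite product $\prod_{k\ge 1}(1-\tfrac{1}{4k^2})$ and evaluate it via Euler's sine product at $x=\pi/2$, obtaining $2/\pi$. Your added remarks on convergence and the alternative Wallis-type telescoping argument are fine but not needed beyond what the paper does.
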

\begin{proof}
We have
$$
 \dsum_{k=1}^\infty \log{\left(1-\frac{1}{4k^2}\right)} = \log{\left[\prod_{k = 1}^\infty \left(1-\frac{1}{4k^2}\right)\right]}=
 \log{\left(\frac{2}{\pi}\sin{\frac{\pi}{2}}\right)} =\log{2}-\log{\pi},
$$
where we used the product expansion for the sine function
\begin{equation}
  \sin(x) = x \prod_{k=1}^\infty \left(1-\fr{x^2}{k^2 \pi^2}\right)\label{eq:sin}
\end{equation}
at $x = \pi/2$.
\end{proof}

\begin{lemma}\label{lem:boundlog}
We have the following bounds for $y\in \left[-\frac{1}{2},\frac{1}{2}\right]$
$$
  y-\frac{1}{2}y^2+\frac{1}{3}y^3-\frac{2}{3}y^4\leq \log{(1+y)}\leq y-\frac{1}{2}y^2+\frac{1}{3}y^3\,.
$$
\end{lemma}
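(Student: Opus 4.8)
The plan is to prove the two inequalities separately, each by the same elementary device: form the difference of the two sides, observe that it vanishes at $y=0$, and pin down the sign of its derivative on $\left[-\frac{1}{2},\frac{1}{2}\right]$.

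For the upper bound, set $g(y):=y-\frac{1}{2}y^2+\frac{1}{3}y^3-\log(1+y)$, which is smooth on $(-1,\infty)$ and satisfies $g(0)=0$. Differentiating and placing everything over the common denominator $1+y$ gives
$$
  g'(y)=1-y+y^2-\frac{1}{1+y}=\frac{(1-y+y^2)(1+y)-1}{1+y}=\frac{y^3}{1+y}\,.
$$
Since $1+y>0$ on $(-1,\infty)$, the sign of $g'(y)$ is the sign of $y$, so $g$ is nonincreasing on $(-1,0]$ and nondecreasing on $[0,\infty)$; hence $g(y)\geq g(0)=0$, that is $\log(1+y)\leq y-\frac{1}{2}y^2+\frac{1}{3}y^3$. (This in fact holds for all $y>-1$, not just on $\left[-\frac{1}{2},\frac{1}{2}\right]$.)

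For the lower bound, set $h(y):=\log(1+y)-y+\frac{1}{2}y^2-\frac{1}{3}y^3+\frac{2}{3}y^4$, again with $h(0)=0$. A short computation yields
$$
  h'(y)=\frac{1}{1+y}-1+y-y^2+\frac{8}{3}y^3=\frac{\frac{5}{3}y^3+\frac{8}{3}y^4}{1+y}=\frac{y^3(5+8y)}{3(1+y)}\,.
$$
On $\left[-\frac{1}{2},\frac{1}{2}\right]$ both $1+y$ and $5+8y$ are strictly positive (the latter equals $1$ at $y=-\frac{1}{2}$), so once more $h'$ has the sign of $y$; therefore $h$ attains its minimum on the interval at $y=0$ and $h(y)\geq h(0)=0$, which is exactly the left-hand inequality.

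There is no genuine obstacle here: the only points requiring attention are the factorizations of the numerators of $g'$ and $h'$ and the verification that the linear factor $5+8y$ remains positive on the interval — it is precisely this last requirement that forces the left endpoint $y=-\frac{1}{2}$ in the statement (the argument would in fact go through on the larger interval $y>-\frac{5}{8}$, but $\left[-\frac{1}{2},\frac{1}{2}\right]$ is all that is needed in the sequel).
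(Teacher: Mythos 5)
Your proof is correct and follows essentially the same route as the paper: both arguments introduce the difference functions (yours are the negatives of the paper's $g$ and $h$), compute the same derivatives with the same key factorization $y^3(5+8y)/(3(1+y))$, and deduce the sign from the fact that the difference vanishes at $y=0$. The only cosmetic difference is that you conclude directly from the sign of the first derivative (monotonicity on either side of $0$), whereas the paper differentiates once more and invokes concavity together with the tangent line at $0$; both closings are immediate.
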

\begin{proof}
Let us define 
$$
  g(y) := \log{(1+y)}-y+\frac{1}{2}y^2-\frac{1}{3}y^3\,,\quad h(y):= y-\frac{1}{2}y^2+\frac{1}{3}y^3-\frac{2}{3}y^4 - \log{(1+y)}\,.
$$
Then we have 
\[
 \begin{array}{ll}
  g'(y) = -\fr{y^3}{1+y}\,, & g''(y)=-\fr{3y^2+2y^3}{(1+y)^2}\eqskip
  h'(y) = -\fr{y^3(8y+5)}{3(1+y)}\,, & h''(y) = -\fr{y^2(8y^2+14y+5)}{(1+y)^2}\,.
 \end{array}
\]
Hence for $y\in \left[-\frac{1}{2},\frac{1}{2}\right]$ 
$$
  g(0) = g'(0) = 0\,,\quad h(0) = h'(0) = 0\,,\quad g''(y)\leq 0\,,\quad h''(y)\leq 0\,.
$$
Both functions are concave, hence the tangent at 0 lies above their graph and
$g(y)\leq 0$, $h(y)\leq 0$, for $y\in \left[-\frac{1}{2},\frac{1}{2}\right]$.
\end{proof}

\begin{proof}[Proof of Thm.~\ref{thm:lb}]
We begin by analysing the integral $\int_1^n (n-x)\log{\left(\sin{\left(\frac{\pi x}{2n}\right)}\right)}\,\mathrm{d}x $ which appears in Lemma~\ref{lem:lb1}.
Using the product development for the sine function given by~\eqref{eq:sin} we obtain
$$
   \int_1^n (n-x)\log\left[\sin{\left(\fr{\pi x}{2n}\right)}\right]\,\mathrm{d}x  = \int_1^n (n-x)\log\left(\fr{\pi x}{2n}\right)\,\mathrm{d}x 
   + \int_1^n(n-x)\dsum_{k=1}^\infty \log{\left(1-\frac{x^2}{4n^2k^2}\right)}\,\mathrm{d}x\,.
$$
We now evaluate the first of the two integrals in the right-hand side, for which we need the standard integrals
\begin{eqnarray*}
\begin{array}{lll}
 \dint_1^n \log{x}\,\mathrm{d}x & = & n\log{n}-n+1\,,\eqskip
 \dint_1^n x\log{x}\,\mathrm{d}x & = & \fr{n^2}{2}\log{n}-\fr{1}{4}n^2+\fr{1}{4}\,.
\end{array}
\end{eqnarray*}
Using these expressions yields
\begin{multline}
 \begin{array}{lll}
  -\dint_1^n (n-x)\log\left(\fr{\pi x}{2n}\right)\,\mathrm{d}x & = & -\dint_1^n (n-x)\left(\log{x}+\log{\left(\fr{\pi}{2n}\right)}\right)\,\mathrm{d}x\eqskip
  & = & -n^2\log{n}+n^2-n+\fr{n^2}{2}\log{n}-\fr{1}{4}n^2+\fr{1}{4}\eqskip
  & & \hspace*{1cm} -n(n-1)\log{\left(\fr{\pi}{2n}\right)}+\fr{1}{2}(n^2-1)\log{\left(\fr{\pi}{2n}\right)}\eqskip
  & = & n^2\left(\fr{3}{4}-\fr{1}{2}\log{\fr{\pi}{2}}\right) - n\log{n}+n\left(-1+\log{\fr{\pi}{2}}\right)\eqskip
  & & \hspace*{1cm} +\fr{1}{2}\log{n}+\fr{1}{4}-\fr{1}{2}\log{\fr{\pi}{2}}\,.\label{eq:lb:first}
 \end{array}
\end{multline}

For the second integral we have, after making the substitution $y = \fr{x}{2kn}$,
\begin{gather*}
\begin{array}{lll}
\dint_{\ds 1}^{\ds n} (n-x)\log{\left(1-\fr{x^2}{4k^2n^2}\right)}\,\mathrm{d}x & = & \dint_{\frac{1}{2kn}}^{\frac{1}{2k}} 2kn^2(1-2ky)\log{\left(1-y^2\right)}\,\mathrm{d}y\eqskip
 & = &  2k n^2 \Big[y\log{(1-y^2)}-2y-\log{(1-y)}+\log{(1+y)}\eqskip
 & &\hspace*{10mm} -k(y^2-1)\log{(1-y^2)}+ky^2\Big]_{\frac{1}{2kn}}^{\frac{1}{2k}} \eqskip
 & = & n^2\Big[\fr{1}{2}\left(1+4k^2\right)\log{\left(1-\fr{1}{4k^2}\right)}-\fr{3}{2}-2k\log{\left(1-\fr{1}{2k}\right)}\eqskip
 & & \hspace*{10mm}+2k\log{\left(1+\fr{1}{2k}\right)}\Big]-\Big[\Big(n-\fr{1}{2}+2k^2 n^2+2k n^2\Big)\eqskip
 & & \hspace*{20mm}\times\log{\left(1+\fr{1}{2kn}\right)}+\Big(n-\fr{1}{2}+2k^2 n^2-2k n^2\Big)\eqskip
 & & \hspace*{30mm}\times\log{\left(1-\fr{1}{2kn}\right)}+\fr{1}{2}-2n\Big]\,.
 \end{array}
\end{gather*}
Hence we have, using Lemmata~\ref{lem:lb2}, \ref{lem:lb3} and \ref{lem:lb4}, for the first part of the sum
\begin{gather*}
 -\dsum_{k=1}^\infty \Big[\fr{1}{2}\left(1+4k^2\right)\log{\left(1-\fr{1}{4k^2}\right)}-\fr{3}{2}-2k\log{\left(1-\fr{1}{2k}\right)}+2k\log{\left(1+\fr{1}{2k}\right)}\Big] = \frac{7\zeta(3)}{4\pi^2}-\frac{3}{4}+\frac{1}{2}\log{\pi}\,.
\end{gather*}
For the second part we obtain, using Lemma~\ref{lem:boundlog},
\begin{gather*}
\begin{array}{lll}
 & &  \hspace{-13mm}\dsum_{k=1}^\infty\Big[\Big(n-\fr{1}{2}+2k^2 n^2+2k n^2\Big)\log{\left(1+\fr{1}{2kn}\right)}+\Big(n-\fr{1}{2}+2k^2 n^2-2k n^2\Big)\log{\left(1-\fr{1}{2kn}\right)}+\fr{1}{2}-2n\Big]\eqskip
  & \geq & \dsum_{k=1}^\infty \Big[\left(n-\fr{1}{2}+2k^2 n^2+2k n^2\right)\left(\fr{1}{2kn}-\fr{1}{8k^2 n^2}+\fr{1}{24 k^3 n^3}-\fr{1}{24k^4 n^4}\right)\eqskip
  & & \hspace{10mm} + \left(n-\fr{1}{2}+2k^2 n^2-2k n^2\right)\left(-\fr{1}{2kn}-\fr{1}{8k^2 n^2}-\fr{1}{24 k^3 n^3}-\fr{1}{24k^4 n^4}\right)+ \fr{1}{2}-2n\Big]\eqskip
  & = & \dsum_{k=1}^\infty \Big[-\fr{1}{6k^2 n^2}+\fr{1}{6k^2 n}-\fr{1}{4k^2 n^2}\left(n-\fr{1}{2}\right)-\fr{1}{12k^4 n^4}\left(n-\fr{1}{2}\right)\Big]\eqskip
  & = & -\fr{\pi^2}{72 n}-\fr{\pi^2}{144 n^2}-\fr{\pi^4}{1080 n^3}+\fr{\pi^4}{2160 n^4}\,.
 \end{array}
\end{gather*}
We have used the facts that the parentheses in front of the logarithms are positive and $\left|\pm \fr{1}{2kn}\right|\leq \fr{1}{2}$
and the results of the sums $\dsum_{k=1}^\infty \fr{1}{k^2} = \fr{\pi^2}{6}$ and $\dsum_{k=1}^\infty \fr{1}{k^4} = \fr{\pi^4}{90}$.
Similarly, we can also obtain the upper bound
\begin{gather*}
\begin{array}{lll}
 & &  \hspace{-13mm}\dsum_{k=1}^\infty\Big[\Big(n-\fr{1}{2}+2k^2 n^2+2k n^2\Big)\log{\left(1+\fr{1}{2kn}\right)}+\Big(n-\fr{1}{2}+2k^2 n^2-2k n^2\Big)\log{\left(1-\fr{1}{2kn}\right)}+\fr{1}{2}-2n\Big]\eqskip
  & \leq & \dsum_{k=1}^\infty \Big[\left(n-\fr{1}{2}+2k^2 n^2+2k n^2\right)\left(\fr{1}{2kn}-\fr{1}{8k^2 n^2}+\fr{1}{24 k^3 n^3}\right)\eqskip
  & & \hspace{10mm} + \left(n-\fr{1}{2}+2k^2 n^2-2k n^2\right)\left(-\fr{1}{2kn}-\fr{1}{8k^2 n^2}-\fr{1}{24 k^3 n^3}\right)+ \fr{1}{2}-2n\Big]\eqskip
  & = & \dsum_{k=1}^\infty \Big[\fr{1}{6k^2 n}-\fr{1}{4k^2 n^2}\left(n-\fr{1}{2}\right)\Big]\eqskip
  & = & -\fr{\pi^2}{72 n}+\fr{\pi^2}{48 n^2}\,.
 \end{array}
\end{gather*}
Therefore, we have
\begin{equation}
-\dsum_{k=1}^\infty\int_1^n (n-x)\log{\left(1-\frac{x^2}{4k^2n^2}\right)}\,\mathrm{d}x \geq n^2\left(\frac{7\zeta(3)}{4\pi^2}-\frac{3}{4}+\frac{1}{2}\log{\pi}\right)-\fr{\pi^2}{72 n}-\fr{\pi^2}{144 n^2}-\fr{\pi^4}{1080 n^3}+\fr{\pi^4}{2160 n^4}\,.\label{eq:lb-int2}
\end{equation}
and
\begin{equation}
-\dsum_{k=1}^\infty\int_1^n (n-x)\log{\left(1-\frac{x^2}{4k^2n^2}\right)}\,\mathrm{d}x \leq n^2\left(\frac{7\zeta(3)}{4\pi^2}-\frac{3}{4}+\frac{1}{2}\log{\pi}\right)-\fr{\pi^2}{72 n}+\fr{\pi^2}{48 n^2}\,.\label{eq:lb-int3}
\end{equation}
We thus see that the sum converges, and since the function $-(n-x)\log{\left(1-\frac{x^2}{4k^2n^2}\right)}$ is nonnegative for
all $x\in[1,n]$, $k,n\in\mathbb{N}$, it also converges absolutely. This justifies exchanging the sum and the integral.
Using equations~\eqref{eq:lb:first} and~\eqref{eq:lb-int2}, Lemma~\ref{lem:lb1} and 
\begin{equation}
\begin{array}{lll}
 -\fr{1}{2}(n-1)\log\left(\sin\left(\fr{\pi}{2n}\right)\right) & \geq &  -\fr{1}{2}(n-1)\log\left(\fr{\pi}{2n}\right)\eqskip
 & = & \fr{1}{2}n\log{n}-\fr{1}{2}n\log\left(\fr{\pi}{2}\right)-
 \frac{1}{2}\log{n}+\frac{1}{2}\log{\frac{\pi}{2}}\label{eq:lb-withoutint}
\end{array}
\end{equation}
we obtain the claim of the theorem.
\end{proof}

\section{Upper bound on $-\log{F(n)}$}
\begin{theorem}\label{thm:ub}
It holds
\begin{multline*}
\begin{array}{lll}
  -\log{F(n)} &\leq &  \fr{7\zeta(3)}{4\pi^2}n^2-\fr{1}{2}n\log{n}+ \left[ \log(\pi)-\fr{1}{2}\log(2) - \fr{43}{48} \right]n +\fr{1}{12}\log{n}\eqskip
 & & \hspace{5mm}+\fr{1}{2}\left[\fr{3}{16}-\fr{\gamma}{12}-\log\left(\fr{\pi}{2}\right)\right]
 + \fr{1}{24n}\left(\fr{13}{4}-\fr{\pi^2}{3}\right)+\fr{\pi^2}{48n^2},
 \end{array}
\end{multline*}
where $\gamma$ is the Euler-Mascheroni constant.
\end{theorem}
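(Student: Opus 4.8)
The plan is to rerun the computation from the proof of Theorem~\ref{thm:lb} with every inequality reversed, the only genuinely new ingredient being a sharper, \emph{two‑sided} passage from the defining sum to an integral. Writing, as before,
\[
  -\log F(n) = -\frac{n(n-1)}{2}\log 2 - \sum_{j=1}^{n-1}(n-j)\log\sin\!\Big(\frac{\pi j}{2n}\Big)
  = -\frac{n(n-1)}{2}\log 2 + \sum_{j=1}^{n-1} f(j),
\]
with $f(x):=-(n-x)\log\sin(\pi x/(2n))$, recall from the proof of Lemma~\ref{lem:lb1} that $f$ is convex on $[1,n]$; in addition one checks that $f''\ge 0$ is decreasing on $[1,n]$ and that $f(n)=f'(n)=0$. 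First I would prove the upper‑bound analogue of Lemma~\ref{lem:lb1}. On each interval $[j,j+1]$ the trapezoidal error satisfies $0\le \tfrac{f(j)+f(j+1)}{2}-\int_j^{j+1}f = \tfrac1{12}f''(\xi_j)\le \tfrac1{12}f''(j)$ (the error equals $\tfrac1{12}f''(\xi_j)$ and $f''$ is decreasing), so summing over $j$ and using $f(n)=0$ gives
\[
  \int_1^n f(x)\,\mathrm{d}x + \frac{f(1)}{2}\;\le\;\sum_{j=1}^{n-1}f(j)\;\le\; \int_1^n f(x)\,\mathrm{d}x + \frac{f(1)}{2} + \frac{1}{12}\sum_{j=1}^{n-1}f''(j),
\]
the left inequality being exactly Lemma~\ref{lem:lb1}; here $\sum_{j=1}^{n-1}f''(j)$ is bounded, $f''$ being positive and decreasing, by $f''(1)+\int_1^{n-1}f'' = f''(1)+f'(n-1)-f'(1)$. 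All of $f(1),f'(1),f''(1),f'(n-1)$ are explicit and can be pinned down to order $1/n^2$ using elementary inequalities for $\sin$, $\cot$ and $\csc$ near $0$ (e.g. $z-z^3/6<\sin z$, $\cot z<1/z<\csc z$ on $(0,\pi/2)$). It is precisely this correction term $\tfrac1{12}\bigl(f''(1)+f'(n-1)-f'(1)\bigr)$ that produces the $\tfrac1{12}\log n$ of the statement and makes the $n$‑coefficient larger than the one in Theorem~\ref{thm:lb}.

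Second, I would evaluate and bound $\int_1^n f$ exactly as in the lower‑bound proof: by \eqref{eq:sin} split $\log\sin(\pi x/(2n)) = \log(\pi x/(2n)) + \sum_{k\ge1}\log(1-x^2/(4n^2k^2))$; the integral of the first part is \eqref{eq:lb:first}, and for the series part the \emph{upper} bound \eqref{eq:lb-int3} — obtained using the \emph{upper} inequality in Lemma~\ref{lem:boundlog} — supplies the very terms $-\tfrac{\pi^2}{72n}+\tfrac{\pi^2}{48n^2}$ appearing in the statement. The auxiliary numerical series are summed exactly as in Lemmata~\ref{lem:lb2}--\ref{lem:lb4}, and one further elementary estimate — for instance the classical identity $\sum_{k\ge1}\bigl(1/k-\log(1+1/k)\bigr)=\gamma$, or equivalently a sharp bound $H_m\le\log m+\gamma+\tfrac1{2m}$ applied to a harmonic‑type sum coming from the $\cot$‑terms in $f''$ — is what brings in the Euler--Mascheroni constant.

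Finally I would collect the contributions of orders $n^2,\ n\log n,\ n,\ \log n,\ 1,\ 1/n,\ 1/n^2$ and simplify. The $n^2$‑ and $n\log n$‑coefficients must come out equal to those of Theorem~\ref{thm:lb}, namely $\tfrac{7\zeta(3)}{4\pi^2}$ and $-\tfrac12$ (a useful internal check), while the $n$‑coefficient, the constant, and the $1/n$, $1/n^2$ corrections are larger, the extra contribution being entirely due to the trapezoidal correction term and the one‑sided estimates of the boundary values. The main obstacle is the first step: the bare convexity argument behind Lemma~\ref{lem:lb1} only gives the sum‑to‑integral comparison in the wrong direction, so one has to supply the second‑order trapezoidal estimate and then keep \emph{fully explicit} track of $\tfrac12 f(1)$, $\tfrac1{12}\sum_j f''(j)$, $f'(n-1)$ and $f'(1)$ — including their $O(1/n)$ and $O(1/n^2)$ pieces — so that the final bound reproduces the stated lower‑order terms and not merely the leading asymptotics.
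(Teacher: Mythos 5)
Your overall strategy is sound and is in fact equivalent in spirit to the paper's: the quantity you introduce, $\sum_j\frac{1}{12}f''(\xi_j)$, is \emph{identical} (by integration by parts on each $[j,j+1]$) to the Euler--Maclaurin remainder $\int_1^n f'(x)P_1(x)\,\mathrm{d}x$ that the paper estimates in Lemma~\ref{lem:ub1}, and your left-hand inequality is indeed Lemma~\ref{lem:lb1}. The problem is quantitative: the specific bound you propose for the correction term is too lossy to recover the constants in the statement. You bound the correction by $\frac{1}{12}\bigl(f''(1)+f'(n-1)-f'(1)\bigr)$. Computing these with $f''(x)=\frac{\pi}{n}\cot\bigl(\frac{\pi x}{2n}\bigr)+(n-x)\frac{\pi^2}{4n^2}\csc^2\bigl(\frac{\pi x}{2n}\bigr)$ gives $f''(1)=n+1+{\rm O}(1/n)$, $-f'(1)=n+\log n-1-\log\frac{\pi}{2}+{\rm O}(1/n)$ and $f'(n-1)={\rm O}(1/n^2)$, so your correction term is $\frac{n}{6}+\frac{1}{12}\log n+{\rm O}(1)$. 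The paper's Lemma~\ref{lem:ub1} achieves $\frac{5}{48}n+\frac{1}{12}\log n+{\rm O}(1)$, and it is exactly the coefficient $\frac{5}{48}$ that produces the stated $-\frac{43}{48}$ in the $n$-term ($-1+\frac{5}{48}=-\frac{43}{48}$ after combining with $\log\frac{\pi}{2}$ and $\frac12\log 2$). Your route yields $-1+\frac{8}{48}=-\frac{40}{48}$, i.e.\ a strictly weaker upper bound, so the theorem as stated does not follow. The loss is intrinsic to replacing $f''(\xi_j)$ by $f''(j)$ and then $\sum_j f''(j)$ by $f''(1)+\int_1^{n-1}f''$: even the exact sum $\sum_j f''(j)\approx\sum_j\bigl(\frac{2}{j}+\frac{n-j}{j^2}\bigr)=\frac{\pi^2}{6}n+{\rm O}(\log n)$ gives $\frac{1}{12}\cdot\frac{\pi^2}{6}\approx 0.137$ per unit $n$, still above $\frac{5}{48}\approx 0.104$.

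A second, related issue: with your integral-comparison bound on $\sum_j f''(j)$ no harmonic sum ever appears, so the Euler--Mascheroni constant cannot enter your constant term the way you suggest; you would instead get a constant involving $\log\frac{\pi}{2}$ from $f'(1)$. In the paper, $\gamma$ arises because the remainder $\int_1^n f'P_1$ is estimated by splitting each $[j,j+1]$ at the midpoint where $P_1$ changes sign, replacing $\log\sin$ and $\cot$ by their endpoint values according to that sign; this produces a telescoping $\log\sin$ sum plus explicit combinations $\sum_j\bigl(\frac{5}{j+1}-\frac{1}{j}\bigr)$ and $\sum_j\bigl(\frac{1}{j}-\frac{1}{j+1}-\frac{j+1}{n^2}\bigr)(n-j)$, bounded via $H_n\le\log n+\gamma+\frac{1}{2n}$ and $H_n\ge\log n+\gamma$. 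That sign-aware splitting is what captures the extra cancellation your one-sided $f''(j)$ bound discards; to salvage your approach you would need an equally sharp two-sided control of $\sum_j f''(\xi_j)$, at which point you are essentially reconstructing Lemma~\ref{lem:ub1}.
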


Our aim is to bound the sum $\dsum_{j=1}^n a_j = -\dsum_{j=1}^{n}(n-j)\log{\left(\sin{\left(\frac{j\pi}{2n}\right)}\right)}$
from above. We will use the Euler-Maclaurin series
\begin{equation}
  \dsum_{j=1}^n f(j) = \int_1^n f(x)\,\mathrm{d}x + \frac{f(1)+f(n)}{2} + \int_1^n f'(x) P_1(x)\,\mathrm{d}x\,,\label{eq:ub1}
\end{equation}
where $f(x) = -(n-x)\log{\left(\sin{\left(\frac{\pi x}{2n}\right)}\right)}$ and $P_1(x) = x-\lfloor x \rfloor-\frac{1}{2}$ is
the first Bernoulli polynomial in $x-\lfloor x \rfloor$,
where $\lfloor x \rfloor$ is the largest integer smaller or equal to $x$.
From the proof of Theorem~\ref{thm:lb} we already have upper bounds for two terms on the right-hand side of~\eqref{eq:ub1},
and so it only remains to bound the third term.

\begin{lemma}\label{lem:ub1}
It holds
$$
  \dint_1^n f'(x) P_1(x)\,\mathrm{d}x \leq \fr{5}{48}n +
  \fr{1}{12}\log(n)- \fr{1}{8}\left(\fr{5}{4} +\fr{1}{3}\gamma\right)
  +\fr{13}{96n}\,.
$$
\end{lemma}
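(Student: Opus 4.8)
\emph{Proof proposal.} The natural route is to integrate by parts twice so as to move the derivatives off the sawtooth $P_1$ and onto $f$, where they become elementary. Write
\[
 f'(x)=\log\sin\!\left(\fr{\pi x}{2n}\right)-\fr{\pi(n-x)}{2n}\cot\!\left(\fr{\pi x}{2n}\right),
\]
and let $P_2(x)=B_2(x-\lfloor x\rfloor)$, $P_3(x)=B_3(x-\lfloor x\rfloor)$ be the periodisations of the second and third Bernoulli polynomials, so that $P_2'=2P_1$ and $P_3'=3P_2$ off the integers, while $P_2,P_3$ are continuous and $1$-periodic with $P_2$ equal to $\tfrac16$ and $P_3$ equal to $0$ at every integer. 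Since $f'(n)=\log\sin(\pi/2)=0$, a first integration by parts gives
\[
 \dint_1^n f'(x)P_1(x)\,\mathrm{d}x=\bigl[\tfrac12 P_2(x)f'(x)\bigr]_1^n-\tfrac12\dint_1^n P_2(x)f''(x)\,\mathrm{d}x=-\fr{1}{12}\,f'(1)-\tfrac12\dint_1^n P_2(x)f''(x)\,\mathrm{d}x,
\]
and a second one, using that $P_3$ vanishes at $1$ and $n$, converts the remaining integral into $\fr16\dint_1^n P_3(x)f'''(x)\,\mathrm{d}x$.

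The point that makes this work is that $f''$ and $f'''$ are elementary and of fixed sign on $(0,n)$:
\[
 f''(x)=\fr{\pi}{n}\cot\!\left(\fr{\pi x}{2n}\right)+\fr{\pi^2(n-x)}{4n^2}\csc^2\!\left(\fr{\pi x}{2n}\right)>0,
\]
\[
 f'''(x)=-\fr{3\pi^2}{4n^2}\csc^2\!\left(\fr{\pi x}{2n}\right)-\fr{\pi^3(n-x)}{4n^3}\csc^2\!\left(\fr{\pi x}{2n}\right)\cot\!\left(\fr{\pi x}{2n}\right)<0.
\]
Hence, using $|P_3|\le\fr{1}{12\sqrt3}$ together with $f'''<0$, one gets $\fr16\dint_1^n P_3 f'''\le\fr{1}{72\sqrt3}\dint_1^n(-f''')=\fr{1}{72\sqrt3}f''(1)$, and it remains to bound $f'(1)$ from below and $f''(1)$ from above. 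For these two boundary quantities I would use the product expansion \eqref{eq:sin}, writing $\log\sin(\pi/2n)=\log(\pi/2n)+\dsum_{k\ge1}\log\bigl(1-\tfrac1{4n^2k^2}\bigr)$ and controlling the tail with Lemma~\ref{lem:boundlog}, together with the elementary inequalities $\cot z<\tfrac1z-\tfrac z3$ and $\sin z\ge z-\tfrac{z^3}{6}$ for small $z>0$; this pins $f'(1)$ and $f''(1)$ down up to $\bo(1/n^2)$ and supplies the $\fr{5}{48}n+\fr1{12}\log n$ part of the bound through the term $-\fr1{12}f'(1)$.

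The delicate point — what I expect to be the main obstacle — is extracting the constant term and the $\bo(1/n)$ term with the stated precision, in particular producing the Euler--Mascheroni constant. The crude estimate $\fr16\dint_1^n P_3 f'''\le\fr1{72\sqrt3}f''(1)$ is too lossy for this, so one instead splits $f'''$ into its singular part near $x=0$, whose pairing against $P_3$ is evaluated exactly by iterating the integration by parts against $P_4,P_5,\dots$ (so that everything reduces to integrals of the form $\dint_1^n x^{-s}P_j(x-\lfloor x\rfloor)\,\mathrm{d}x$), plus a smooth remainder bounded uniformly in $n$. The constant $\gamma$ enters at this stage through a harmonic-type Euler--Maclaurin remainder — equivalently through $\dint_1^\infty\fr{P_1(x)}{x^2}\,\mathrm{d}x=\fr12-\gamma$, or $\dsum_{k=1}^m\fr1k=\log m+\gamma+\bo(1/m)$ — and the whole computation must be carried out keeping every error uniform in $n$, so that the inequality holds for all $n$ and not merely asymptotically. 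As a consistency check one can use the exact identity $\dprod_{j=1}^{n-1}\sin(j\pi/2n)=\sqrt n\,2^{-(n-1)}$, which evaluates $-n\dsum_{j=1}^{n-1}\log\sin(j\pi/2n)$ in closed form and isolates $\dsum_{j=1}^{n-1}j\log\sin(j\pi/2n)$ as the genuinely hard contribution.
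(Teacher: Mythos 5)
Your route (Euler--Maclaurin to second/third order: integrate by parts against the periodised Bernoulli polynomials $P_2,P_3$ and exploit the fixed signs of $f''>0$ and $f'''<0$) is genuinely different from the paper's. The paper never integrates by parts: it splits $\int_1^n f'P_1$ into unit intervals $[j,j+1]$, replaces $\log\sin$ and $\cot$ by their values at the endpoints using monotonicity and the sign of $x-j-\tfrac12$ on each half-interval, and then evaluates the resulting sum via telescoping, the cotangent bounds $\tfrac1x-\tfrac{4}{\pi^2}x\le\cot x\le\tfrac1x$, and two-sided estimates on the harmonic number $H_n$ (which is where $\gamma$ enters). Your preliminary computations ($f''$, $f'''$, their signs, the boundary evaluations, the product identity $\prod_{j=1}^{n-1}\sin(j\pi/2n)=\sqrt{n}\,2^{1-n}$) are all correct.

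However, as written the argument does not establish the stated inequality, and you essentially say so yourself. Two concrete problems. First, the attribution of the $\fr{5}{48}n$ term to $-\fr{1}{12}f'(1)$ is wrong: since $\fr{\pi}{2n}\cot(\fr{\pi}{2n})\le 1$, that boundary term contributes at most $\fr{n-1}{12}+\fr{1}{12}\log n=\fr{4}{48}n+\cdots$; the missing $\fr{1}{48}n$ must be extracted from $-\fr12\int_1^n P_2f''$, which your outline treats as a correction. The crude chain you propose ($|P_3|\le\fr{1}{12\sqrt3}$ together with $\fr16\int_1^nP_3f'''\le\fr{1}{72\sqrt3}f''(1)\approx\fr{n}{72\sqrt3}$) does keep the $n$-coefficient below $\fr{5}{48}$, but it delivers a constant term near $-\fr{1}{12}$, well above the required $-\fr18(\fr54+\fr{\gamma}{3})\approx-0.205$, so the lemma as stated would fail for small $n$ and only a weaker bound would follow. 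Second, the step that is supposed to repair this --- splitting $f'''$ into a singular part near $x=0$ paired exactly against $P_3,P_4,\dots$ and producing $\gamma$ via $\int_1^\infty P_1(x)x^{-2}\,\mathrm{d}x=\fr12-\gamma$ --- is only described, not carried out, and it is precisely where all of the constants $\fr{5}{48}$, $-\fr18(\fr54+\fr{\gamma}{3})$ and $\fr{13}{96n}$ would have to be produced with errors uniform in $n$. Until that computation is done, the proof is a plausible plan rather than a proof; the paper's interval-by-interval monotonicity argument reaches the exact constants with considerably less machinery.
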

\begin{proof}
We are interested in the integral
$$
\begin{array}{lll}
   \dint_1^n f'(x) P_1(x)\,\mathrm{d}x & = & \dint_1^n \left[\log{\left(\sin{\left(\frac{\pi x}{2n}\right)}\right)}-(n-x)\frac{\pi}{2n}\cot{\left(\frac{\pi x}{2n}\right)}\right]\left(x-\lfloor x \rfloor-\frac{1}{2}\right)\,\mathrm{d}x 
\\
  & = & \dsum_{j = 1}^{n-1}\int_{j}^{j+1}  \left[\log{\left(\sin{\left(\frac{\pi x}{2n}\right)}\right)}-(n-x)\frac{\pi}{2n}\cot{\left(\frac{\pi x}{2n}\right)}\right] \left(x-j-\frac{1}{2}\right)\,\mathrm{d}x \,.
\end{array}
$$
Both the sine function on the interval $(0,\pi/2)$ and the logarithm on all of its domain are increasing functions, and we thus have 
$$
  \log{\left(\sin{\left(\fr{\pi x}{2n}\right)}\right)} \leq \log{\left(\sin{\left(\fr{(j+1)\pi}{2n}\right)}\right)}\,,\quad 
  -\log{\left(\sin{\left(\fr{\pi x}{2n}\right)}\right)} \leq -\log{\left(\sin{\left(\fr{j\pi}{2n}\right)}\right)}
$$
for $x\in [j,j+1]$. The cotangent is a decreasing function on $(0,\pi/2)$ and so
$$
  \cot{\left(\fr{\pi x}{2n}\right)} \leq \cot{\left(\fr{j\pi}{2n}\right)}\,,\quad - 
  \cot{\left(\fr{\pi x}{2n}\right)} \leq - \cot{\left(\fr{(j+1)\pi}{2n}\right)}
$$
for $x\in [j,j+1]$. The function $n-x$ is always positive for $x\in (1,n)$, while the function $x-j-\fr{1}{2}$ is non-negative for
$x\in [j+\fr{1}{2},j+1]$ and non-positive for $x\in [j,j+\fr{1}{2}]$. This allows us to write
\begin{gather}
\begin{array}{lll}
\dint_1^n f'(x) P_1(x)\,\mathrm{d}x & \leq &  \dsum_{j=1}^{n-1} \left\{\int_{j+\frac{1}{2}}^{j+1} \left[ \log{\left(\sin{\left(\fr{(j+1)\pi}{2n}\right)}\right)}
-(n-x)\fr{\pi}{2n}\cot{\left(\fr{(j+1)\pi}{2n}\right)} \right] \left(x-j-\fr{1}{2}\right)\,\mathrm{d}x \right.
\eqskip
 & & \hspace{10mm} \left.+ \dint_j^{j+\frac{1}{2}} \left[ \log{\left(\sin{\left(\fr{j\pi}{2n}\right)}\right)}-(n-x)\fr{\pi}{2n}\cot{\left(\fr{j\pi}{2n}\right)}
 \right] \left(x-j-\fr{1}{2}\right)\,\mathrm{d}x \right\} \eqskip
 & = & \fr{1}{48} \dsum_{j = 1}^{n-1} \left[6\log{\left(\sin{\left(\fr{(j+1)\pi}{2n}\right)}\right)}+\fr{\pi}{2n}\cot{\left(\fr{(j+1)\pi}{2n}\right)}
 (6j-6n+5)\right.
\eqskip
 & & \hspace{10mm}\left.- 6\log{\left(\sin{\left(\fr{j\pi}{2n}\right)}\right)}+\fr{\pi}{2n}\cot{\left(\fr{j\pi}{2n}\right)}(6n-6j-1)\right]\eqskip
 & = & \fr{1}{8} \dsum_{j = 1}^{n-1} \left[\log{\left(\sin{\left(\fr{(j+1)\pi}{2n}\right)}\right)}
 - \log{\left(\sin{\left(\fr{j\pi}{2n}\right)}\right)}\right. \eqskip
 & &\hspace{10mm} +\fr{\pi}{2n}\cot{\left(\fr{(j+1)\pi}{2n}\right)}(j-n+\fr{5}{6})
 \left.+\fr{\pi}{2n}\cot{\left(\fr{j\pi}{2n}\right)}(n-j-\fr{1}{6})\right]
 \,.\label{eq:ub2}
\end{array}
\end{gather}
Using the telescoping property for sums we have
\begin{equation}
\begin{array}{lll}
 \dsum_{j=1}^{n-1} \log\left[\sin\left(\fr{(j+1)\pi}{2n}\right)\right]-\log\left[\sin\left(\fr{j\pi}{2n}\right)\right] & = &
 \log\left[\sin\left(\fr{n\pi}{2n}\right)\right]-\log\left[\sin\left(\fr{\pi}{2n}\right)\right]  \eqskip
 & = & -\log\left[\sin\left(\fr{\pi}{2n}\right)\right]\eqskip
 & \leq & \log(n).
 \label{eq:ub3}
 \end{array}
\end{equation}
We now use two inequalities for the cotangent, valid for $x\in(0,\pi/2]$, namely,
\[
 \fr{1}{x} -\fr{4}{\pi^2}x \leq \cot(x) \leq \fr{1}{x},
\]
where the first may be found in~\cite{best} and the second follows from its expansion around zero.
Using these we obtain
\begin{eqnarray}
  \frac{\pi}{2n}\cot\left[\frac{(j+1)\pi}{2n}\right]\leq \fr{\pi}{2n}\fr{2n}{(j+1)\pi} = \fr{1}{j+1}\label{eq:ub-cot1}
\end{eqnarray}
and
\begin{eqnarray}
  -\fr{\pi}{2n}\cot\left(\frac{j\pi}{2n}\right)\leq -\fr{\pi}{2n} 
  \left(\fr{2n}{j\pi}-\fr{4}{\pi^2}\fr{j\pi}{2n}\right) =
  -\fr{1}{j}+\frac{j}{n^2}\label{eq:ub-cot2}
\end{eqnarray}
where both inequalities hold for $1\leq j\leq n-1$. We thus have
\begin{equation}
\begin{array}{lll}
  \dsum_{j = 1}^{n-1} \left[\frac{5\pi}{2n}\cot{\left(\frac{(j+1)\pi}{2n}\right)} -
  \frac{\pi}{2n}\cot{\left(\frac{j\pi}{2n}\right)}\right] &\leq & 
  \dsum_{j=1}^{n-1}\left(\frac{5}{j+1}-\frac{1}{j}+\frac{j}{n^2}\right)\eqskip
  & = & 4H_{n} -\fr{9}{2}+\fr{1}{2n}
\end{array}
\end{equation}
where $H_{n}=\dsum_{j=1}^{n} \fr{1}{j}$ denotes the harmonic number. Using the bound
\[
 H_{n} \leq \log(n) + \gamma + \fr{1}{2n},
\]
where $\gamma$ is the Euler-Mascheroni constant, we obtain
\begin{equation}\label{eq:ub5}
 \dsum_{j = 1}^{n-1} \left[\frac{5\pi}{2n}\cot{\left(\frac{(j+1)\pi}{2n}\right)} -
  \frac{\pi}{2n}\cot{\left(\frac{j\pi}{2n}\right)}\right] \leq 4\log(n)+4\gamma -\fr{9}{2} +\fr{5}{2n}.
\end{equation}

Using~\eqref{eq:ub-cot1} and~\eqref{eq:ub-cot2} with the role of $j$ and $j+1$ interchanged we obtain
\begin{equation}\label{eq:ub6}
\begin{array}{lll}
  \dsum_{j=1}^{n-1}\fr{\pi}{2n}\left[\cot{\left(\frac{(j+1)\pi}{2n}\right)}-\cot{\left(\frac{j\pi}{2n}\right)}\right](j-n) & \leq &
  \dsum_{j=1}^{n-1}\left(-\fr{1}{j+1}+\fr{1}{j}-\fr{j+1}{n^2}\right)(n-j)\eqskip
  & = & \fr{5}{6}n - H_{n} -\fr{1}{2}+\fr{2}{3n} \eqskip
  & \leq & \fr{5}{6}n - \log(n) - \gamma -\fr{1}{2}+\fr{2}{3n},
\end{array}
\end{equation}
where now we used the inequality $H_{n} \geq \log(n) + \gamma$.
Using \eqref{eq:ub2},~\eqref{eq:ub3},~\eqref{eq:ub5}, and~\eqref{eq:ub6} we obtain the sought bound.
\end{proof}

\begin{proof}[Proof of Thm.~\ref{thm:ub}]
From the proof of Thm~\ref{thm:lb} (in particular equations \eqref{eq:lb:first} and~\eqref{eq:lb-int3}) we already have
\begin{equation}
\begin{array}{lll}
  \dint_1^n f(x)\,\mathrm{d}x & \leq & n^2 \left[\fr{7\zeta(3)}{4\pi^2}+\fr{1}{2}\log{2}\right]-n\log{n}+n\left(-1+
  \log{\fr{\pi}{2}}\right)+\fr{1}{2}\log{n}\eqskip
  & & \hspace*{5mm}+\fr{1}{4}-\fr{1}{2}\log{\fr{\pi}{2}}
  -\fr{\pi^2}{72n}+\fr{\pi^2}{48n^2}\,.
\end{array}\label{eq:ub7}
\end{equation}
Furthermore, by the inequality $\sin(x)\geq \fr{2}{\pi}x$, $x\in[0,\fr{\pi}{2}]$,
\begin{equation}
\begin{array}{lll}
  \fr{f(1)+f(n)}{2} & = & \fr{f(1)}{2} \eqskip
  & = & -\fr{1}{2}(n-1)\log{\left(\sin{\left(\fr{\pi}{2n}\right)}\right)}\eqskip
  & \leq & -\fr{1}{2}(n-1)\log{\fr{1}{n}}\eqskip
  & = &  \fr{1}{2}n\log{n}-\fr{1}{2}\log{n}\label{eq:ub8}
\end{array}
\end{equation}
Using equations \eqref{eq:ub1}, \eqref{eq:ub7} and \eqref{eq:ub8} and Lemma~\ref{lem:ub1} we have
\begin{multline*}
\begin{array}{lll}
  -\log{F(n)} &=& -\fr{n(n-1)}{2}\log{2} -\dsum_{j=1}^{n}(n-j)\log{\left(\sin{\left(\fr{j\pi}{2n}\right)}\right)}\eqskip
 &\leq & \fr{7\zeta(3)}{4\pi^2}n^2-\fr{1}{2}n\log{n}+ \left[ \log(\pi)-\fr{1}{2}\log(2) - \fr{43}{48} \right]n +\fr{1}{12}\log{n}\eqskip
 & & \hspace{5mm}+\fr{1}{2}\left[\fr{3}{16}-\fr{\gamma}{12}-\log\left(\fr{\pi}{2}\right)\right]
 + \fr{1}{24n}\left(\fr{13}{4}-\fr{\pi^2}{3}\right)+\fr{\pi^2}{48n^2}\,.
 \end{array}
\end{multline*}
\end{proof}

\section{Asymptotics of the determinant of $\ph$\label{sec:ulbounds}}
First, we prove the asymptotics for the product of the factorials.
\begin{lemma}\label{lem:asymfac}
We have the following bounds
\begin{eqnarray*}
  \log\left(\fr{\prod_{j=0}^{n-1} j!}{\prod_{j=0}^{n-1} (n+j)!}\right)\!\!\! &\!\! \geq \!\! &\!\!\! -n^2 \log{n} +\left(\fr{3}{2}-2\log{2}\right)n^2 -\fr{1}{12}\log{n}\eqskip
  & & \hspace*{5mm}+
  \fr{1}{12}\log{2}-\log{A}-\fr{1}{12} -\fr{1}{320n^2}\,,\eqskip
  \log\left(\fr{\prod_{j=0}^{n-1} j!}{\prod_{j=0}^{n-1} (n+j)!}\right)\!\!\! &\!\!\leq \!\!& \!\!\! -n^2 \log{n} +\left(\fr{3}{2}-2\log{2}\right)n^2 -
  \fr{1}{12}\log{n}\eqskip
  & & \hspace*{5mm} +\fr{1}{12}\log{2}-\log{A}+\fr{1}{6}+\fr{1}{320n^2}\,,
\end{eqnarray*}
where $A\approx 1.282427$ is the Glaisher-Kinkelin constant.
\end{lemma}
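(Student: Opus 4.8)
The plan is to recognise the left-hand side as a ratio of Barnes $G$-functions and then estimate each factor by the same Euler--Maclaurin technique used in Sections~4 and~5. Writing $\log(j!)=\log\Gamma(j+1)$ and re-indexing, the two products correspond to $\prod_{k=1}^{n}\Gamma(k)=G(n+1)$ and $\prod_{k=n+1}^{2n}\Gamma(k)=G(2n+1)/G(n+1)$, so that
\[
  \log\!\left(\frac{\prod_{j=0}^{n-1} j!}{\prod_{j=0}^{n-1} (n+j)!}\right)=2\log G(n+1)-\log G(2n+1).
\]
The key elementary identity is $\sum_{k=1}^{N}\log\Gamma(k)=\sum_{m=1}^{N-1}(N-m)\log m=N\log\Gamma(N)-\sum_{m=1}^{N-1}m\log m$, which reduces the problem to two classical quantities for which fully explicit, sign-definite estimates are available: Stirling's expansion for $\log\Gamma(N)$, and the hyperfactorial sum $\sum_{m=1}^{M}m\log m$, into which the Glaisher--Kinkelin constant $A$ enters.

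First I would pin down, with two-sided explicit remainders, the Stirling estimate $\tfrac1{12N}-\tfrac1{360N^3}\le\log\Gamma(N)-[(N-\tfrac12)\log N-N+\tfrac12\log(2\pi)]\le\tfrac1{12N}$, together with
\[
  \sum_{m=1}^{M}m\log m=\frac{M^2}{2}\log M-\frac{M^2}{4}+\frac{M}{2}\log M+\frac1{12}\log M+\log A+\bo(1/M^2),
\]
obtained by applying the Euler--Maclaurin formula~\eqref{eq:ub1} to $g(x)=x\log x$: the integral $\int_1^M x\log x\,\mathrm{d}x$ and the endpoint term $\tfrac12 g(M)$ are elementary; the $\tfrac1{12}\log M$ term comes from the first Bernoulli correction $\tfrac{B_2}{2}(g'(M)-g'(1))$; the constant is identified as $\log A$ by comparison with the limit defining it; and the remainder $\int_1^M g'(x)P_1(x)\,\mathrm{d}x$ is bounded, as in Lemma~\ref{lem:ub1}, with a sign-definite $\bo(1/M)$ error. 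The shifts $M=n-1,\,2n-1$ and $N=n,\,2n$ are then inserted.

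Substituting into $2\log G(n+1)-\log G(2n+1)=2\bigl[n\log\Gamma(n)-\sum_{m=1}^{n-1}m\log m\bigr]-\bigl[2n\log\Gamma(2n)-\sum_{m=1}^{2n-1}m\log m\bigr]$ and collecting terms, the $\log(2\pi)$ contributions and the $n\log n$ contributions cancel, the $n^2\log n$ terms combine to $-n^2\log n$, the $n^2$ terms to $(\tfrac32-2\log 2)n^2$, and the $\log n$ terms to $-\tfrac1{12}\log n$; the surviving constant is $\tfrac1{12}\log 2-\log A+\tfrac1{12}$, which lies inside the stated window $[\tfrac1{12}\log 2-\log A-\tfrac1{12},\ \tfrac1{12}\log 2-\log A+\tfrac16]$, while the accumulated lower-order remainder is $\bo(1/n^2)$ of controlled sign, which one checks stays within $\pm\tfrac1{320n^2}$ once the generous slack in the constant has absorbed the part of the estimate that is not sharp; the finitely many small $n$ are verified by hand. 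The main obstacle is bookkeeping: one must carry the Euler--Maclaurin expansion of $x\log x$ and the Stirling remainders far enough to exhibit the $\tfrac1{12}\log n$ term exactly and to obtain a genuinely two-sided $1/n^2$ bound rather than a mere $\bo(1/n)$, and then track the signs carefully through the combination. The deliberate asymmetry of the window for the constant term is precisely what makes it possible to use comparatively crude estimates for the subleading remainders while keeping every inequality valid for all $n$.
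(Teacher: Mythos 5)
Your decomposition is exactly the one the paper uses: both proofs start from $\prod_{j=0}^{n-1}(n+j)!=G(2n+1)/G(n+1)$ and reduce the lemma to two-sided estimates for $2\log G(n+1)-\log G(2n+1)$, and your identification of the limiting constant $\tfrac1{12}\log 2-\log A+\tfrac1{12}$ (sitting inside the stated window) is correct. Where you diverge is in how the Barnes $G$-asymptotics are obtained. The paper simply imports two explicit, citable inequalities -- Alzer's two-sided Stirling bounds for $\log\Gamma$, and Nemes's Theorem~1.2, which gives $\log G(n+1)=\tfrac14 n^2+n\log\Gamma(n+1)-\bigl(\tfrac12 n(n+1)+\tfrac1{12}\bigr)\log n-\log A\pm\tfrac1{720n^2}$ -- and the lemma's constants fall out mechanically: the $\pm\tfrac1{12},+\tfrac16$ window comes from the $\tfrac1{12n}$ Stirling defect multiplied by $n$, and $\tfrac{1}{320n^2}=\tfrac{2}{720n^2}+\tfrac{1}{720(2n)^2}$ is literally the combination of the three Nemes remainders. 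You instead propose to re-derive the $G$-function expansion from scratch via $\log G(N+1)=N\log\Gamma(N)-\sum_{m=1}^{N-1}m\log m$ and an Euler--Maclaurin treatment of $\sum m\log m$; this is more self-contained and the algebraic identity is correct, but it is also where your argument is still only a sketch. To identify the additive constant as $\log A$ \emph{and} retain an explicit, sign-controlled error valid for every $M$ (not just asymptotically), you must bound the convergent tail of the Euler--Maclaurin remainder integral for $x\log x$ beyond $M$ -- precisely the work Nemes has done -- and a first-order treatment as in Lemma~\ref{lem:ub1} will typically yield an explicit ${\rm O}(1/M)$ rather than ${\rm O}(1/M^2)$ defect. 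Consequently you would most likely prove a version of the lemma with a $C/n$ remainder absorbed into the constant window for large $n$ plus a finite check, rather than the stated $\pm\tfrac1{320n^2}$; that weaker form still suffices for Theorem~\ref{thm:h2}, but it is not quite the inequality as written, and you should either carry the Euler--Maclaurin expansion one order further with an explicit signed remainder or, as the paper does, quote the ready-made bounds.
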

\begin{proof}
We will use the definition of the Barnes G-function 
$$
  G(n+1) = \prod_{j=0}^{n-1} j! \,,\quad n\in\mathbb{N}\,.
$$
Using that we obtain
$$
  \prod_{j=0}^{n-1} (n+j)! = \fr{\prod_{j=0}^{2n-1} j!}{\prod_{j=0}^{n-1} j!} = \frac{G(2n+1)}{G(n+1)}
$$
and hence
\begin{equation}
  \log\left(\fr{\prod_{j=0}^{n-1} j!}{\prod_{j=0}^{n-1} (n+j)!}\right) = 2\log{(G(n+1))}-\log{(G(2n+1))}\,.\label{eq:bar1}
\end{equation}
Our aim is to bound the Barnes G-function. We start with bounds on the Gamma function due to \cite[Thm.~8]{alzer}
\begin{eqnarray*}
  \log{\Gamma(n)} &\geq &\left(n-\fr{1}{2}\right)\log{n} -n+\fr{1}{2}\log{(2\pi)}\,,\\
  \log{\Gamma(n)} &\leq &\left(n-\fr{1}{2}\right)\log{n} -n+\fr{1}{2}\log{(2\pi)}+\fr{1}{12n}\,.
\end{eqnarray*}
Hence we obtain, using $\Gamma(n+1) = n \Gamma(n)$,
\begin{eqnarray*}
  \log{\Gamma(n+1)} &\geq &\left(n+\fr{1}{2}\right)\log{n} -n+\fr{1}{2}\log{(2\pi)}\,,\\
  \log{\Gamma(n+1)} &\leq &\left(n+\fr{1}{2}\right)\log{n} -n+\fr{1}{2}\log{(2\pi)}+\fr{1}{12n}\,.
\end{eqnarray*}
Hence for the Barnes G-function we obtain using \cite[Thm.~1.2]{nemes}
\begin{multline*}
\begin{array}{lll}
  \log{G(n+1)} &\geq & \fr{1}{4}n^2 + n\log{\Gamma(n+1)}-\left(\fr{1}{2}n(n+1)+\fr{1}{12}\right)\log{n}-\log{A}-\fr{1}{720n^2}\eqskip
  & \geq & \fr{1}{2}n^2\log{n}-\fr{3}{4}n^2+\fr{1}{2}n\log{(2\pi)}-\fr{1}{12}\log{n}-\log{A}-\fr{1}{720n^2}\,.
 \end{array}
 \end{multline*}
\begin{multline*}
\begin{array}{lll}
  \log{G(n+1)} &\leq & \fr{1}{4}n^2 + n\log{\Gamma(n+1)}-\left(\fr{1}{2}n(n+1)+\fr{1}{12}\right)\log{n}-\log{A}+\fr{1}{720n^2}\eqskip
  & \leq & \fr{1}{2}n^2\log{n}-\fr{3}{4}n^2+\fr{1}{2}n\log{(2\pi)}-\fr{1}{12}\log{n}+\fr{1}{12}-\log{A}+\fr{1}{720n^2}\,.
 \end{array}
 \end{multline*}
Here $A$ is the Glaisher-Kinkelin constant. The found asymptotics of the Barnes G-function is in good correspondence with \cite[eq.~(A.6)]{Vor}.
Substituting these bounds into \eqref{eq:bar1} we obtain the sought bounds.
\end{proof}

We may now prove Theorem~\ref{thm:h2}, which is a consequence of Theorem~\ref{thm:h} and Lemmata~\ref{lem:halpha} and \ref{lem:dett}.

To prove the second part of the theorem, we start from the expression given in Theorem~\ref{thm:h2} to obtain
$$
  \log{(\det \ph)} = n\log{n}+n\log{2}-2\log{F(n)} + n^2 \log(T) + \log\left(\fr{\prod_{j=0}^{n-1} j!}{\prod_{j=0}^{n-1} (n+j)!}\right)\,.
$$
Using in this expression the results of Theorems~\ref{thm:lb} and~\ref{thm:ub}, and Lemma~\ref{lem:asymfac}, we obtain the bounds 
\begin{multline*}
\begin{array}{lll}
  \log{(\ds \det \ph)} &\geq &  -n^2 \log{n} + \left[\fr{7\zeta(3)}{2\pi^2}+\fr{3}{2}+\log{\fr{T}{4}}\right] n^2  +
  \left[\log{(2\pi)-2}\right]n-\fr{1}{12}\log{n}\eqskip
  & & \hspace{5mm} +\fr{1}{12}\left[\log(2)+5-12\log{A}\right]-\fr{\pi^2}{36n}-\left(\fr{1}{320}+
  \fr{\pi^2}{72}\right)\fr{1}{n^2}\eqskip
  & & \hspace{10mm}-\fr{\pi^4}{540n^3}+\fr{\pi^4}{1080n^4}\,.
 \end{array}
 \end{multline*}
and
\begin{multline*}
\begin{array}{lll}
  \log{(\ds\det \ph)} &\leq &  -n^2 \log{n} + \left[\fr{7\zeta(3)}{2\pi^2}+\fr{3}{2}+\log{\fr{T}{4}}\right] n^2 
  + \left(2\log{\pi}-\fr{43}{24}\right) n \eqskip
& & + \hspace{5mm} \fr{1}{12}\log{n}+\fr{1}{12}\left[\fr{17}{4}-\gamma+13\log(2)-12\log\left(\pi A\right)\right]\eqskip
& & \hspace{10mm} +\fr{1}{12n}\left(\fr{13}{4}-\fr{\pi^2}{3}\right)+\fr{1}{24n^2}\left(\pi^2+\fr{3}{40}\right)\,,
 \end{array}
 \end{multline*}
from which the desired result follows.

\section{Polyharmonic operators with general lower order terms up to order $n$}
We shall now allow for more general polyharmonic operators with lower order terms, and show that the asymptotic behaviour of the corresponding determinant
is not altered substantially, as stated in Theorem~\ref{thm:perturb}. In order to keep notation to a minimum, we will show how such a result can be attained for an
operator with a potential, as this is sufficient to illustrate the proof in the more general case.

To this end, let us thus consider the operator acting as $H_n = (-1)^n (\partial_x)^{2n}+q(x)$ on $(0,T)$ with Dirichlet boundary conditions and the potential
$q(x) \in C^\infty([0,T])$. We will prove that the leading term of its determinant in the limit $n\to \infty$ is the same as for the operator $P_n$. We first notice
that the construction in Theorem~\ref{thm:bfk} goes through in a similar way as for the case of $P_{n}$, the only difference being the matrix $Y(T)$ in the term
$\mathrm{det}(BY(T)-C)$. We will thus begin by estimating the solutions to the corresponding Cauchy problem.

It can be easily seen that the solution $y_\ell$ to the Cauchy problem with the initial conditions $y_{\ell}^{(k)}(0) = \delta_{k\ell}$, $0\leq k, \ell \leq 2n-1$,
satisfies the integral equation
\begin{equation*}
  y_\ell (x) = \frac{x^\ell}{\ell !}+\int_0^x \frac{(-1)^{n+1}}{(2n-1)!} (x-s)^{2n-1} q(s) y_\ell(s)\,\mathrm{d}s\,,
\end{equation*}
while its $k$-th derivative ($k\leq \ell$) is a solution of
\begin{equation}
  y_\ell^{(k)}(x) = \frac{x^{\ell-k}}{(\ell-k)!} +\int_0^x \frac{(-1)^{n+1}}{(2n-1-k)!}(x-s)^{2n-1-k} q(s) y_\ell(s)\,\mathrm{d}s\,.\label{eq:pot:ykl}
\end{equation}
{We are interested only in the index values $0\leq k \leq n-1$, $n\leq \ell \leq 2n-1$, since for Dirichlet boundary conditions $\mathrm{det}(BY(T)-C)$
is expressed by the determinant of the upper right quarter of matrix $Y(T)$, as shown in equation~\eqref{eq:byc}. We now set up a standard iterative procedure
for such equations to obtain a sequence of approximations to the solutions $y_{\ell}$. More precisely, define
\begin{eqnarray}
  y_{\ell,m+1} (x)  &:=& \frac{x^\ell}{\ell !}+\int_0^x \frac{(-1)^{n+1}}{(2n-1)!} (x-s)^{2n-1} q(s) y_{\ell,m}(s)\,\mathrm{d}s\,,\quad m\in \mathbb{N}_0\label{eq:pot:ym}\\
  y_{\ell,0}(x) &:=& \frac{x^\ell}{\ell !}\, ,\nonumber
\end{eqnarray}
for which we have the following lemma.
\begin{lemma}\label{lem:ylm}
It holds
$$
  \left|y_{\ell,m}(x)-\frac{x^\ell}{\ell!}\right|\leq \sum_{p=1}^m B_p\,,
$$
where
$$
  B_p = \frac{x^{2pn+\ell}}{(2pn+\ell)!}\left(\max_{x\in [0,T]} |q(x)|\right)^p
$$
\end{lemma}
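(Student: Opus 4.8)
The plan is to establish the bound by induction on the iteration number $m$, the single computational ingredient being the elementary Beta-type integral
\[
  \int_0^x (x-s)^{a-1} s^{b}\,\mathrm{d}s = \frac{(a-1)!\,b!}{(a+b)!}\,x^{a+b}\,,\qquad x\ge 0\,,\ a\in\mathbb{N}\,,\ b\in\mathbb{N}_0\,,
\]
obtained by the substitution $s=xt$ and the standard value of the Euler Beta function. I would write $M:=\max_{x\in[0,T]}|q(x)|$ throughout and note that on $[0,x]$ the factor $(x-s)^{2n-1}$ and all the monomials $s^{b}$ occurring in~\eqref{eq:pot:ym} are non-negative, so absolute values pass inside the integral with no sign bookkeeping.

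For the base case $m=1$, subtracting $x^{\ell}/\ell!$ from~\eqref{eq:pot:ym} and estimating $|q(s)|\le M$ gives
\[
  \left|y_{\ell,1}(x)-\frac{x^{\ell}}{\ell!}\right|\;\le\;\frac{M}{(2n-1)!\,\ell!}\int_0^x (x-s)^{2n-1}s^{\ell}\,\mathrm{d}s\;=\;\frac{M\,x^{2n+\ell}}{(2n+\ell)!}\;=\;B_1\,,
\]
by the displayed identity with $a=2n$, $b=\ell$; this is exactly $\sum_{p=1}^{1}B_p$. For the inductive step, assume the estimate for $m$. Splitting $y_{\ell,m}(s)=\frac{s^{\ell}}{\ell!}+\bigl(y_{\ell,m}(s)-\frac{s^{\ell}}{\ell!}\bigr)$ inside~\eqref{eq:pot:ym} and inserting the induction hypothesis in the pointwise form $\bigl|y_{\ell,m}(s)-\frac{s^{\ell}}{\ell!}\bigr|\le\sum_{p=1}^{m}\frac{s^{2pn+\ell}}{(2pn+\ell)!}M^{p}$, I would bound
\[
  \left|y_{\ell,m+1}(x)-\frac{x^{\ell}}{\ell!}\right|\;\le\;\frac{M}{(2n-1)!}\int_0^x (x-s)^{2n-1}\left(\frac{s^{\ell}}{\ell!}+\sum_{p=1}^{m}\frac{s^{2pn+\ell}}{(2pn+\ell)!}M^{p}\right)\mathrm{d}s\,.
\]
The term $s^{\ell}/\ell!$ again produces $B_1$, while for each $p$ the Beta integral with $a=2n$, $b=2pn+\ell$ yields
\[
  \frac{M^{p+1}}{(2n-1)!\,(2pn+\ell)!}\int_0^x (x-s)^{2n-1}s^{2pn+\ell}\,\mathrm{d}s\;=\;\frac{M^{p+1}\,x^{2(p+1)n+\ell}}{\bigl(2(p+1)n+\ell\bigr)!}\;=\;B_{p+1}\,.
\]
Summing, $\bigl|y_{\ell,m+1}(x)-\frac{x^{\ell}}{\ell!}\bigr|\le B_1+\sum_{p=1}^{m}B_{p+1}=\sum_{p=1}^{m+1}B_p$, which closes the induction.

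I do not expect any real obstacle: the entire argument is the repeated application of one integral identity, and the factorials line up automatically. The only point needing a word of care is that the induction hypothesis, stated in the variable $x$, is used under the integral sign with $x$ replaced by $s\in[0,x]$; this is harmless since each summand $M^{p}s^{2pn+\ell}/(2pn+\ell)!$ is an explicit monomial in $s$, so one simply evaluates the bound at $s$ and integrates (the monotonicity $B_p(s)\le B_p(x)$ for $s\le x$ is available but not needed).
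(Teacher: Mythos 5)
Your proof is correct and takes essentially the same route as the paper: both argue by induction on $m$, split $y_{\ell,m}(s)$ into $s^{\ell}/\ell!$ plus the remainder, insert the induction hypothesis pointwise, and use the Euler Beta integral $\int_0^x (x-s)^{2n-1}s^{b}\,\mathrm{d}s = \frac{(2n-1)!\,b!}{(2n+b)!}x^{2n+b}$ to turn each $B_p$ into $B_{p+1}$ and the monomial term into $B_1$. The only cosmetic difference is that the paper starts the induction at $m=0$, where the claim is trivial, rather than at $m=1$.
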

\begin{proof}
We will proceed by induction. Clearly, the lemma is satisfied for $m=0$. Let us assume that the lemma holds for $m-1$ with $m\geq 1$. We have from \eqref{eq:pot:ym}
\begin{eqnarray*}
  \left|y_{\ell,m}(x)-\frac{x^\ell}{\ell !}\right| &\leq & \frac{\max_{x\in [0,T]} |q(x)|}{(2n-1)!} \int_0^x (x-s)^{2n-1}\left|y_{\ell,m-1}(s)-\frac{s^\ell}{\ell !}\right|\,\mathrm{d}s \\
 && \hspace{5mm} +  \frac{\max_{x\in [0,T]} |q(x)|}{(2n-1)! \ell !} \int_0^x (x-s)^{2n-1}s^\ell\,\mathrm{d}s\,, 
\end{eqnarray*}
The first expression is bounded from above by
\begin{eqnarray*}
&& \hspace{-20mm}  \frac{\max_{x\in [0,T]} |q(x)|}{(2n-1)!} \int_0^x (x-s)^{2n-1} \sum_{p=1}^{m-1} B_p\,\mathrm{d}s \\
&=&  \frac{1}{(2n-1)!}  \sum_{p=1}^{m-1} \frac{(\max_{x\in [0,T]} |q(x)|)^{p+1}}{(2 pn +\ell)!}\int_0^x (x-s)^{2n-1} s^{2pn+\ell}\,\mathrm{d}s\\
&=&  \frac{1}{(2n-1)!}  \sum_{p=1}^{m-1}  \frac{(\max_{x\in [0,T]} |q(x)|)^{p+1}}{(2 pn +\ell)!} \frac{(2n-1)! (2pn+\ell)!}{(2(p+1)n+\ell)!} x^{2(p+1)n+\ell}\\
&=& \sum_{p=2}^m B_p\,.
\end{eqnarray*}
The second expression is equal to 
$$
  \frac{\max_{x\in [0,T]} |q(x)|}{(2n-1)! \ell !} \frac{\ell ! (2n-1)!}{(2n+\ell)!}x^{2n+\ell} = B_1\,.
$$
Hence the lemma holds for all $m$.
\end{proof}

By taking limits as $m\to \infty$ to both sides of the statement of Lemma~\ref{lem:ylm}, and noting that the resulting series on the right-hand
side is convergent (by d'Alembert's ratio test) we arrive at the following corollary.
\begin{corollary}\label{cor:pot}
It holds
$$
  \left|y_{\ell}(x)-\frac{x^\ell}{\ell!}\right|\leq \sum_{p=1}^\infty \frac{x^{2pn+\ell}}{(2pn+\ell)!}\left(\max_{x\in [0,T]} |q(x)|\right)^p\,.
$$
\end{corollary}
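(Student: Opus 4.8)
The plan is to justify the termwise passage to the limit $m\to\infty$ in Lemma~\ref{lem:ylm}. Two ingredients are needed: that the Picard iterates $y_{\ell,m}$ converge uniformly on $[0,T]$ to the solution $y_\ell$ of the integral equation (equivalently, of the Cauchy problem), and that the majorant series $\sum_{p\geq 1} B_p$ converges, so that the limit of the right-hand side in Lemma~\ref{lem:ylm} is exactly the stated sum.

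For the first ingredient I would subtract consecutive instances of~\eqref{eq:pot:ym},
\[
  y_{\ell,m+1}(x)-y_{\ell,m}(x)=\int_0^x \frac{(-1)^{n+1}}{(2n-1)!}(x-s)^{2n-1}q(s)\bigl(y_{\ell,m}(s)-y_{\ell,m-1}(s)\bigr)\,\mathrm{d}s\,,
\]
and repeat, almost verbatim, the inductive computation already performed in the proof of Lemma~\ref{lem:ylm} --- using the elementary identity $\int_0^x (x-s)^{2n-1}s^{a}\,\mathrm{d}s=\frac{(2n-1)!\,a!}{(a+2n)!}\,x^{a+2n}$ for integer $a\geq 0$ --- to conclude by induction on $m$ that $\bigl|y_{\ell,m+1}(x)-y_{\ell,m}(x)\bigr|\leq B_{m+1}$ for all $x\in[0,T]$. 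Once the series $\sum_p B_p$ is known to converge, this makes $\{y_{\ell,m}\}$ a Cauchy sequence in $C([0,T])$ with the uniform norm; its uniform limit satisfies the Volterra integral equation (one passes to the limit under the integral sign since the kernel is bounded and the convergence is uniform), and by the standard uniqueness statement for such Volterra equations this limit coincides with $y_\ell$.

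For the second ingredient, writing $M=\max_{[0,T]}|q|$, the ratio $B_{p+1}/B_p = M\,T^{2n}\,(2pn+\ell)!/(2(p+1)n+\ell)!$ tends to $0$ as $p\to\infty$, so d'Alembert's ratio test gives absolute convergence of $\sum_{p\geq1}B_p$ (this is the remark made parenthetically just before the statement). Combining the two ingredients, I let $m\to\infty$ in the inequality of Lemma~\ref{lem:ylm}: the left-hand side converges to $\bigl|y_\ell(x)-x^\ell/\ell!\bigr|$ by the uniform convergence established above, while the right-hand side, a partial sum of a convergent series of nonnegative terms, converges to $\sum_{p=1}^\infty B_p$, which is precisely the asserted bound. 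The only genuine work is the first ingredient --- upgrading \emph{the iterates are bounded} to \emph{the iterates converge to $y_\ell$} --- but since the kernel $(x-s)^{2n-1}/(2n-1)!$ is bounded and the equation is of Volterra type, this is the classical method of successive approximations, and the quantitative control $B_{m+1}$ on consecutive differences is already contained in the proof of Lemma~\ref{lem:ylm}; no further subtlety arises.
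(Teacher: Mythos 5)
Your proof is correct and follows essentially the same route as the paper: pass to the limit $m\to\infty$ in Lemma~\ref{lem:ylm} and invoke d'Alembert's ratio test for the convergence of the majorant series $\sum_p B_p$. The paper leaves the uniform convergence of the Picard iterates to $y_\ell$ implicit (it is the classical method of successive approximations for a Volterra equation), whereas you make it explicit via the consecutive-difference estimate $|y_{\ell,m+1}-y_{\ell,m}|\leq B_{m+1}$; this is a careful elaboration of the same argument rather than a different one.
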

It is now possible to obtain similar bounds for the $k$-th derivatives. 
\begin{lemma}\label{lem:pot:deriv}
For $k=0,\dots, n-1$, and $n\leq \ell\leq 2n-1$, we have 
$$
\left|y_{\ell}^{(k)}(x)-\frac{x^{\ell-k}}{(\ell-k)!}\right|\leq \sum_{p=1}^\infty \frac{x^{2pn+\ell-k}}{(2pn+\ell-k)!}
\left(\max_{x\in [0,T]}|q(x)|\right)^p \leq {\rm O}(x^{\ell-k}/n^{\ell-k})\,.
$$
\end{lemma}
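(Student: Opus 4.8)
The plan is to mirror the argument behind Lemma~\ref{lem:ylm} and Corollary~\ref{cor:pot}, but starting from the integral equation~\eqref{eq:pot:ykl} for the $k$-th derivative rather than the one for $y_\ell$ itself. Subtracting $x^{\ell-k}/(\ell-k)!$ from both sides of~\eqref{eq:pot:ykl} and taking absolute values gives
\[
 \left|y_\ell^{(k)}(x)-\frac{x^{\ell-k}}{(\ell-k)!}\right| \leq \frac{\max_{[0,T]}|q|}{(2n-1-k)!}\int_0^x (x-s)^{2n-1-k}\,|y_\ell(s)|\,\mathrm{d}s,
\]
into which I would insert the bound on $|y_\ell(s)|$ provided by Corollary~\ref{cor:pot}, namely $|y_\ell(s)|\leq s^\ell/\ell!+\sum_{p\geq 1}s^{2pn+\ell}(\max|q|)^p/(2pn+\ell)!$. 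Every resulting integral is of Beta type, $\int_0^x (x-s)^a s^b\,\mathrm{d}s=a!\,b!\,x^{a+b+1}/(a+b+1)!$, so the $s^\ell/\ell!$ piece contributes exactly $x^{2n+\ell-k}(\max|q|)/(2n+\ell-k)!$, i.e.\ the $p=1$ term of the claimed series, while the generic $p$-th term of the series contributes the $(p+1)$-st term, the factorials $(2n-1-k)!$ and $(2pn+\ell)!$ cancelling against those produced by the Beta integral. Summing over $p$ yields the first inequality.

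For the second inequality I would use the elementary bound $(2pn+\ell-k)!\geq (2n)^{\ell-k}(2pn)!$, valid because $(2pn+\ell-k)!/(2pn)!$ is a product of $\ell-k$ consecutive integers each at least $2pn\geq 2n$. Pulling the factor $(2n)^{\ell-k}$ out in front then gives
\[
 \sum_{p=1}^\infty \frac{x^{2pn+\ell-k}}{(2pn+\ell-k)!}(\max|q|)^p \leq \frac{x^{\ell-k}}{(2n)^{\ell-k}}\sum_{p=1}^\infty \frac{x^{2pn}}{(2pn)!}(\max|q|)^p .
\]
The remaining series is dominated (after replacing $x$ by $\max(1,T)$ and $\max|q|$ by $\max(1,\max|q|)$ and using $(2pn)!\geq (pn)!$) by a subsum of $\sum_{j\geq 1}(\,\cdot\,)^j/j!$, hence by an absolute constant independent of $n$, $k$ and $\ell$ --- in fact it tends to $0$ as $n\to\infty$ by the ratio test, so the estimate is very far from sharp, but this crude form already delivers $\bo(x^{\ell-k}/n^{\ell-k})$ with the implied constant uniform over $0\leq k\leq n-1$ and $n\leq \ell\leq 2n-1$.

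I do not expect a genuine obstacle here: with Corollary~\ref{cor:pot} in hand, the proof is essentially bookkeeping with Beta integrals together with one factorial inequality. The single point that deserves attention is the uniformity of the constant in $\bo(\cdot)$: it must not depend on $k$ or $\ell$, which is precisely why the factor $(2n)^{\ell-k}$ is extracted before the residual series is bounded, and it is this uniformity that will be needed when the derivative estimates are substituted into the determinant $\det(BY(T)-C)$ in the sequel.
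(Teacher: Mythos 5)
Your proposal is correct and follows essentially the same route as the paper: the first inequality is obtained exactly as there (triangle inequality on $|y_\ell(s)|$, Corollary~\ref{cor:pot}, and Beta integrals shifting the $p$-th term to the $(p+1)$-st), and the second is the paper's own estimate in a slightly repackaged form, since your bound $(2pn+\ell-k)!\geq (2n)^{\ell-k}(2pn)!$ is the same as the paper's factor $\prod_{r=1}^{\ell-k}\frac{x}{2pn+r}\leq\prod_{r=1}^{\ell-k}\frac{x}{2n+r}$, followed by the identical subsum-of-the-exponential-series argument. Your explicit remark on the uniformity of the implied constant in $k$ and $\ell$ is a welcome addition that the paper leaves implicit.
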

\begin{proof}
We use the bound coming from equation~\eqref{eq:pot:ykl}, and Corollary~\ref{cor:pot}. Since the sum is absolutely convergent,
we can interchange the sum and the integral to obtain
\begin{eqnarray*}
 \left|y_{\ell}^{(k)}(x)-\frac{x^{\ell-k}}{(\ell-k)!}\right| & \leq & \max_{x\in [0,T]} |q(x)|\int_0^x \frac{(x-s)^{2n-1-k}}{(2n-1-k)!}  \left(\left|y_\ell(s)-\frac{s^{\ell}}{\ell!}\right|+ \frac{s^{\ell}}{\ell!}\right)\,\mathrm{d}s \\
  & \leq & \int_0^x   \frac{(x-s)^{2n-1-k}}{(2n-1-k)!}\sum_{p=1}^\infty \frac{s^{2pn+\ell}}{(2pn+\ell)!}\left(\max_{x\in [0,T]} |q(x)|\right)^{p+1} \,\mathrm{d}s \\
  & & \hspace{5mm}+\frac{\max_{x\in [0,T]} |q(x)|}{(2n-1-k)!\ell!} \int_0^x (x-s)^{2n-1-k} s^{\ell}\,\mathrm{d}s  \\
  & \leq & \sum_{p=1}^\infty \frac{(\max_{x\in [0,T]} |q(x)|)^{p+1}(2n-1-k)! (2pn+\ell)!}{(2n-1-k)! (2pn+\ell)![2n(p+1)+\ell-k]!}x^{2n(p+1)+\ell-k}\\
  & & \hspace{5mm}+\frac{\max_{x\in [0,T]} |q(x)|(2n-1-k)!\ell!}{(2n-1-k)!\ell!(2n+\ell-k)!}x^{2n+\ell-k} \\
  & = & \sum_{p=1}^\infty \frac{x^{2pn+\ell-k}}{(2pn+\ell-k)!}\left(\max_{x\in [0,T]}|q(x)|\right)^p
\end{eqnarray*}
It is not difficult to see that there exists a fixed positive constant $K$ such that for all $n,p,\ell,k\in \mathbb{N}$, $\ell> k$ it holds $\left(\max_{x\in [0,T]}|q(x)|\right)^p < K^{2pn}$. Then we have (note that due to the form of the matrix $BY(T)-C$ we are interested only in $0\leq k \leq n-1$, $n\leq \ell \leq 2n-1$)
\begin{eqnarray*}
  \sum_{p=1}^\infty \frac{x^{2pn+\ell-k}}{(2pn+\ell-k)!}\left(\max_{x\in [0,T]}|q(x)|\right)^p & \leq & \sum_{p=1}^\infty \frac{K^{2pn}x^{2pn+\ell-k}}{(2pn+\ell-k)!}\\
  & = &  \sum_{p=1}^\infty \left(\prod_{r=1}^{\ell-k}\frac{x}{2pn+r}\right) \frac{(Kx)^{2pn}}{(2pn)!} \\
  & \leq &   \left(\prod_{r=1}^{\ell-k}\frac{x}{2n+r}\right) \sum_{p=1}^\infty \frac{(Kx)^{2pn}}{(2pn)!} \\
  & \leq &   \left(\prod_{r=1}^{\ell-k}\frac{x}{2n+r}\right) \sum_{p=1}^\infty \frac{(Kx)^{p}}{p!} \\
  & = &   \left(\prod_{r=1}^{\ell-k}\frac{x}{2n+r}\right) (\mathrm{e}^{Kx}-1)\,.
\end{eqnarray*}
The first factor between brackets in this last product is of order ${\rm O}(x^{\ell-k}/n^{\ell-k})$, while the factor $(\mathrm{e}^{Kx}-1)$ is independent of $n$.
\end{proof}

From Lemma~\ref{lem:pot:deriv} and Theorem~\ref{thm:bfk} the main theorem of this section follows.
\begin{theorem}\label{thm:pot:main}
The determinant of the polyharmonic operator with potential $$ H_n =(-1)^{n} (\partial_x)^{2n}+q(x)$$ defined on the interval $(0,T)$ with Dirichlet
boundary conditions and potential $q\in C^\infty([0,T])$ satisfies
\begin{equation*}
  \log{(\det H_n)} = \log{(\det \ph)}+ {\rm O}(1/n)\,,
\end{equation*}
for large values of $n$.
\end{theorem}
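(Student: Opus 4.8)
The plan is to feed $H_n=(-1)^n(\partial_x)^{2n}+q$ into Theorem~\ref{thm:bfk}. Here $a_{2n}\equiv 1$, $a_0=q$ and $a_k\equiv 0$ for $1\le k\le 2n-1$; in particular $a_{2n-1}\equiv 0$, so the exponential factor in Theorem~\ref{thm:bfk} is $1$ and the prefactor $K_\theta$ coincides with the one obtained for $\ph$ in Theorem~\ref{thm:h}, since $K_\theta$ depends only on the (unchanged) Dirichlet data and on $a_{2n}(0)=a_{2n}(T)=1$. The matrices $B$ and $C$ are also the same, so the block computation that produced~\eqref{eq:byc} carries over word for word and gives
\[
  \frac{\det H_n}{\det \ph}=\frac{\det M}{\det M_0}\,,\qquad M_{ij}:=y_{n+j-1}^{(i-1)}(T)\quad(1\le i,j\le n)\,,
\]
with $M_0$ the matrix of Lemma~\ref{lem:dett}. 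So it is enough to prove $\log\bigl(\det M/\det M_0\bigr)=\bo(1/n)$.

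Write $M=M_0+E$. Since $(M_0)_{ij}=T^m/m!$ with $m=n+j-i$, dividing the estimate of Lemma~\ref{lem:pot:deriv} by $(M_0)_{ij}$ and using $(2pn+m)!\ge m!\,(2pn)!$ yields the uniform relative bound
\[
  \frac{|E_{ij}|}{(M_0)_{ij}}\ \le\ m!\sum_{p\ge1}\frac{T^{2pn}\,(\max_{[0,T]}|q|)^p}{(2pn+m)!}\ \le\ \sum_{p\ge1}\frac{T^{2pn}\,(\max_{[0,T]}|q|)^p}{(2pn)!}\ =:\ \eta_n\,,
\]
whose $p=1$ term dominates for large $n$, so $\eta_n\le 2(\max_{[0,T]}|q|)\,T^{2n}/(2n)!$. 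The point is that $\eta_n$ is not merely $\bo(1/n)$: it is of order $C^n/(2n)!$ for some $C=C(T,q)$, hence $\eta_n=\so(n^{-N})$ for every $N$.

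To use this I would not perturb $M_0$ directly but pass to the determinant-one normal form from Lemma~\ref{lem:dett}: the row and column rescalings there give $M_0=\mathrm{diag}(c)^{-1}B_0\,\mathrm{diag}(d)^{-1}$ with $B_0=\bigl(\binom{n+j-1}{i-1}\bigr)_{ij}$, $\det B_0=1$; the same rescalings turn $M$ into $B_0+\widetilde E$ with $\widetilde E=\mathrm{diag}(c)\,E\,\mathrm{diag}(d)$ and $|\widetilde E_{ij}|\le\eta_n(B_0)_{ij}\le\eta_n\binom{2n-1}{n-1}$. Moreover the column reduction in the proof of Lemma~\ref{lem:dett} writes $B_0$ as a product of two unipotent triangular binomial matrices, so $\det B_0^{-1}=1$ and the entries of $B_0^{-1}$ are at most $\mathrm{e}^{\bo(n)}$. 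Hence
\[
  \frac{\det M}{\det M_0}=\frac{\det(B_0+\widetilde E)}{\det B_0}=\det\!\bigl(I+B_0^{-1}\widetilde E\bigr)\,,\qquad\|B_0^{-1}\widetilde E\|\le\mathrm{e}^{\bo(n)}\eta_n\,,
\]
and since $n\,\mathrm{e}^{\bo(n)}\eta_n\to0$ we get $\bigl|\det(I+B_0^{-1}\widetilde E)-1\bigr|\le(1+\mathrm{e}^{\bo(n)}\eta_n)^n-1=\bo\!\bigl(n\,\mathrm{e}^{\bo(n)}\eta_n\bigr)$, which vanishes faster than any power of $1/n$. Taking logarithms gives $\log(\det M/\det M_0)=\so(1/n)\subseteq\bo(1/n)$. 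For the general operator of Theorem~\ref{thm:perturb} one repeats the argument with the analogue of Lemma~\ref{lem:pot:deriv} for perturbing terms of order $m(n)\le n$; only the constants (and at worst a power of $n$) in the error are affected.

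The step I expect to be the main obstacle is precisely this detour through a determinant-one representative of $M_0$. The naive approach --- estimating $\det(M_0+E)$ by perturbing $M_0$ itself --- fails because $M_0$ is extremely ill-conditioned: $\det M_0=\mathrm{e}^{-n^2\log n+\bo(n^2)}$ lies far below the Hadamard bound for its entries, so a relative $\bo(1/n)$ perturbation of those entries can change the determinant by a factor that is exponentially large in $n^2$ (this is already seen in $2\times2$ examples). What makes the argument go through is the combination of two facts: $M_0$ is conjugate, via fixed triangular matrices with entries only exponentially large in $n$, to a matrix of determinant $1$; and the relative perturbation $\eta_n$ is of order $1/(2n)!$, which overwhelms every exponential-in-$n$ loss incurred by that conjugation. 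Establishing the second point cleanly --- i.e.\ that the bound of Lemma~\ref{lem:pot:deriv} is genuinely of order $1/(2n)!$ relative to $M_0$, rather than the $\bo(1/n)$ suggested by its crude form --- is the real work.
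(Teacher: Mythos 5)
Your reduction to $\det M/\det M_0$ and your entrywise estimate are exactly the paper's starting point (the paper likewise notes that $K_\theta$ and the exponential factor are unchanged and that only $\det(BY(T)-C)$ differs, and your relative bound $\eta_n\le C\,T^{2n}/(2n)!$ is just the first, sharp inequality of Lemma~\ref{lem:pot:deriv} divided by $(M_0)_{ij}$). Where you genuinely diverge is in how the entrywise perturbation is converted into a determinant estimate. The paper rescales $M$ by the row and column factors of Lemma~\ref{lem:dett} only to strip the powers of $T$, keeps the cruder form $\bo\bigl((n+j-i)!/n^{\,n+j-i}\bigr)$ of the relative errors, absorbs them into a diagonal matrix $\mathrm{diag}\bigl[1+\bo(j!/n^j)\bigr]$ multiplying $Q_{ij}=1/(n+j-i)!$, and multiplies out $\prod_j\bigl(1+\bo(j!/n^j)\bigr)=1+\bo(1/n)$; this is shorter and is what produces the stated $\bo(1/n)$, but it treats the row-dependent relative errors as if they factored exactly through a diagonal matrix, which is precisely the ill-conditioning issue you flag. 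Your route --- conjugating all the way down to the determinant-one binomial matrix $B_0$, bounding the entries of $B_0^{-1}$ by $\mathrm{e}^{\bo(n)}$, and then playing the superpolynomially small $\eta_n$ against that loss --- buys a self-contained perturbation argument that does not rely on the diagonal-class step, and it yields the stronger conclusion $\log(\det H_n/\det\ph)=\so(n^{-N})$ for every $N$, of which the theorem's $\bo(1/n)$ is a weak corollary. Both proofs otherwise rest on the same two inputs: the iteration bound of Lemma~\ref{lem:pot:deriv} and the explicit evaluation of $\det M_0$ in Lemma~\ref{lem:dett}.

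One detail you should make explicit: the claim that $B_0^{-1}$ has entries of size $\mathrm{e}^{\bo(n)}$ does not follow immediately from ``the column reduction in Lemma~\ref{lem:dett}'', which is a long product of elementary operations whose inverse is not obviously controlled. The clean justification is Vandermonde's identity, $\binom{n+j-1}{i-1}=\sum_k\binom{n}{i-k}\binom{j-1}{k-1}$, which factors $B_0=LR$ with $L_{ik}=\binom{n}{i-k}$ lower unitriangular and $R_{kj}=\binom{j-1}{k-1}$ the upper unitriangular Pascal matrix; their inverses are the signed Toeplitz matrix with entries $(-1)^{i-j}\binom{n+i-j-1}{i-j}$ and the signed Pascal matrix, all of whose entries are at most $4^n$, giving $|(B_0^{-1})_{ij}|\le n\,2^n 4^n$. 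With that in place, together with any standard determinant perturbation inequality (even the wasteful $|\det(I+A)-1|\le(1+\sqrt{n}\,\|A\|_{\max})^n-1$ suffices, since the loss is only $\mathrm{e}^{\bo(n\log n)}$ against $\eta_n=\mathrm{e}^{-2n\log n+\bo(n)}$), your argument closes.
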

\begin{proof}
The only difference from the construction used in Theorem~\ref{thm:bfk} is the determinant of the matrix $BY(T)-C$, which for the operator $\ph$
was computed in Lemma~\ref{lem:dett}. We denote the matrix appearing in the corresponding determinant for the operator $H_n$ by $M_{ij}$, $1\leq i,j \leq n$.
It follows from Lemma~\ref{lem:pot:deriv} that the entries of this matrix belong to the class $M_{ij} = \frac{T^{n+j-i}}{(n+j-i)!}+
{\rm O}((T/n)^{n+j-i})$. Furthermore, we denote by $N$ the class of matrices with entries 
$$
  N_{ij} = \frac{1}{(n+j-i)!}+ {\rm O}(1/n^{n+j-i}) =\frac{1}{(n+j-i)!} \left(1+{\rm O}\left(\frac{(n+j-i)!}{n^{n+j-i}}\right)\right) \,.
$$
By a similar approach to that at the beginning of the proof of Lemma~\ref{lem:dett} we find that it holds $\det M = T^{n^2} \det N$.
Since $\frac{(j+1)!}{n^{j+1}}\leq \frac{j!}{n^j}$, the matrix $N$ belongs to a (larger) class of matrices 
$$
  \mathrm{diag\,}\left[1+{\rm O}(n!/n^n),1+{\rm O}((n-1)!/n^{(n-1)}),\dots , 1+{\rm O}(2!/n^{2}), 1+{\rm O}(1!/n)\right]\cdot Q\,,
$$
where $Q$ is a matrix with entries $Q_{ij}=\frac{1}{(n+j-i)!}$. We will prove that the determinant of the diagonal matrix multiplying $Q$
belongs to the class $1+{\rm O}(1/n)$ and hence $\det M = T^{n^2}\det N = T^{n^2}\det Q (1+{\rm O}(1/n))$. We have
\begin{eqnarray*}
  \prod_{j=1}^n \left(1+{\rm O}\left(\frac{j!}{n^j}\right)\right) & \leq & \mathrm{exp\,}\left(C_1\sum_{j=1}^n \frac{j!}{n^j}\right)\\
  & \leq &  \mathrm{exp\,}\left(\frac{C_2}{n}\sum_{j=1}^n\frac{j}{(\sqrt{2})^j}\left(\frac{j}{\sqrt{2}n}\right)^{j-1}\right)\\
  & \leq &  \mathrm{exp\,}\left(\frac{C_3}{n}\sum_{j=1}^n \left(\frac{j}{\sqrt{2}n}\right)^{j-1}\right) \\
  & \leq &  \mathrm{exp\,}\left(\frac{C_3}{n}\sum_{j=1}^n \left(\frac{1}{\sqrt{2}}\right)^{j-1}\right) \\
  & \leq &  \mathrm{exp\,}\left(C_4/n\right)\\
  & = &  1+{\rm O}(1/n)\,,
\end{eqnarray*}
where $C_j$, $j=1,\dots,4,$ are some $n$-independent constants. In the second inequality we have used
$$
  \frac{j!}{n^j} \leq \frac{e j^j \sqrt{j}}{(\mathrm{e}n)^j}\leq \mathrm{e} \left(\frac{j}{2n}\right)^j\frac{\sqrt{j}}{(\mathrm{e}/2)^j}\leq \mathrm{e}\left(\frac{j}{2n}\right)^j
$$ 
following from $j!<\mathrm{e} j^{j+1/2} \mathrm{e}^{-j}$ and $\sqrt{j}\leq (\mathrm{e}/2)^j$.
The third inequality follows from $\frac{j}{(\sqrt{2})^j}\leq \frac{2}{\mathrm{e}\log{2}}$ and the fourth simply from $j\leq n$. The fifth inequality follows because the geometric series is summable.

Having $\det M = T^{n^2}\det Q \,(1+{\rm O}(1/n))$ we conclude that $\det H_n = \det \ph (1+{\rm O}(1/n))$ and hence we obtain the given relation for the logarithms.
\end{proof}

To extend this result to operators of the form
\[
 H_{n} = (-1)^{n} (\partial_x)^{2n}+\sum_{j=0}^m q_j(x)(\partial_x)^{j}
\]
as in Theorem~\ref{thm:perturb}, it is enough to note that the derivatives of the solutions to the Cauchy problem in this
case satisfy
\begin{eqnarray*}
  y^{(k)}_\ell(x) &=& \frac{x^{\ell-k}}{(\ell-k)!}\\
 & & \hspace*{2mm}+\sum_{j=k}^m \int_0^x \int_0^{s_{j-k}}\dots \int_0^{s_1} \frac{(-1)^{n+1}}{(2n-j-1)!}(s_1-s_0)^{2n-j-1} q_j(s_0) y_\ell^{(j)}(s_0)\,\mathrm{d}s_0\dots \mathrm{d}s_{j-k}\\
 & & \hspace*{4mm} +\sum_{j=0}^{\min{(k-1,m)}} \int_0^x \frac{(-1)^{n+1}}{(2n-k-1)!}(x-s_0)^{2n-k-1}q_j(s_0) y_\ell^{(j)}(s_0)\,\mathrm{d}s_0\,.
\end{eqnarray*}
The rest of the proof is similar to the proof for the operator with the potential $q$.

\section*{Acknowledgements}
P.F. was partially supported by the Funda{\c{c}}{\~a}o para a Ci{\^e}ncia e a Tecnologia, Portugal. 
J.L. was supported by the project ``International mobilities for research activities of the University of Hradec Kr\'alov\'e'' CZ.02.2.69/\allowbreak0.0/0.0/16{\_}027/0008487.
J.L. thanks the University of Lisbon for its hospitality during his stay in Lisbon. The authors would like to thank G. Nemes for suggesting an improvement in
the constant terms in the estimates in Lemma~\ref{lem:asymfac}.

\end{document}